\newcommand{\citep}{\cite}
\definecolor{darkgreen}{rgb}{0,0.5,0}
\definecolor{darkred}{rgb}{0.7,0,0}
\definecolor{purple}{rgb}{0.7,0,0.7}
\definecolor{teal}{rgb}{0.3,0.8,0.8}
\newcommand{\kibitz}[2]{\ifnum\Comments=1\textcolor{#1}{#2}\fi}
\newcommand{\pref}[1]{\prettyref{#1}}
\newcommand{\savehyperref}[2]{\texorpdfstring{\hyperref[#1]{#2}}{#2}}
\newtheorem*{theorem*}{Theorem}
\newtheorem*{lemma*}{Lemma}
\newtheorem{claim}{Claim}
\newcommand{\beq}{\begin{align}}
\newcommand{\eeq}{\end{align}}
\newcommand{\bin}{\textup{Bin}}
\newfont{\bbb}{msbm10 scaled 500}
\newfont{\bb}{msbm10 scaled 1100}
\newcommand{\EE}{\mbox{\bb E}}
\newcommand{\0}{{\texttt{0}}}
\newcommand{\1}{{\texttt{1}}}
\newcommand{\Mc}{{\cal M}}
\newcommand{\Tc}{{\cal T}}
\DeclareMathOperator{\Var}{var}
\DeclareMathOperator{\poly}{poly}
\newcommand{\avg}{{\mathbb E}}
\newcommand{\len}{\mathrm{len}}
\newtheorem{theorem}{Theorem}
\newtheorem{definition}{Definition}
\newtheorem{lemma}[theorem]{Lemma}
\newtheorem{proposition}[theorem]{Proposition}
\newtheorem{corollary}[theorem]{Corollary}
\newtheorem{remark}{\indent Remark}
\theoremstyle{definition}
\newcommand{\expec}[2][]{{\mathbb E}_{#1} \left [ #2 \right ] }
\newcommand{\prob}[2][]{\Pr_{#1} \left [ #2 \right ] }
\DeclareMathAlphabet{\mathpzc}{OT1}{pzc}{m}{it}
\DeclareMathOperator*{\argmax}{arg\,max}
\DeclareMathOperator*{\argmin}{arg\,min}
\begin{document}
\title{Trace Reconstruction: Generalized and Parameterized
}
\author{Akshay Krishnamurthy \and Arya Mazumdar \and Andrew McGregor \and Soumyabrata Pal
}

\maketitle
{\renewcommand{\thefootnote}{}\footnotetext{
Akshay Krishnamurthy is with Microsoft Research, New York. Arya Mazumdar is with University of California, San Diego. Andrew McGregor and Soumyabrata Pall are with College of Information and Computer Sciences, University of Massachusetts, Amherst. Emails: \texttt{\{akshay,arya,mcgregor,spal\}@cs.umass.edu}.  This work was supported in part by the National Science Foundation under  CCF1642658, 1637536, 1763618, 1934846, 1909046  and 1908849. Part of this work was presented in the European Symposium of Algorithms, 2019.
}
\renewcommand{\thefootnote}{\arabic{footnote}}
\setcounter{footnote}{0}

\begin{abstract}
In the beautifully simple-to-state problem of trace reconstruction, the goal is to
reconstruct an unknown  binary string $x$ given random ``traces'' of $x$ where each trace is generated by deleting each coordinate of $x$ independently with probability $p<1$. The problem is well studied both when the unknown string is arbitrary and when it is chosen uniformly at random. 
For both settings, there is still an exponential gap between upper and lower sample complexity bounds and our understanding of the problem is still surprisingly limited. In this paper, we consider natural parameterizations and generalizations of this problem in an effort to attain a deeper and more comprehensive understanding. Perhaps our most surprising results are:
\begin{enumerate}
\item We prove that $\exp(O(n^{1/4} \sqrt{\log n}))$ traces suffice for reconstructing arbitrary matrices. In the matrix version of the problem, each row and column of an unknown $\sqrt{n}\times \sqrt{n}$ matrix is deleted independently with probability $p$. Our results contrasts with the best known results for sequence reconstruction where the best known upper bound is $\exp(O(n^{1/3}))$. 
\item An optimal result for random matrix reconstruction: we show that $\Theta(\log n)$ traces are necessary and sufficient. This is in contrast to the problem for random sequences where there is a super-logarithmic lower bound and the best known upper bound is $\exp({O}(\log^{1/3} n))$.
\item We show that $\exp(O(k^{1/3}\log^{2/3} n))$ traces suffice to reconstruct $k$-sparse strings, providing an improvement over the best known sequence reconstruction results when $k = o(n/\log^2 n)$. 
\item We show that $\textrm{poly}(n)$ traces suffice if $x$ is $k$-sparse and we additionally have a ``separation" promise, specifically that the indices of  \1's in $x$ all differ by  $\Omega(k \log n)$. 
\end{enumerate}
\end{abstract}

\section{Introduction}
V. Levenshtein in \cite{levenshtein1997reconstruction} asked the following combinatorial question regarding reconstruction of  a sequence from its subsequences: how many subsequences of a particular length are necessary and sufficient to reconstruct the original sequence? He followed up with \cite{levenshtein2001efficient} and  \cite{levenshtein2001efficient2}  where upper and lower bounds were provided for different variations on the problem, along with efficient reconstruction algorithms. A similar question was studied in \cite{KRASIKOV1997344}: to find the minimum value of $t$ such that we can reconstruct any binary sequence provided we are given all subsequences of length $t$. In his paper \cite{levenshtein2001efficient}, Levenshtein also introduced the probabilistic version of the problem for discrete memoryless channels, stopping just short of introducing the trace reconstruction problem.

In the trace reconstruction problem, first proposed
by Batu et al.~\cite{BatuKKM04}, the goal is to reconstruct an unknown string
$x\in \{\0,\1\}^n$ given a set of random subsequences of $x$. Each
subsequence, or ``trace", is generated by passing $x$ through the
\emph{deletion channel} in which each entry of $x$ is deleted
independently with probability $p$. The locations of the deletions are
not known; if they were, the channel would be an \emph{erasure
  channel}. The central question is to find how many
traces are required to exactly reconstruct $x$ with high
probability.

This intriguing problem has attracted significant attention from a large
number of
researchers~\citep{KM05,ViswanathanS08,BatuKKM04,HolensteinMPW08,HoldenPP18,PeresZ17,HartungHP18,NazarovP17,DeOS17,McGregorPV14,2019arXiv190205101D,2019arXiv190309992C}.
In a recent breakthrough, De et al.~\cite{DeOS17} and Nazarov and Peres~\cite{NazarovP17}
independently showed that $\exp({O}((n/q)^{1/3}))$ traces suffice
where $q=1-p$. 
This bound is achieved by a \emph{mean-based} algorithm, which means
that the only information used is the fraction of traces that have a
\1 in each position.
While $\exp({O}((n/q)^{1/3}))$ is known to be optimal amongst
mean-based algorithms, the best algorithm-independent lower bound is
the much weaker $\Omega(n^{5/4}/\log n)$~\citep{holden2018lower}.

Many variants of the problem have also been considered including: (1)
larger alphabets and (2) an average case analysis where $x$ is drawn
uniformly from $\{\0,\1\}^n$.  Larger alphabets are only easier than
the binary case, since we can encode the alphabet in binary, e.g., by
mapping a single {\color{black}particular} character to $\1$ and the rest to $\0$. {\color{black} We can then solve the binary problem} and subsequently, repeat the process for all characters {\color{black}to reconstruct the entire string}.  In the average case analysis, the
state-of-the-art result is that $\exp({O}(\log^{1/3}(n)))$ traces
suffice\footnote{$p$ is assumed to be constant in that work.}, whereas
$\Omega(\log^{9/4} n / \sqrt{\log\log n})$ traces are
necessary~\citep{HartungHP18,HoldenPP18,holden2018lower}. Very recently, and concurrent with our work, other variants have been studied including a) where the bits of $x$ are associated with nodes of a tree whose topology determines the distribution of traces generated \cite{2019arXiv190205101D} and b) where $x$ is a codeword from a code with $o(n)$ redundancy \cite{2019arXiv190309992C}.


In order to develop a deeper understanding of this
intriguing problem, we consider fine-grained parameterization and
structured generalizations of trace reconstruction. We prove several
new results for these variations that shed new light on the
problem. Moreover, in studying these settings, we refine existing
tools and introduce new techniques that we believe may be helpful in
closing the gaps in the fully general problem.

\subsection{Our Results}
{\color{black}In all our results below, we have used the term \textit{with high probability} to imply that the statement holds with probability at least $1-o(1)$ where $o(1)$ is a term asymptotically going to 0 as the size of the input (typically the length of the string, $n$) grows.}

\subsubsection{Parametrizations}
We begin by considering parameterizations of the trace reconstruction
problem. Given the important role that sparsity plays in other
reconstruction problems (see, e.g., Gilbert and Indyk~\cite{GilbertI10}), we first
study the recovery of sparse strings. Here we prove the following
result.

\begin{theorem}
\label{thm:sparsity_intro}
Let $q\equiv 1-p$ be the retention
probability and assume that $q=\Omega(k^{-1/2} \log^{1/2} n)$. If $x\in \{0,1\}^n$ has at most $k$ non-zeros, $\exp(O((k/q)^{1/3}\log^{2/3} n))$ traces suffice to recover $x$
exactly, with high probability.
\end{theorem}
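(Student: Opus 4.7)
The plan is to follow the mean-based algorithm framework of De-O'Donnell-Servedio~\cite{DeOS17} and Nazarov-Peres~\cite{NazarovP17}, strengthened with a polynomial inequality that exploits sparsity. Recall that the mean-based algorithm estimates, from $T$ traces, the vector $m(x)\in[0,1]^n$ whose $j$-th coordinate is the probability that position $j$ of the trace equals $\1$, and then returns the $k$-sparse string whose empirical mean vector is closest in $\ell_\infty$ to $m(x)$. A standard Chernoff bound shows that $T$ traces suffice provided
$$\min_{x\neq y,\ \text{both }k\text{-sparse}}\;\|m(x)-m(y)\|_\infty \;\gtrsim\; \widetilde{O}(T^{-1/2}).$$
By the usual generating-function identity, the left-hand side is lower bounded by $\sup_{w\in\Gamma_q}|P_{x-y}(w)|$, where $P_z(w)=\sum_i z_i w^i$ and $\Gamma_q$ is a suitable arc of angular width $\Theta(q)$ on the unit circle.

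The heart of the proof is then the following sparse polynomial inequality: for every polynomial $P(w)=\sum_{i\in S}c_i w^i$ of degree at most $n$ with $|S|\le 2k$ and $c_i\in\{-1,+1\}$,
$$\sup_{w\in\Gamma_q}|P(w)|\;\geq\;\exp\bigl(-O((k/q)^{1/3}\log^{2/3} n)\bigr).$$
I would prove this by adapting the Chebyshev extremal argument of Nazarov-Peres. Suppose $|P|\le\varepsilon$ throughout $\Gamma_q$. Multiplying by a rescaled Chebyshev polynomial of degree $L$, which is bounded by $1$ on $\Gamma_q$ but grows like $e^{\Theta(L)}$ on a wider arc, forces $P$ to remain below $\varepsilon\,e^{\Theta(L)}$ over that wider arc. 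The key departure from the dense case is the next step: rather than invoking the generic fact that a nonzero $\{-1,0,+1\}$-polynomial of degree $n$ must exceed $\exp(-\Theta(n^{1/3}))$ somewhere on the unit circle, we use that $P$ has only $|S|\le 2k$ nonzero coefficients, and hence behaves on a small arc like a polynomial of effective degree $O(k\log n)$ via a Tur\'an/Vandermonde-type compression argument. Balancing the Chebyshev blow-up $e^{\Theta(L)}$ against the effective-degree bound $\exp(-\Omega(L^3/(k\log n)))$ yields the optimal choice $L=\Theta((k\log^2 n/q)^{1/3})$ and the claimed exponent.

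The main obstacle is making rigorous the effective-degree $O(k\log n)$ heuristic for sparse $\{-1,0,+1\}$-polynomials on the relevant arc; one expects this to follow from a Tur\'an-type inequality showing that a sparse polynomial small on a sufficiently long arc must be small on a longer one whose length depends on sparsity rather than degree. The assumption $q=\Omega(\sqrt{\log n/k})$ plays two roles: it ensures $\Gamma_q$ is wide enough for the Chebyshev localization to produce a manageable blow-up, and it guarantees $L\ll n^{1/3}$, so that the sparse bound genuinely improves on the general $\exp(O((n/q)^{1/3}))$ result. Once the polynomial inequality is in hand, the theorem follows by plugging it into the standard mean-based pipeline and returning the unique $k$-sparse string consistent with the empirical mean vector.
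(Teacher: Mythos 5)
There is a genuine gap, and it sits exactly where you flag it—but it is worse than a missing technical lemma. First, the sparse polynomial inequality you need, $\sup_{w\in\Gamma_q}|P(w)|\geq \exp(-O((k/q)^{1/3}\log^{2/3}n))$ for $\pm1$-coefficient polynomials with at most $2k$ terms and exponents up to $n$, is the entire content of the theorem and is left as a hope. The route you sketch does not plausibly deliver it: Tur\'an-type results for $s$-term exponential polynomials (Nazarov's sharp form of Tur\'an's lemma) compare an arc $I$ to the full circle with a loss of order $(C/|I|)^{s-1}$, i.e.\ they yield bounds of the shape $\exp(-O(k\log(1/q)))$ or $\exp(-O(k\log n))$—the ``trivial'' sparse exponent already attainable by waiting for traces with no deleted \1s—and nothing in the compression heuristic explains where a $k^{1/3}$ saving would come from. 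The paper gets the $k^{1/3}$ not by proving anything about the degree-$n$ polynomial $P_{x-y}$, but by changing the polynomial altogether: it reduces sparse reconstruction (via the austere channel, in which all but one uniformly random \0 is deleted, recoverable from the $(p_0,p_1)$ channel at a $q_0^{-1}$ cost) to learning a mixture of binomials $\bin(r_i,q_1)$ with parameters $r_i\le k$, so the generating polynomial has degree $k$ with $\poly(n)$-precision coefficients, and then applies a bounded-precision extension of the Borwein--Erd\'elyi bound \cite{BorweinE97} to that degree-$k$ polynomial.

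Second, even if your sparse inequality were true, the pipeline as you set it up cannot produce the claimed bound, because your first display drops the weight factor. The generating-function identity gives $\sum_j(\EE\tilde x_j-\EE\tilde y_j)w^j=q\,P_{x-y}(z)$ with $z=qw+p$; to place $z$ on the arc where any such polynomial lower bound holds, $w$ must have modulus up to $\exp(\Theta(1/(qL)^2))$, and for a $k$-sparse string the positions $j$ at which the trace can contain a \1 extend to $\Theta(qn)$ (a \1 at original position near $n$ lands near trace position $qn$). The conversion from $\sup_{\Gamma}|P_{x-y}|$ to $\|m(x)-m(y)\|_\infty$ therefore costs a factor $|w|^{\Theta(qn)}=\exp(\Theta(n/(qL^2)))$, and balancing this against any arc lower bound forces $L=\Theta((n/(q\log n))^{1/3})$, reproducing the dense $\exp(O((n/q)^{1/3}\log^{2/3}n))$ exponent regardless of how strong the sparse polynomial estimate is. In other words, sparsity of the coefficient support of $P_{x-y}$ does not remove $n$ from the exponent; what must be made sparse is the index set of the observable itself. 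That is precisely what the paper's reduction accomplishes—the statistic ``number of leading \1s'' under the austere channel is supported on $\{0,\dots,k\}$, so both the polynomial degree and the Chernoff truncation point scale with $k$—and any repair of your argument will need an analogous change of observable rather than a stronger polynomial inequality.
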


As some points of comparison, note that there is a trivial
$\exp(O(k/q+\log n))$ upper bound, which our result improves on with a
polynomially better dependence on $k/q$ in the exponent. {\color{black}The trivial bound is obtained by getting enough samples so that it is possible to obtain ${\poly}(n)$ samples where none of the \1s are deleted.} The best known result for the general case is
$\exp(O((n/q)^{1/3}))$~\citep{NazarovP17,DeOS17} and our result
is a strict improvement when $k=o(n/\log^{2} n)$. Note that since we have no restrictions on $k$ in the statement, improving upon $\exp(O((k/q)^{1/3}))$ would imply an
improved bound in the general setting.

Somewhat surprisingly, our actual result is considerably stronger
(See~\pref{corr:del} for a precise statement). We also obtain
$\exp(O((k/q)^{1/3}\log^{2/3} n))$ sample complexity in an asymmetric
deletion channel, where each \0 is deleted with probability
{\color{black}extremely} close to $1$, but each \1 is deleted with 
probability $p=1-q$. With such a channel, all but a vanishingly small
fraction of the traces contain only \1s, yet we are still able to
exactly identify the location of every \0. Since we can accommodate
$k=\Theta(n)$ this result also applies to the general case with an
asymmetric channel, yielding improvements over De et al.~\cite{DeOS17} and
Nazarov and Peres~\cite{NazarovP17}.

We elaborate more on our techniques in the next section, but the
result is obtained by establishing a connection between trace
reconstruction and learning binomial mixtures. There is a large body of work devoted to learning mixtures \cite{dasgupta1999learning,achlioptas2005spectral,kalai2010efficiently,belkin2010polynomial,arora2001learning,moitra2010settling,feldman2008learning,chan2013learning,hopkins2018mixture,hardt2015tight} where it is common to assume that the mixture components are
well-separated. In our context, separation corresponds to a promise that each pair of \1s in the original string is separated by a \0-run of a certain length.
Our second result concerns strings with a separation promise.
\begin{theorem}
\label{thm:gaps_intro}
If $x$ has at most $k$ \1s and each \1 is separated by \0-run of length $\Omega(k\log n)$, then, for any constant deletion probability $p$, ${\poly}(n)$ traces suffice to recover $x$ with high probability.
\end{theorem}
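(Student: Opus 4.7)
The plan is to reduce trace reconstruction to a \emph{separated binomial mixture} learning problem and exploit the large gaps to resolve each component with polynomial samples. Let $\ell \le k$ denote the true number of \1s at positions $1 \le i_1 < i_2 < \cdots < i_\ell \le n$, with $i_{r+1} - i_r \ge C k \log n$ for a large constant $C$. We first estimate the mean profile $\mu_j := \Pr[\text{position } j \text{ of a trace is }\1]$; by linearity,
\[
\mu_j \;=\; q\sum_{r=1}^{\ell} \binom{i_r-1}{j-1}q^{j-1}(1-q)^{i_r-j},
\]
a superposition of $\ell$ shifted binomial-PMF ``bumps'', the $r$-th centered near $q i_r$ with effective width $O(\sqrt{n\log n})$. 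From $N = \poly(n)$ traces, averaging empirical indicators gives $\hat\mu_j$ to precision $\tilde O(N^{-1/2})$ uniformly.

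The core procedure is iterative \emph{peeling}: locate the leftmost significant bump in $\hat\mu$, fit a single shifted binomial there to recover $i_1$, subtract off its parametric contribution, and repeat to recover $i_2,\ldots,i_\ell$ in order. A Chernoff bound shows that the mass of the $r$-th bump outside an $O(\sqrt{n\log n})$-wide window around $q i_r$ decays like $\exp(-\Omega(C^2 k^2 \log^2 n / n))$, which is negligible compared to the sampling noise once $C$ is large enough and $k$ is not too small. In the complementary regime $k = O(\log n)$ the bumps can overlap, but then all $\ell \le k$ \1s survive a random trace with probability $q^\ell \ge q^k = n^{-O(1)}$, so $\poly(n)$ traces yield $\poly(n)$ \emph{complete} traces (identified by their \1-count, with $\ell$ itself estimated as the empirical mean count divided by $q$); in each complete trace the $t$-th trace \1 is exactly the $t$-th original \1 at position $\bin(i_t-1,q)+1$, and averaging recovers $i_t$ to subinteger accuracy.

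The hardest part will be seamlessly handling the intermediate range of $k$, where neither the peeling argument nor the complete-trace fallback is immediately tight. There we rely on a stability/identifiability analysis of the nonlinear map $(i_1,\dots,i_\ell)\mapsto\mu$: the separation $\Omega(k\log n)$ is tuned precisely so that the Jacobian of this map has a polynomially bounded condition number, and therefore $\tilde O(N^{-1/2})$ error in $\hat\mu$ translates to $o(1)$ position error, yielding exact integer recovery after rounding. Pinning down this condition-number bound via the explicit binomial-PMF structure, and balancing the inter-bump interference against sampling noise through the choice of the constant $C$, is the central technical ingredient of the proof.
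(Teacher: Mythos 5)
Your proposal is mean-based: it tries to recover the positions $i_1<\cdots<i_\ell$ from the single-coordinate averages $\hat\mu_j$ alone. The paper's proof is fundamentally different --- it works with the joint structure of individual traces, recursively clustering the received \1s (graph on received \1s with shrinking thresholds $\tau_d$, a length filter on subtraces to keep the recursion clean, and a final de-biasing step), and the gap bound $\Omega(k\log n)$ emerges from the recursion $\tau_{d}\approx\sqrt{k\tau_{d-1}\log n}$ over $\log\log n$ levels. That difference matters, because your route has a genuine gap exactly where you say the ``central technical ingredient'' lies. The $r$-th bump in $\mu$ is a binomial PMF with width $\Theta(\sqrt{i_r})=\Theta(\sqrt n)$, while consecutive bump centers are only $\Theta(qk\log n)$ apart. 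Your own Chernoff estimate shows the problem: the mass of a bump escaping a window of width $O(k\log n)$ is $\exp(-\Omega(k^2\log^2 n/n))$, which is $1-o(1)$ whenever $k=o(\sqrt n/\log n)$. So in the regime $\omega(\log n)\le k\le o(\sqrt n/\log n)$ (where the complete-trace fallback also fails, since $q^k$ is super-polynomially small), polynomially many spikes can sit inside a single bump width, the bumps overlap almost completely, and the leftmost-bump peeling step has no way to isolate $i_1$.

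The claim meant to rescue this regime --- that the map $(i_1,\dots,i_\ell)\mapsto\mu$ has polynomially bounded condition number because of the $\Omega(k\log n)$ separation --- is asserted, not proved, and there is strong reason to doubt it. Recovering spike locations from a binomially smoothed profile known only to additive accuracy $1/\poly(n)$ is a super-resolution problem; when many spikes lie within one kernel width (here up to $\Theta(\sqrt n/(k\log n))$ of them), the conditioning of such problems typically degrades exponentially in that number, and the known $\exp(\Omega(n^{1/3}))$ lower bounds for mean-based trace reconstruction are a warning that single-position marginals with polynomial precision may simply not carry enough information. Without an actual proof of the conditioning bound, the argument does not go through in the intermediate range of $k$, which is the heart of the theorem; the endpoint regimes you do handle ($k=O(\log n)$ via complete traces, and $k\gtrsim\sqrt n/\log n$, where the separation exceeds $\sqrt{n\log n}$ and even the paper's warm-up argument suffices) are the easy ones. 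To close the gap you would need either to prove the Jacobian/identifiability estimate for overlapping binomial bumps at inverse-polynomial precision, or to abandon the purely mean-based viewpoint and use within-trace information, which is precisely what the paper's recursive clustering, length filter, and de-biasing machinery is built to do.
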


Note that reconstruction with $\textrm{poly}(n)$ traces is straightforward if every \1 is separated by a 
\0-run of length $\Omega(\sqrt{n \log n})$; the basic idea is that we can identify which \1s in a collection of traces correspond to the same \1 in the original sequence and then we can use the indices of these  \1s in their respective traces to infer the index of the \1 in the original string. However, reducing to $\Omega(k\log n)$ separation is rather involved and is perhaps the most technically challenging result in this paper.

Here as well, we actually obtain a slightly stronger result. Instead
of parameterizing by the sparsity and the separation, we instead
parameterize by the number of runs, and the run lengths, where a run
is a contiguous sequence of the same character. We require that each
\0-run has length $\Omega(r\log n)$, where $r$ is the total number
of runs.
 Note that this parameterization yields a stronger result since $r$ is at most $2k+1$ if the
 string is $k$ sparse, but it can be much smaller, for example if the
 \1-runs are very long. On the other hand, the best lower bound, which
is $\Omega(n^{5/4}/\log n)$~\citep{holden2018lower}, considers strings with
$\Omega(n)$ runs and run length $O(1)$.
 
Using the general approach used to prove Theorem \ref{thm:gaps_intro}, we can also prove an average case reconstruction result for sparse strings:  $\poly(n)$ traces suffice if each $x_i\sim \textrm{Ber}(\eta)$ where $\eta\leq c/\sqrt{n\log n}$ for some sufficiently small $c$. As mentioned, above if $\eta=1/2$, it was already known that a sub-polynomial number of traces sufficed for reconstruction. However, for random strings sparsity is not necessarily helpful. In fact, if $\eta=1/n$ it is relatively straightforward to argue that $\poly(n)$ traces are necessary since  with constant probability  $x$ has the form 
\[x=\underbrace{00\ldots 00}_{\geq n/4} 1\underbrace{00\ldots 00}_{\geq n/4}\] 
and identifying the position of the 1 requires  $\Omega(n)$ traces.

As our last parametrization, we consider a sparse testing problem. We
specifically consider testing whether the true string is $x$ or $y$,
with the promise that the Hamming distance between $x$ and $y$, $\Delta(x,y)$, is at most $2k$. This question is naturally related to sparse
reconstruction, since the difference sequence $x-y\in \{-1,0,1\}^n$ is $2k$ sparse, although of
course neither string may be sparse on its own. Here we obtain the
following result.

\begin{theorem}
\label{thm:hamming_intro}
For any pair $x,y \in \{\0,\1\}^n$ with $\Delta(x,y) \leq 2k$,
$\exp(O(k \log n))$ traces {\color{black}from the deletion channel with $p \le 1-k/n$} suffice to distinguish between $x$ and
$y$ with high probability.
\end{theorem}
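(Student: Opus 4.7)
The plan is a mean-based distinguishing test: from the $N$ available traces we form, for each position $j$, the empirical frequency $\widehat p_j$ of traces whose $j$-th surviving bit is $\1$, and output whichever of $x$, $y$ makes the predicted mean vector $\bigl(\Pr_{\cdot}[T_j=\1]\bigr)_j$ closer to $(\widehat p_j)_j$. Writing $d_j := \Pr_x[T_j=\1] - \Pr_y[T_j=\1]$, a standard Hoeffding bound shows this succeeds with high probability whenever $N\gtrsim \log(n)\cdot \max_j d_j^{-2}$, so it suffices to establish the lower bound
$$\max_j |d_j| \;\geq\; \exp(-O(k\log n)).$$

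Following Nazarov--Peres and De--O'Donnell--Servedio, the standard counting identity (obtained by tracking which source bit survives as the $j$-th trace bit) yields
$$\sum_{j\geq 1} d_j\,w^{j-1}\;=\;q\sum_{i=1}^{n}(x_i-y_i)(p+qw)^{i-1}\;=\;q\,\tilde D(p+qw),$$
where $\tilde D(z):=\sum_{i=1}^n (x_i-y_i)z^{i-1}$ is a nonzero polynomial of degree at most $n-1$ with at most $2k$ nonzero coefficients, each in $\{-1,0,1\}$. Since $|\sum_j d_j\,w^{j-1}|\leq n\max_j|d_j|$ whenever $|w|\leq 1$, it suffices to exhibit $w_0\in[-1,1]$ at which $|\tilde D(p+qw_0)|\geq \exp(-O(k\log n))$. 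The hypothesis $p\leq 1-k/n$ is precisely what forces $q\geq k/n\geq 1/n$, so as $w_0$ varies over $[-1,1]$ the point $z_0 := p+qw_0$ sweeps the real interval $[1-2q,\,1]$ of length at least $2/n$.

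The core step therefore reduces to a sparse-polynomial extremal estimate: for every nonzero polynomial of degree at most $n-1$ with $m\leq 2k$ nonzero coefficients in $\{-1,0,1\}$,
$$\max_{z\in[1-2q,\,1]}|\tilde D(z)|\;\geq\;\exp(-O(k\log n)).$$
My plan is to factor $\tilde D(z)=(z-1)^r Q(z)$, where the multiplicity satisfies $r\leq m-1\leq 2k-1$---this follows from a Vandermonde nondegeneracy argument applied to the linear system $\{\tilde D^{(s)}(1)=0\}_{s<m}$ in the $m$ unknown coefficient signs, since the coefficient matrix with entries $a_j(a_j-1)\cdots(a_j-s+1)$ at distinct $a_j$ is nonsingular---and to use that $|Q(1)|=|\tilde D^{(r)}(1)|/r!\geq 1/(2k)!$ because $\tilde D^{(r)}(1)$ is a nonzero integer. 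Evaluating at $z=1-q$ gives $|\tilde D(1-q)|=q^r\,|Q(1-q)|$, and $q^r\geq (1/n)^{2k-1}=\exp(-O(k\log n))$, so only the estimate $|Q(1-q)|\geq \Omega(|Q(1)|)$ remains.

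The main obstacle is exactly this last transfer step: a naive Taylor expansion of $\tilde D$ about $z=1$ fails, because the crude bound $|\tilde D^{(s)}(1)|\leq m\,n^s$ lets the higher-order terms swamp the leading term unless one evaluates at a $z$ exponentially close to $1$---where the leading term itself is tiny. To overcome this, I would evaluate $\tilde D$ at $m$ well-separated real points $z_1,\ldots,z_m\in[1-q,1]$ and lower-bound the least singular value of the resulting generalized Vandermonde matrix $(z_i^{a_j})_{i,j}$ using Schur-polynomial and total-positivity estimates; since $\tilde D(z_i)=\sum_j\epsilon_j z_i^{a_j}$ with $\|\epsilon\|_\infty=1$, any such bound $\sigma_{\min}\geq \exp(-O(k\log n))$ produces the required sparse-polynomial lemma. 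Combining this lemma with the generating-function identity and the Hoeffding argument above yields the claimed $\exp(O(k\log n))$ sample complexity.
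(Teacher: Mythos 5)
Your overall architecture (mean-based test, generating-function identity, reduction to a lower bound for a $2k$-sparse $\{-1,0,1\}$-coefficient polynomial on the real interval $[1-2q,1]$) is coherent, and the reduction steps you do carry out (Hoeffding test, the identity $\sum_j d_j w^{j-1}=q\,\tilde D(p+qw)$, factoring out $z^{a_{\min}}$-type normalizations, the multiplicity bound $r\le 2k-1$ and $|\tilde D^{(r)}(1)|\ge 1$) are fine. But the proof has a genuine gap exactly where you flag ``the main obstacle'': the sparse-polynomial estimate $\max_{z\in[1-2q,1]}|\tilde D(z)|\ge \exp(-O(k\log n))$ is never proved. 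You defer it to a least-singular-value bound $\sigma_{\min}\ge \exp(-O(k\log n))$ for the generalized Vandermonde matrix $(z_i^{a_j})$ at $m\le 2k$ well-separated nodes in an interval of length $\Theta(k/n)$, to be obtained ``using Schur-polynomial and total-positivity estimates,'' but no such bound is derived, and the natural output of that machinery falls short: the determinant factors as $s_\lambda(z)\prod_{i<j}(z_j-z_i)$, and with $2k$ nodes spaced $\Theta(1/n)$ apart the Vandermonde factor is already $n^{-\Theta(k^2)}$, so the generic route $\sigma_{\min}\ge |\det|/\|M\|^{m-1}$ only yields $\exp(-O(k^2\log n))$ and hence sample complexity $\exp(O(k^2\log n))$, not the claimed $\exp(O(k\log n))$. (There is also a quantifier issue: the nodes must be fixed before seeing the exponent set $\{a_j\}$, whereas near-extremal points for the continuum problem depend on the exponents.) The target lemma itself is true and can be proved, e.g., by writing $\tilde D(1-u)$ as an exponential sum with at most $2k$ terms, evaluating at $u\approx\log(4k)$ to get a value $\ge 1/2$ after normalizing the lowest exponent away, and transferring back to the subinterval $[0,2q]$ via the Tur\'an--Nazarov (Remez-type) inequality for exponential sums, which loses only $(\log k\cdot n/k)^{2k-1}=\exp(O(k\log n))$; without such an ingredient your argument does not close.

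It is also worth noting that the paper proves this theorem by an entirely different, non-analytic route: it invokes the Krasikov--Roditty result that strings at Hamming distance less than $2k$ have distinct $k$-decks, and shows the $k$-deck can be recovered exactly from $\exp(O(k\log n))$ traces by sampling random length-$k$ subsequences (the condition $p\le 1-k/n$ is used there only to ensure traces have length at least $k$ with constant probability, whereas you use it to make the interval $[1-2q,1]$ long enough). If you repair your key lemma as indicated, your mean-based argument would constitute a genuinely different proof; as written, however, it is incomplete at its central step.
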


\subsubsection{Generalizations}
Turning to generalizations, we consider a natural multivariate version of
the trace reconstruction problem, which we call \emph{matrix
  reconstruction}. Here we receive matrix traces of an unknown binary
matrix $X \in \{0,1\}^{\sqrt{n}\times\sqrt{n}}$, where each matrix
trace is obtained by deleting each row and each column with
probability $p$, independently. Here the deletion channel is much more
structured, as there are only $2\sqrt{n}$ random bits, rather than
$n$ in the sequence case. Our results show that we can exploit this
structure to obtain improved sample complexity guarantees.

In the worst case, we prove the following theorem.
\begin{theorem}
\label{thm:matrix_intro}
For the matrix deletion channel with deletion probability $p$,
\[\exp(O(n^{1/4}\sqrt{p\log n}/q))\] traces suffice to recover an
arbitrary matrix $X \in \{0,1\}^{\sqrt{n}\times\sqrt{n}}$ {\color{black}with high probability}.
\end{theorem}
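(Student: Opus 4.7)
The plan is to adapt the mean-based algorithm of De et al.\ and Nazarov--Peres to the bivariate setting induced by simultaneous row- and column-deletions, and then prove a two-variable analog of the Littlewood polynomial estimate that drives those results.

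For each pair $(j,k)$, I would consider the empirical probability that a trace retains at least $j$ rows and $k$ columns and has a $\1$ as its $(j,k)$-th entry. A direct calculation gives
\begin{equation*}
\sum_{i\geq j,\,i'\geq k} X_{i,i'}\binom{i-1}{j-1}\binom{i'-1}{k-1}q^{j+k}p^{(i-j)+(i'-k)}
\end{equation*}
for this expectation, and forming the bivariate generating function $\sum_{j,k} z^j w^k(\cdot)$ collapses the sum to
\begin{equation*}
q^2 z w\cdot \Phi_X(qz+p,\,qw+p),\qquad \Phi_X(u,v):=\sum_{i,i'} X_{i,i'}u^{i-1}v^{i'-1}.
\end{equation*}
Thus the mean-based algorithm estimates $\Phi_X$ at every point of the form $(qz+p,qw+p)$ for $(z,w)$ on the unit torus $\mathbb{T}^2$. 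Reducing reconstruction to distinguishing two candidates $X\neq X'$, it suffices to lower bound $\sup_{(z,w)\in\mathbb{T}^2}|\Phi_Q(qz+p,qw+p)|$ for the nonzero bivariate Littlewood polynomial $Q=X-X'$ of degree at most $\sqrt{n}$ in each variable; a gap of $\epsilon$ leads to $O(\epsilon^{-2})$ sample complexity by standard concentration, with a union bound over candidates contributing a factor $n$ that is absorbed into the exponent.

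The heart of the proof is therefore the bivariate extremal estimate
\begin{equation*}
\sup_{(z,w)\in\mathbb{T}^2}\bigl|\Phi_Q(qz+p,qw+p)\bigr|\geq \exp\bigl(-O(n^{1/4}\sqrt{p\log n}/q)\bigr).
\end{equation*}
I would prove this by iterating the univariate Borwein--Erd\'elyi--K\'os / Nazarov--Peres bound once in each direction. Pick any index $i^*$ for which the row polynomial $R_{i^*}(v):=\sum_{i'} Q_{i^*,i'} v^{i'-1}$ is not identically zero; the univariate estimate applied to $R_{i^*}$ on an arc of length $L_2$ near $v=1$ yields some $w_0$ with $|R_{i^*}(qw_0+p)|$ quantitatively bounded below. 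Freezing $w=w_0$, the univariate polynomial $u\mapsto \Phi_Q(u,qw_0+p)$ has degree $\leq \sqrt{n}$, complex coefficients of magnitude at most $\sqrt{n}$, and a distinguished coefficient (the $i^*$-th) of the controlled size from the first step; a Tur\'an-type estimate applied on an arc of length $L_1$ near $u=1$ then produces the witnessing point $z_0$.

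The main obstacle is balancing the two arc lengths $L_1,L_2$: short arcs force the polynomial to be exponentially small near $(1,1)$, while long arcs allow it to dip further away. Optimizing this tradeoff across the two dimensions is what produces the $n^{1/4}$ rate (compared with the $N^{1/3}$ rate that a 1D argument gives for $N=n$), and the logarithmic overhead incurred when passing from a genuine Littlewood polynomial to a bounded-coefficient complex polynomial after the first substitution is responsible for the $\sqrt{\log n}$ factor. Once the extremal estimate is in place, the theorem follows from a routine Chernoff/Hoeffding argument converting amplitude gaps into sample complexity guarantees.
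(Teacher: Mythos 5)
Your overall skeleton --- the bivariate mean-based generating function, reduction to a two-variable Littlewood-type extremal bound for the difference polynomial $Q=X-X'$, then Chernoff estimation plus a union bound over candidate pairs --- is the same as the paper's. The genuine gap is in the second step of your proof of the extremal estimate. After freezing the second variable at the point produced for the row polynomial $R_{i^*}$, you invoke ``a Tur\'an-type estimate'' for a polynomial whose coefficients have modulus at most $\sqrt{n}$ and which has one distinguished coefficient (the $i^*$-th) bounded below. No such subarc estimate exists when the distinguished coefficient is an arbitrary one: the polynomial $g(u)=(1-u)^{d}\binom{d}{\lfloor d/2\rfloor}^{-1}$ has all coefficients of modulus at most $1$ and middle coefficient of modulus exactly $1$, yet on the arc $\{e^{i\theta}:|\theta|\le\pi/L\}$ one has $|g|\le(\pi/L)^{d}\binom{d}{\lfloor d/2\rfloor}^{-1}$, which is $\exp(-\Omega(d))$ for any fixed $L\ge 2$ --- exponentially small in the degree, so a bound of the form $\exp(-O(L\log(\cdot)))$ depending only on a ``distinguished'' middle coefficient is false. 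The Borwein--Erd\'elyi subarc bounds require the trailing nonzero coefficient (the constant term after factoring out powers of $u$) to be large. The repair is to take $i^*$ to be the \emph{minimal} index with $R_{i^*}\not\equiv 0$: then every lower-order coefficient of $u\mapsto\Phi_Q(u,v_0)$ vanishes identically, and after stripping $u^{i^*-1}$ the controlled coefficient is the constant term, so the bounded-coefficient, $|a_0|\ge\gamma$ version of the subarc bound applies. This is exactly the factorization device in the paper's \pref{lem:bivariate_littlewood}, which implements it by multiplying rotated copies of the polynomial and iterating the maximum modulus principle after removing common powers of $z_2$ and then $z_1$.

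A second, smaller inconsistency: you state the extremal estimate as a supremum over $(z,w)\in\mathbb{T}^2$, i.e., with $\Phi_Q$ evaluated on the interior circle $|u-p|=q$, but the two univariate applications produce witness points $u_0,v_0$ on unit-circle arcs near $1$, which correspond to $z_0=(u_0-p)/q$ and $w_0=(v_0-p)/q$ of modulus slightly \emph{greater} than $1$. To transfer the polynomial lower bound back to the trace statistics you need the Nazarov--Peres estimate $|(e^{i\theta}-p)/q|\le\exp(Cp/(Lq)^2)$ for $|\theta|\le\pi/L$, and you pay the amplification $\exp(Cp\sqrt{n}/(Lq)^2)$, which yields a lower bound on $\sum_{j,k}\bigl|\mathbb{E}[\tilde{X}_{j,k}-\tilde{Y}_{j,k}]\bigr|$ (this is within a $\poly(n)$ factor of your torus supremum); it is precisely this amplification term, balanced against the subarc loss, that produces the exponent $n^{1/4}\sqrt{p\log n}/q$ after optimizing $L$. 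Your verbal description of the arc-length tradeoff gestures at this, but the quantitative step is absent, and the $\mathbb{T}^2$ formulation as stated cannot be certified at the points your argument actually produces. With these two repairs, your route becomes, in substance, the paper's proof.
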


While no existing results are directly comparable, it is possible to
obtain $\exp(O(n^{1/3}\log n))$ sample complexity via a combinatorial
result due to K\'os et al.~\cite{KosLS09}. This agrees with the results from the
sequence case, but is obtained using very different techniques.
Additionally, our proof is constructive, and the
algorithm is actually mean-based, so the only information it requires
are estimates of the probabilities that each received entry is \1. As
we mentioned, for the sequence case, both Nazarov and Peres~\cite{NazarovP17}
and De et al.~\cite{DeOS17} prove a $\exp(\Omega(n^{1/3}))$ lower bound for
mean-based algorithms. Thus, our result provides a strict separation
between matrix and sequence reconstruction, at least from the
perspective of mean-based approaches. 

Lastly, we consider the random matrix case, where every entry of $X$
is drawn iid from $\textrm{Ber}(1/2)$. Here we show that $O(\log n)$
traces are sufficient.
\begin{theorem}
\label{thm:random_matrix_intro}
For any constant deletion probability $p<1$, $O(\log n)$ traces
suffice to reconstruct a random $X\in
\{0,1\}^{\sqrt{n}\times\sqrt{n}}$ with high probability over the
randomness in $X$ and the channel.
\end{theorem}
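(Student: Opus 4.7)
Let $N = \sqrt n$, $q = 1-p$ (assumed constant), and $T = C \log n$ for a suitably large constant $C$. The $\Omega(\log n)$ lower bound is information-theoretic: each entry $(i,j)$ is \emph{revealed} in a trace (both row $i$ and column $j$ retained) with probability $q^2$, so the probability it is never revealed across $T$ traces is $(1-q^2)^T$. For $T \le (1-\varepsilon)\log n / \log\frac{1}{1-q^2}$ the expected count of never-revealed entries is at least $n^\varepsilon$, and a second-moment argument shows the count concentrates. Since $X$ has i.i.d.\ $\mathrm{Ber}(1/2)$ entries, an unrevealed entry is uniform conditionally on the traces, so exact recovery fails with high probability.

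\textbf{Upper bound.} For $T = C \log n$ with $C$ sufficiently large, I plan to combine three high-probability events. \emph{Coverage:} every entry is revealed in some trace, since $\Pr[(i,j)\text{ never revealed}]=(1-q^2)^T=n^{-\omega(1)}$, followed by a union bound over $n$ entries. \emph{Randomness of $X$:} by Chernoff and union bounds, (a) any two distinct rows (resp.\ columns) of $X$ disagree in at least $N/3$ positions, and more strongly (b) for every $S\subseteq[N]$ with $|S|\ge\kappa\log n$, the projections of the $N$ rows (resp.\ columns) of $X$ to $S$ are pairwise distinct. \emph{Channel concentration:} every trace has $|R_t|,|C_t|\in[qN/2,3qN/2]$, so in particular $|R_t|,|C_t|\ge \kappa\log n$ for $n$ large. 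Conditioned on these events, the algorithm outputs the unique matrix $X^*\in\{0,1\}^{N\times N}$ such that every trace $Y^{(t)}$ equals $X^*[R^*, C^*]$ for some $(R^*, C^*)$ of the right sizes. The true $X$ has this property trivially, and by coverage the revealed entries then determine $X^*$ on all of $[N]^2$.

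\textbf{Uniqueness and main obstacle.} For uniqueness, suppose some $X^*\ne X$ also fits every trace. By coverage, pick an entry $(i_0, j_0)$ where they disagree and a trace $t_0$ revealing it: the fit of $Y^{(t_0)}$ inside $X^*$ must use some $(R^*, C^*)\ne(R_{t_0},C_{t_0})$. Property (b), together with a union bound over candidate $X^*$ and over size-matched pairs of alternative retention sets, shows that such a spurious fit is exponentially unlikely in $n$: each fitted trace constrains $\approx q^2 N^2 = q^2 n$ random bits, so the per-trace penalty dwarfs both the $\le 2^{N^2}$ candidate matrices and the $\binom{N}{qN}^2\binom{N}{qN}^2$ possible misalignments. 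The main obstacle is making this union bound rigorous: a candidate $X^*$ can be chosen to differ from $X$ in only a few entries, so ``$X^*$ matches all traces'' depends delicately on which coordinates are unrevealed or redundantly revealed. I expect the cleanest route to condition on structural event (b) first and then exploit the resulting injectivity of $(R, C)\mapsto X[R,C]$ on subsets of size $\ge \kappa\log n$ to pin down the retention sets $(R_t, C_t)$ exactly, after which reconstruction is immediate.
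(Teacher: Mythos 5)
Your high-level framing (coverage of every entry with $O(\log n)$ traces, then argue the traces determine $X$ uniquely) leaves the entire difficulty in the uniqueness step, and the tool you propose for it does not exist: structural property (b) is false. Two distinct rows of a random $X$ agree on roughly $\sqrt{n}/2$ coordinates with overwhelming probability, so taking $S$ to be their agreement set gives a subset of size far exceeding $\kappa\log n$ on which the two projections coincide; equivalently, (b) would require every pair of rows to be at Hamming distance at least $\sqrt{n}-\kappa\log n$, which fails almost surely. What is true is the much weaker statement that for the \emph{actual} retained column sets $C_t$ (which are independent of $X$ and only $O(\log n)$ in number) the projections of the rows are pairwise distinct; but your uniqueness argument needs to rule out adversarial alignments $(R^*,C^*)$ chosen after seeing $X$ and the traces, and for those the universal quantifier over subsets is exactly what you cannot have. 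Relatedly, the claim that a spurious fit is ``exponentially unlikely'' per trace is unsubstantiated for candidates $X^*$ that differ from $X$ in only a few entries: the relevant events (an entry being unrevealed, or a trace embedding at a slightly shifted alignment) have only polynomially small probability, so the proposed union bound over $2^{n}$ candidate matrices times $\binom{\sqrt{n}}{q\sqrt{n}}^2$ alignments per trace does not close as stated. In short, the identifiability step that constitutes the theorem is exactly the part left open.

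The paper takes a genuinely different route that sidesteps any union bound over candidate matrices or alignments. It builds a pairwise statistical test (an ``oracle'') that decides whether two row traces come from the same original row or from independent rows: it sums bits over well-separated windows of width $\Theta(n^{1/4}\sqrt{\log n})$, compares the squared differences, and takes a median over blocks; traces of the same row are correlated because a constant fraction of surviving bits are shared (a deletion-pattern overlap lemma), which measurably shrinks the statistic relative to independent rows. With failure probability $n^{-10}$ per test, all pairs of rows (and columns) across the $O(\log n)$ traces can be clustered into $\sqrt{n}$ groups, the ordering of groups is recovered from co-occurrence of two groups within a single trace, and this reveals exactly which rows and columns were deleted in each trace, so $X$ can be read off once every entry is covered. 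If you want to salvage a consistency-based argument along your lines, you would need an identifiability lemma of comparable strength, and proving one is essentially as hard as what the paper's alignment oracle accomplishes directly.
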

This result is optimal, since with $o(\log n)$ traces, there is
reasonable probability that a row/column will be deleted from all
traces, at which point recovering this row/column is impossible.  The
result should be contrasted with the analogous results in the sequence
case. For sequences, the best results for random strings are
$\exp(O(\log^{1/3} n))$~\citep{HoldenPP18} and $\Omega(\log^{9/4}
n/\sqrt{\log \log n})$~\citep{holden2018lower}. In light of the lower
bound for sequences, it is perhaps surprising that matrix reconstruction admits
$O(\log n)$ sample complexity. 

In~\pref{sec:tensors}, we show that it is possible to extend both matrix reconstruction results to tensors in a reasonably straightforward way.

\subsection{Our Techniques}
To prove our results, we introduce several new techniques in addition to refining and extending  many existing ideas in prior trace reconstruction results.

 \pref{thm:sparsity_intro} is proved via a reduction from trace reconstruction to learning the parameters of a mixture of binomial distributions. Surprisingly, this natural connection does not seem to have been observed in the earlier literature. We then use a generalization of a complex-analytic approach introduced by De et al.~\cite{DeOS17} and Nazarov and Peres~\cite{NazarovP17} to prove a bound on the sample complexity of learning a binomial mixture. This generalization is to move beyond the analysis of Littlewood polynomials, i.e., polynomials with $\{-1,0,1\}$ coefficients, to the case where  coefficients have bounded precision. The generalization is not difficult.
  This is our simplest result to prove but we consider the final result to be revealing as it shows that sparsity plays a more important role than length in the complexity of trace reconstruction.

Our most technically involved result is \pref{thm:gaps_intro}. This is proved via an algorithm that constructs
a hierarchical clustering of the individual \1s in all received
traces according to their corresponding position in the original
string. This clustering step requires a careful recursion, where in
each step we ensure no false negatives (two \1s from the same origin
are always clustered together) but we have many false positives, which we
successively reduce. At the bottom of the recursion, we can identify a
large fraction of \1s from each \1 in the original string. However, as
the recursion eliminates many of the \1s, simply averaging the
positions of the surviving fraction leads to a biased estimate.  To
resolve this, we introduce a de-biasing step which eliminates even
more \1s, but ensures the survivors are unbiased, so that we can
accurately estimate the location of each \1 in the original string.
The initial recursion has $L = \log \log n$
levels, which is critical since the debiasing step involves
conditioning on the presence of $2^L$ \1s in a trace, which only
happens with probability $2^{-2^L} = {1}/{n}$. 

 \pref{thm:hamming_intro} leverages combinatorial arguments
about $k$-decks (the multiset of subsequences of a string) due
to Krasikov and Roditty~\cite{KRASIKOV1997344}.  
The result
demonstrates the utility of these combinatorial tools in trace
reconstruction. As further evidence for the utility of combinatorial
tools, the connection to $k$-decks was also used
by Ban et al.~\cite{ban2019beyond} in independent concurrent work on the
deletion channel.

 For~\pref{thm:matrix_intro}, we return to the complex-analytic approach and extend the Littlewood polynomial argument to
multivariate polynomials. Since the unknown matrices are $\sqrt{n}\times\sqrt{n}$, we can use a natural bivariate polynomial of degree $O(\sqrt{n})$, which yields the
improvement. However, the result of Borwein and Erd\'elyi~\cite{BorweinE97} used in previous work on trace reconstruction applies only to
univariate polynomials. Our key technical result is a generalization
of their result to accommodate bivariate Littlewood polynomials, which
we then use in a statistical test to identify the unknown matrix. 

 For~\pref{thm:random_matrix_intro}, using an averaging argument and exploiting randomness in
the original matrix, we construct a statistical test to determine if
two rows (or columns) from two different traces correspond to the same
row (column) in the original string. We show that this test succeeds
with overwhelming probability, which lets us align the rows and
columns in all traces. Once aligned, we know which rows/columns were
deleted from each trace, so we can simply read off the
original matrix $X$.

\paragraph*{Notation}
Throughout, $n$ is the length of the binary string being
reconstructed, $n_0$ is the number of \0s, $k$ is the number of \1s,
i.e., the \emph{sparsity} or \emph{weight}. For matrices, $n$ is the
total number of entries, and we focus on square
$\sqrt{n}\times\sqrt{n}$ matrices. For most of our results, we assume
 that $n, n_0, k$ are known since, if not, they can easily
be estimated using a polynomial number of traces. Let $p$ denote the
deletion probability when the \1s and \0s are deleted with the same
probability. We also study a channel where the \1s and \0s are deleted
with different probabilities; in this case, $p_0$ is the deletion
probability of a \0 and $p_1$ is the deletion probability of a \1. We
refer to the corresponding channel as the $(p_0,p_1)$-Deletion Channel
or the asymmetric deletion channel. It will also be
convenient to define $q=1-p, q_0=1-p_0$ and $q_1=1-p_1$ as the
corresponding retention probabilities. Throughout, $m$ denotes the
number of traces. {\color{black} For a natural number $w$ we use the notation $[w] = \{1,\ldots,w\}$.} 

\section{Sparsity and Learning Binomial Mixtures}
\label{sec:sparsity}

We begin with the sparse trace reconstruction problem, where we assume
that the unknown string $x$ has at most $k$ \1s.  Our analysis for
this setting is based on a simple reduction from trace reconstruction
to learning a mixture of binomial distributions, followed by a new
sample complexity guarantee for the latter problem. This approach
yields two new results: first, we obtain an
$\exp(O((k/q_1)^{1/3}\log^{2/3} n))$ sample complexity bound for sparse
trace reconstruction,
and second, we show that this guarantee applies even if the deletion probability for \0s is very close to $1$.

To establish our results, we introduce a slightly more challenging channel which we refer to as 
the \emph{Austere Deletion Channel}. 
The bulk of the proof analyzes
this channel, and we obtain results for the $(p_0,p_1)$ channel via a
simple reduction.

\begin{theorem}[Austere Deletion Channel Trace Reconstruction] \label{thm:austere}
In the Austere Deletion Channel, all but exactly one \0 are deleted
(the choice of which \0 to retain is made uniformly at random) and
each \1 is deleted with probability $p_1$.  For such a
channel, 
\[m=\exp({O}((k/q_1)^{1/3}\log^{2/3} n))\] 
traces suffice for
sparse trace reconstruction {\color{black}with high probability} where $q_1=1-p_1$, provided $q_1 =
  \Omega(\sqrt{k^{-1}\log n})$.
\end{theorem}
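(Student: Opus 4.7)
The plan is to reduce the Austere Deletion Channel reconstruction problem to learning a mixture of binomial distributions, and then apply a bounded-precision extension of the complex-analytic lower bounds used by De-O'Donnell-Servedio and Nazarov-Peres.

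For the reduction, observe that in the Austere channel each trace contains exactly one retained \0, chosen uniformly at random among the $n-k$ \0-positions of $x$, together with a random subset of the \1s retained independently with probability $q_1$. Let $i_1 < \cdots < i_k$ be the positions of the \1s in $x$, and for each \0-position $j$ let $L_j \in \{0,\ldots,k\}$ denote the number of \1s strictly to the left of $j$. Conditioned on position $j$ being the retained \0, the number of \1s appearing before the \0 in the trace is distributed as $\bin(L_j, q_1)$. Marginalizing over $j$, this statistic follows the mixture
\[
\sum_{\ell=0}^{k} w_\ell \cdot \bin(\ell, q_1), \quad\text{where}\quad w_\ell \;=\; \frac{\bigl|\{j : x_j = \0,\, L_j = \ell\}\bigr|}{n-k}.
\]
The multiset $\{L_j\}$ equals $(i_1-1,\, i_2-i_1-1,\, \ldots,\, n-i_k)$, so the vector $w$ encodes all gaps between consecutive \1s and uniquely determines $x$; recovering $x$ thus reduces to learning $w$ from mixture samples.

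Given $m$ traces, the empirical distribution of the ``\1s before the retained \0'' statistic estimates each $\Pr[N=i]$ to accuracy $\tilde{O}(m^{-1/2})$. To distinguish two candidate weight vectors $w \ne w'$, I consider the difference of probability generating functions
\[
\Delta(z) \;=\; \sum_{\ell=0}^{k} (w_\ell - w'_\ell)\,(p_1 + q_1 z)^{\ell}.
\]
After multiplication by $n-k$, this is a polynomial in $u = p_1 + q_1 z$ of degree at most $k$ with integer coefficients of magnitude at most $n$. It suffices to exhibit a point on the unit circle $|z| = 1$ at which $|\Delta(z)|$ exceeds $\tilde{O}(m^{-1/2})$, since the Fourier-type functional $\sum_i \Pr[N=i]\, z^i$ is a linear functional of the empirical distribution and thus estimable to matching precision.

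The key technical ingredient is a bounded-precision analogue of the Borwein-Erd\'elyi lemma. In its Littlewood form, a nonzero degree-$d$ polynomial with $\{-1,0,1\}$ coefficients has maximum at least $\exp(-\Omega(L))$ on an arc near $z = 1$ of angular width $\sim 1/L$; balancing $L$ against the estimation precision produces the $\exp(O(d^{1/3}))$ complexity of prior work. Extending to coefficients in $\{-M,\ldots,M\}$ weakens the lower bound by a factor of $M$, equivalently adds $\log M$ inside the exponent. Applying this with $d \le k$ and $M \le n$, and noting that the substitution $u = p_1 + q_1 z$ restricts attention to a circle of radius $q_1$ around $u = 1$ (which introduces a $1/q_1$ factor inside the cube root), rebalancing yields $m = \exp(O((k/q_1)^{1/3}\log^{2/3} n))$; the $\log^{2/3}n$ factor arises precisely from the $\log M$ cost of the bounded-precision extension. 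The main obstacle is exactly this step: the bounded-precision version of Borwein-Erd\'elyi is not off the shelf, so I would re-examine the contour argument underlying the Littlewood lower bound and carefully track how the coefficient bound $M$ propagates, so that the final exponent has exactly a $2/3$ power on the logarithm after balancing. The hypothesis $q_1 = \Omega(\sqrt{k^{-1}\log n})$ should emerge from the concentration analysis of the empirical distribution: it ensures the variances $q_1(1-q_1)\ell$ of the mixture components are $\Omega(\log n)$ at $\ell = k$, so the binomials spread across enough support points that empirical estimation retains the precision required for the polynomial-separation argument to fire.
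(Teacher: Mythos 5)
Your reduction is exactly the one the paper uses: the number of \1s preceding the unique retained \0 is a draw from the mixture $\sum_{\ell} w_\ell \,\bin(\ell,q_1)$, whose weights are multiples of $1/(n-k)$ and encode the gap structure of $x$, so the theorem reduces to learning a degree-$\le k$ binomial mixture with $\poly(n)$-precision weights; the paper then also runs a bounded-precision Borwein--Erd\'elyi argument. The genuine gap is in how you execute that analytic step. You claim it suffices to exhibit a point on the unit circle $|z|=1$ where $|\Delta(z)|$ is large, because $\sum_t \Pr[N=t]z^t$ is then a bounded linear functional estimable to $\tilde O(m^{-1/2})$, and that Borwein--Erd\'elyi ``transfers'' to the image circle $\{p_1+q_1z:|z|=1\}$ at the cost of a $1/q_1$ inside the cube root. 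That transfer is unsubstantiated, and it is the crux: the Borwein--Erd\'elyi bound is available on subarcs of the \emph{unit circle in the variable} $u=p_1+q_1z$, and for such $u$ one has $|z|=|(u-p_1)/q_1|\le \exp(c_2/(q_1L)^2)$, which is strictly greater than $1$; at such $z$ the functional is no longer bounded, so the $m^{-1/2}$ estimation claim breaks down, while on $|z|=1$ (where estimation is fine) no off-the-shelf lower bound for $|\Delta|$ is available (and at $z=1$ itself $\Delta(1)=0$).

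The paper resolves this tension with a step your outline omits: it converts the pointwise bound into a total-variation bound via $\sum_t|A_t-B_t|\,|z|^t \ge |G(u)| \ge \gamma\exp(-c_1L\log(1/\gamma))$, then truncates at $t\le\tau=6q_1k$, using the Chernoff bound $A_t,B_t\le 2^{-t}$ for $t>\tau$ so that the tail $\sum_{t>\tau}2^{-t}|z|^t$ is negligible. The hypothesis $q_1=\Omega(\sqrt{k^{-1}\log n})$ is used precisely to make this truncation sound --- with the optimized $L=\Theta((k/(q_1\log(1/\gamma)))^{1/3})$ it keeps the growth factor $\exp(c_2/(q_1L)^2)$ small enough that the Chernoff decay dominates --- not for the variance/support-spread reason you give, which does not correspond to any step of a correct proof. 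Two smaller points: your statement of the bounded-precision extension is quantitatively off --- with coefficients that are multiples of $\gamma=1/\poly(n)$ the usable bound is $\gamma\exp(-c_1L\log(1/\gamma))$, i.e.\ the exponent is \emph{multiplied} by $\log(1/\gamma)$, not merely shifted by an additive $\log M$ (your final balancing implicitly uses the multiplicative form, which is what produces the $\log^{2/3}n$); and to pass from pairwise distinguishing to reconstruction you still need the union bound over all candidate bounded-precision mixtures, which costs only a $\poly(n)$ factor but should be stated.
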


We will prove this result shortly, but we first derive our main result
for this section as a simple corollary. 

\begin{corollary}[Deletion Channel Trace Reconstruction]\label{corr:del}
For the $(p_0,p_1)$-deletion channel, \[m=q_0^{-1}
\exp({O}((k/q_1)^{1/3}\log^{2/3} n))\] traces suffice for sparse trace
reconstruction {\color{black}with high probability} where $q_0=1-p_0$ and $q_1=1-p_1=\Omega(\sqrt{k^{-1} \log n})$.
\end{corollary}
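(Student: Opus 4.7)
The plan is to derive the corollary as a direct reduction from the Austere Deletion Channel result (\pref{thm:austere}). The key observation is that a $(p_0,p_1)$-trace in which at least one \0 survives can be post-processed, by discarding all but one uniformly chosen surviving \0, into a sample from the Austere Deletion Channel. Once this distributional equivalence is established, taking a suitably enlarged pool of $(p_0,p_1)$-traces gives the needed number of Austere samples with high probability, and \pref{thm:austere} finishes the job.

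More concretely, fix $x \in \{\0,\1\}^n$ with at most $k$ \1s, and let $n_0 \geq 1$ be its number of \0s (the all-\1s case is trivial). Let $Z_1,\ldots,Z_{n_0}$ be the indicators that each of the $n_0$ original \0s survives the $(p_0,p_1)$-channel; these are i.i.d.\ Bernoulli$(q_0)$. Given one trace, I propose the following post-processing: if at least one \0 survives, pick one of the surviving \0s uniformly at random and delete all the others; if no \0 survives, discard the trace. Since the $Z_i$'s are i.i.d., conditional on $\sum_i Z_i \ge 1$ the index of the chosen surviving \0 is uniform over $[n_0]$ by symmetry/exchangeability. Combined with the fact that \1s are independently deleted with probability $p_1$ in both channels, the resulting (non-discarded) trace has exactly the distribution of an Austere-channel trace.

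Next I quantify the yield. The probability that a single $(p_0,p_1)$-trace survives the post-processing is
\[
1 - p_0^{n_0} = 1 - (1-q_0)^{n_0} \;\geq\; q_0,
\]
where the inequality uses $n_0 \geq 1$. Hence, if we draw $m = C q_0^{-1} m^*$ traces with $m^* = \exp(O((k/q_1)^{1/3}\log^{2/3} n))$ and $C$ a sufficiently large absolute constant, the expected number of Austere samples is at least $C m^*$, and a standard Chernoff bound guarantees that we obtain at least $m^*$ Austere samples with high probability (the failure probability is exponentially small in $m^*$, which dominates any $1/\mathrm{poly}(n)$ slack we need). At this point, \pref{thm:austere} applies, using the assumption $q_1 = \Omega(\sqrt{k^{-1}\log n})$, and outputs $x$ with high probability.

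There is no real obstacle beyond verifying the distributional identity of the post-processed trace with the Austere channel, since the counting step is a routine Chernoff calculation once one has the bound $1-p_0^{n_0} \geq q_0$. The only place that needs any care is confirming that uniformly selecting among surviving \0s (rather than selecting a single original \0 uniformly at random and conditioning on its survival) gives the correct Austere distribution; this follows cleanly from exchangeability of the $Z_i$. Combining the two steps yields the claimed bound of $q_0^{-1}\exp(O((k/q_1)^{1/3}\log^{2/3} n))$ traces.
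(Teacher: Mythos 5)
Your proposal is correct and follows essentially the same route as the paper: reduce to the Austere Deletion Channel by retaining one uniformly chosen surviving \0 per trace, note the per-trace success probability is at least $q_0$, and invoke \pref{thm:austere}. Your added details (the exchangeability argument for uniformity of the retained \0 and the Chernoff bound on the yield) merely make explicit steps the paper leaves implicit.
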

\begin{proof}
This follows from~\pref{thm:austere}. By focusing on just a single \0,
it is clear that the probability that a trace from the
$(p_0,p_1)$-deletion channel contains at least one \0 is at least
$q_0$. If among the retained \0s we keep one at random and remove the
rest, we generate a sample from the austere deletion channel. Thus,
with $m$ samples from the $(p_0,p_1)$ deletion channel, we obtain at
least $mq_0$ samples from the austere channel and the result follows.
Note that~\pref{thm:sparsity_intro} is a special case where
$p_0=p_1=p$. 
\end{proof}

\begin{remark} Note that the case where $q_1$ is constant (a typical setting for the problem) and $k=o(\log n)$ is not covered by the corollary. However, in this case a simpler approach applies to argue that $\poly(n)$ traces suffice: with probability $q_1^k\geq 1/\poly(n)$ no \1s are deleted in the generation of the trace and given $\poly(n)$ such traces, we can infer the original position of each \1 based on the average position of each $\1$ in each trace.
\end{remark}
\begin{remark}
Note that the weak dependence on $q_0$ ensures that as long as
$q_0=1/\exp({O}((k/q_1)^{1/3}\log^{2/3} n))$, we still have the 
$\exp({O}((k/q_1)^{1/3}\log^{2/3} n))$ bound. Thus, our result shows that sparse trace reconstruction is
possible even when zeros are retained with {\color{black}super-polynomially} small
probability.
\end{remark}

\subsection{Reduction to Learning Binomial Mixtures}
We prove~\pref{thm:austere} via a reduction {\color{black}from austere deletion channel trace reconstruction} to learning
binomial mixtures.  Given a string $x$ of length $n$, let $r_i$ be the
number of ones before the $i^{\textrm{th}}$ zero in $x$. For example, if
$x=1001100$ then $r_1=1, r_2=1,r_3=3, r_4=3.$ Note that the multi-set
$\{r_1, r_2, \ldots, \}$ uniquely determines $x$, that each $r_i\leq
k$, and that the multi-set has size $n_0$. The reduction from trace
reconstruction to learning binomial mixtures is
appealingly simple:

\begin{enumerate}
\item Given traces $t_1, \ldots, t_m $ from the austere
  channel, let $s_i$ be the number of leading ones in $t_i$.
\item Observe that each $s_i$ is generated by a uniform\footnote{Note
  that since the $r_i$ are not necessarily distinct some of the
  binomial distributions are the same.} mixture of $\bin(r_1,q_1),
  \ldots, \bin(r_{n_0},q_1)$ where $q_1=1-p_1$.  Hence, learning $r_1,
  r_2, \ldots, r_{n_0}$ from $s_1, s_2, \ldots, s_m$ allows us to
  reconstruct $x$. 
\end{enumerate}

{\color{black}We will say that a number $x$ has $t$-precision if $10^{y}\times x \in \mathbb{Z}$ where $y \in \mathbb{Z}$ and $y=O(\log t)$.}
To obtain~\pref{thm:austere}, we establish the following new
guarantee for learning binomial mixtures.

\begin{theorem}[Learning Binomial Mixtures]\label{thm:learningmix}
Let $\Mc$ be a mixture of $d=\poly(n)$ binomials:
\[
\mbox{Draw sample from $\bin(a_t,q)$ with probability $\alpha_t$}
\] where $0\leq a_1, \ldots, a_d \leq a$ are distinct integers, the values $\alpha_t$
have $\poly(n)$ precision, and $q=\Omega (\sqrt{a^{-1} \log n} )$.
Then $\exp({O}((a/q)^{1/3} \log^{2/3} n))$ samples suffice to learn the parameters exactly with high probability. 
\end{theorem}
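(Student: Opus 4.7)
The plan is to distinguish candidate mixtures by their empirical PMFs and to lower-bound the $\ell_\infty$ distance between PMFs of distinct mixtures by generalizing the complex-analytic Littlewood-polynomial argument of Nazarov--Peres and De--O'Donnell--Servedio to polynomials with bounded-precision integer coefficients.

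\textbf{Algorithm (mean-based).} Each mixture in the class is specified by a nonnegative vector $(\alpha_0,\ldots,\alpha_a)$ summing to $1$ with $\poly(n)$ precision, so there are at most $\exp(\poly(n))$ candidate mixtures. Given $m$ samples, form the empirical PMF $\hat P(s)$ for each $s\in\{0,\ldots,a\}$. By Chernoff and a union bound over $s$ and candidate mixtures, with high probability $|\hat P(s)-\Pr_{\mathcal{M}}[S=s]| = o(\epsilon)$ simultaneously for all $s$, where $\epsilon:=\exp(-O((a/q)^{1/3}\log^{2/3} n))$ and $m=\exp(O((a/q)^{1/3}\log^{2/3} n))$. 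I would output the mixture whose predicted PMF is $\ell_\infty$-closest to $\hat P$. It then suffices to prove the following gap lemma: for any two distinct mixtures $\mathcal{M}_1\neq\mathcal{M}_2$ in the class,
\[
\bigl\| \Pr\nolimits_{\mathcal{M}_1}[S=\cdot]\;-\;\Pr\nolimits_{\mathcal{M}_2}[S=\cdot]\bigr\|_\infty \;\geq\; \epsilon.
\]

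\textbf{Reduction to a polynomial lower bound.} Let $G_{\mathcal{M}}(z) = \mathbb{E}_{\mathcal{M}}[z^S] = \sum_t \alpha_t (1-q+qz)^{a_t}$ be the probability generating function, so $[z^s]G_{\mathcal{M}} = \Pr_{\mathcal{M}}[S=s]$. Under the substitution $w=1-q+qz$,
\[
h(w)\;:=\;G_{\mathcal{M}_1}(z)-G_{\mathcal{M}_2}(z)\;=\;\sum_{t=0}^{a}(\alpha_{1,t}-\alpha_{2,t})\,w^{t}
\]
is a polynomial of degree at most $a$ in $w$, nonzero since $\mathcal{M}_1\neq\mathcal{M}_2$. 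Clearing the $\poly(n)$ precision yields $\tilde h(w)=c\cdot h(w)$ with integer coefficients bounded by some $M=\poly(n)$.

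\textbf{Generalized Borwein--Erd\'elyi.} The critical technical step is to show that on the arc $\{\,1-q+qe^{i\theta}:|\theta|\le\theta_0\}$---the image of a small arc of the unit circle under $z\mapsto 1-q+qz$---there exists $w$ with $|\tilde h(w)|\ge\epsilon\cdot(a+1)$. The Littlewood version of this bound, due to Borwein--Erd\'elyi and used by Nazarov--Peres and De et al., delivers $\exp(-O(n^{1/3}))$ for $\{-1,0,1\}$-coefficient polynomials on a suitable arc near $1$ on the unit circle. Relaxing the coefficient restriction to integers of magnitude at most $M$ introduces, after rebalancing the optimization that yields the $a^{1/3}$ exponent, an extra $(\log M)^{2/3} = (\log n)^{2/3}$ factor. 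The $q^{-1/3}$ factor arises because the natural arc on the compressed circle of radius $q$ around $1-q$ is scaled by a factor of $q$ relative to the corresponding arc on the unit circle near $w=1$, so a proportionally longer angular range is needed to see the same effective arc length. Translating back via $z = 1 + (w-1)/q$ produces a point $z$ on the unit circle with $|G_{\mathcal{M}_1}(z)-G_{\mathcal{M}_2}(z)|\ge\epsilon(a+1)$. Since $G_{\mathcal{M}_1}(z)-G_{\mathcal{M}_2}(z)=\sum_s(\Pr_1[S=s]-\Pr_2[S=s])z^s$ has at most $a+1$ terms of unit modulus, the triangle inequality delivers some $s$ with $|\Pr_1[S=s]-\Pr_2[S=s]|\ge\epsilon$, completing the gap lemma.

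The main obstacle is the generalization of the Borwein--Erd\'elyi inequality from Littlewood polynomials to integer-coefficient polynomials of magnitude at most $M$, applied to an arc on the compressed circle of radius $q$ around $1-q$ rather than the unit circle. Although the authors describe this generalization as not difficult, the technical heart is in carefully balancing the degree $a$, the coefficient magnitude $\log M$, and the arc-compression factor $q$ to land precisely on the $(a/q)^{1/3}\log^{2/3} n$ exponent; the hypothesis $q=\Omega(\sqrt{a^{-1}\log n})$ enters here to ensure the arc is wide enough for the resulting lower bound to be informative.
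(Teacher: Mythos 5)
Your algorithmic wrapper (hypothesis testing against all bounded-precision candidates via Chernoff plus a union bound) and your identification of the relevant polynomial $h(w)=\sum_{t=0}^a(\alpha_{1,t}-\alpha_{2,t})w^t$ with bounded-precision coefficients both match the paper. The genuine gap is in your key lemma: you want a Borwein--Erd\'elyi-type lower bound for $h$ on an arc of the \emph{compressed} circle $\{1-q+qe^{i\theta}\}$, and you justify it only by a heuristic that the arc is ``scaled by $q$,'' producing the $q^{-1/3}$ factor. That is not the ``not difficult'' generalization the paper invokes (which is merely passing from $\{-1,0,1\}$ coefficients to bounded-precision coefficients, still on the \emph{unit} circle), and it does not follow from Borwein--Erd\'elyi by a change of variables: writing $h(1-q+qu)$ as a polynomial in $u$ destroys the coefficient structure the inequality needs --- the anchor coefficient one must normalize by (e.g.\ the constant term $h(1-q)$) can be arbitrarily small, so the known result cannot be cited or lightly adapted. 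Whether a clean compressed-circle analogue with only a $q^{-1/3}$ loss holds is exactly the kind of delicate question the existing trace-reconstruction analyses work hard to avoid, so asserting it is a missing core step, not a routine one.

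The paper's route is the mirror image of yours and avoids this issue: it applies Borwein--Erd\'elyi to $G(z)=\sum_j(\alpha_j z^{a_j}-\beta_j z^{b_j})$ with $z$ on an arc of the unit circle of width $\pi/L$, getting $|G(z)|\geq\gamma\exp(-c_1L\log(1/\gamma))$, and then pays the price that the generating-function variable $w=(z-(1-q))/q$ has modulus $\exp(c_2/(qL)^2)>1$. Because $|w|^t$ blows up, one cannot conclude a total-variation bound directly; the paper truncates the PMF at $\tau=6qa$ using the Chernoff bound $A_t,B_t\leq 2^{-t}$ for $t>\tau$, and then optimizes $L\asymp(a/(q\log(1/\gamma)))^{1/3}$ by balancing the Borwein--Erd\'elyi loss against the factor $\exp(\tau c_2/(qL)^2)$. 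This balancing is where both the $(a/q)^{1/3}$ exponent and the hypothesis $q=\Omega(\sqrt{a^{-1}\log n})$ actually come from (the latter ensures $c_2/(qL^2)\leq(\ln 2)/2$ so the tail term $O(2^{-\tau})$ is lower order); your explanation that the hypothesis ensures ``the arc is wide enough'' does not correspond to any step you could carry out. To repair your proposal, either prove the compressed-circle lemma (a genuinely new analytic statement), or switch to the paper's arrangement: keep the polynomial lower bound on the unit circle in the $z$-variable and control the resulting $|w|^t>1$ growth via the binomial tail truncation.
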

\begin{proof}
Let $\Mc'$ be a mixture where the samples are drawn from $\sum_{t=1}^d \beta_t \bin(b_t,q)$, where $0 \leq b_1, \ldots, b_d\leq a$ are distinct and the probabilities $\beta_t \in \{0,\gamma,2\gamma, \ldots, 1\}$ where $1/\gamma=\poly(n)$. 
Consider the variational distance $\sum_t |A_t-B_t|$ between $\Mc$ and $\Mc'$ where
 \begin{align*}
A_t &= \prob{\mbox{sample from $\Mc$ is $t$}}  
= \sum_{j=1}^{d} \alpha_j {a_j \choose t} q^t (1-q)^{a_j-t} \\
B_t &= \prob{\mbox{sample from $\Mc'$ is $t$}} 
=\sum_{j=1}^{d} \beta_j {b_j \choose t}  q^t (1-q)^{b_j-t} \ .\end{align*} 
We will show that the variational distance between $\Mc$ and $\Mc'$ is at least 
\[\epsilon =\exp(-O((a/q)^{1/3} (\log 1/\gamma)^{2/3})) \ .\] Since there are at most $((a+1)\cdot( 1/\gamma+1))^d$ possible choices for the parameters of $\Mc'$, standard union bound arguments show that 
\begin{align*}
&O({\log (((a+1)\cdot( 1/\gamma+1))^d)}/{\epsilon^2}) = \exp(O((a/q)^{1/3} (\log 1/\gamma)^{2/3}))
\end{align*}
samples are sufficient to distinguish $\Mc$ from all other mixtures.

To prove the total variation bound, observe that by applying the binomial
formula, for any complex number $w$, we have
\begin{align*}
&\sum_{t\geq 0} (A_t-B_t)w^t 
= \sum_{t\geq 0} w^t \Big(\sum_{j\geq 0} \alpha_j  {a_j \choose t} q^i (1-q)^{a_j-t}  -  \beta_j{b_j \choose t} q^i (1-q)^{b_j-t} \Big) 
=  
\sum_{j\geq 0} (\alpha_j z^{a_j}-\beta_j z^{b_j})
\end{align*}
where $z=qw+(1-q)$.
Let $G(z)= \sum_{j\geq 0} (\alpha_j z^{a_j}-\beta_j z^{b_j})$ and apply the triangle inequality to obtain: 
\[
\sum_{t\geq 0}|A_t-B_t||w^t| \geq {|G(z)|} \ .
\]
Note that $G(z)$ is a non-zero degree $d$ polynomial with coefficients
in the set 
\[\{-1, \ldots, -2\gamma, -\gamma, 0 , \gamma, 2\gamma, \ldots, 1\} . \]  We would like to find a $z$ such that
$G(z)$ has large modulus but $|w^t|$ is small, since this will yield a
total variation lower bound. We proceed along similar lines to
Nazarov and Peres~\cite{NazarovP17} and De et al.~\cite{DeOS17}. It
follows from Corollary 3.2 in Borwein and Erd\'elyi~\cite{BorweinE97} that there exists $z\in
\{e^{i\theta} : -\pi/L\leq \theta \leq \pi/L\}$ such that 
\begin{align*}
|G(z)|\geq \gamma \exp(-c_1 L \log (1/\gamma))
\end{align*}
for some constant $c_1> 0$. For such a value of $z$, Nazarov and Peres~\cite{NazarovP17} show that
\begin{align*}
|w| \leq \exp(c_2/(qL)^2)
\end{align*}
for some constant $c_2>0$. Therefore,
\begin{align*}
&\sum_{t\geq 0}|A_t-B_t| \exp(t c_2/(qL)^2) \geq \sum_{t\geq 0}|A_t-B_t||w^t|  \geq {|G(z)|}  \geq \gamma \exp(-c_1L \log (1/\gamma))
\end{align*}

For $t > \tau=6qa$, by an application of the Chernoff bound, $A_t, B_t\leq 2^{-t}$, so we obtain
\begin{align*}
&\underbrace{\sum_{t>\tau}2^{-t}\exp(t c_2/(qL)^2)}_{= T_\tau}   + \sum_{t=0}^{\tau} |A_t-B_t| \exp(\tau c_2/(qL)^2) \geq \gamma \exp(-c_1 L \log (1/\gamma)) \ .
\end{align*}

\begin{align}\label{eq:bnd}
&\sum_{t=0}^\tau |A_t-B_t|  \geq  \frac{\gamma\exp(-c_1L \log (1/\gamma))}{\exp(\tau c_2/(qL)^2)}   - \frac{T_\tau}{\exp(\tau c_2/(qL)^2)}   \geq 
\frac{\gamma \exp(-c_1L \log (1/\gamma))}{\exp(\tau c_2/(qL)^2)} - O(2^{-\tau})
\end{align}
where the second equality follows from the assumption that $c_2/(qL^2)\leq (\ln 2)/2$ (which we will ensure when we set $L$) since,
\begin{align*}
&\frac{T_\tau}{\exp(\tau c_2/(qL)^2)}=\frac{O(1) \cdot 2^{-\tau}\exp(\tau c_2/(qL)^2)}{\exp(\tau c_2/(qL)^2)} = O(2^{-\tau}) \ .
\end{align*}
Set 
\[L=c \sqrt[3]{\tau/(q^2\log (1/\gamma))} = c \sqrt[3]{6a/(q \log (1/\gamma))}\] for some sufficiently large constant $c$. This ensures that
the first term of Eqn.~\ref{eq:bnd} is \[\exp(-O((a/q)^{1/3} \log^{2/3} (1/\gamma))) . \] Note that 
\begin{align*}
&\frac{c_2}{qL^2}
<\frac{c_2}{q c^2 (a/(q \log (1/\gamma)))^{2/3}} \leq \frac{c_2}{c^2 } \cdot  \left (\frac{\log (1/\gamma) }{aq^{1/2}} \right )^{2/3} 
\leq \frac{c_2}{c^2 } \cdot  \left (\frac{\log (1/\gamma) }{aq^2} \right )^{2/3} 
\end{align*}
and so by the assumption that $q=\Omega(\sqrt{\log (1/\gamma)/a})$ we may set the constant $c$ large enough such that $c_2/(qL^2)\leq (\ln 2)/2$ as required. The second term  of Eqn.~\ref{eq:bnd} is a lower order term given the assumption on $q$
and thus we obtain the required lower bound on the total variation distance.
\end{proof}

\pref{thm:austere} now follows from~\pref{thm:learningmix}, since in
the reduction, we have $d = O(n)$ binomials, one per \0 in $x$,
$\alpha_i$ is a multiple of $1/n_0$ and importantly, we have $a =
k$. The key is that we have a polynomial with degree $a=k$ rather than
a degree $n$ polynomial as in the previous analysis. 

\paragraph*{Remark} If all $\alpha_t$ are equal,~\pref{thm:learningmix} can be improved
to $\poly(n)\cdot \exp({O}((a/p)^{1/3}))$ by using a more refined bound from Borwein and Erd\'elyi~\cite{BorweinE97} in our proof. This follows by observing that if $\alpha_t=\beta_t=1/d$, then $\sum_{j\geq 0} (\alpha_j
z^{a_j}-\beta_j z^{s_j})$ is a multiple of a Littlewood polynomial and
we may use the stronger bound $|G(z)|\geq \exp(-c_1 L)/d$,
see Borwein and Erd\'elyi~\cite{BorweinE97}.%

\subsection{Lower Bound on Learning Binomial Mixtures} 
We now show that the exponential dependence on $a^{1/3}$
in~\pref{thm:learningmix} is necessary. 

\begin{theorem}[Binomial Mixtures Lower Bound]\label{thm:learningmix_lower}
There exists subsets \[\{a_1, \ldots, a_k\}\ne\{b_1, \ldots, b_k\}\subset \{0,\ldots ,a\}\] such that if $\Mc=\sum_{i=1}^{k}\bin(a_i,1/2)/k$ and $\Mc'=\sum_{i=1}^{k}\bin(b_i,1/2)/k$, then $\|\Mc-\Mc'\|_{TV} = \exp(-\Omega(a^{1/3} ))$. Thus, $\exp({\Omega}(a^{1/3}) )$ samples are required to distinguish $\Mc$ from $\Mc'$ with constant probability.
\end{theorem}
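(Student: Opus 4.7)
The plan is the standard two-point Le Cam argument: exhibit two specific mixtures $\Mc,\Mc'$ of the prescribed form whose total variation distance is at most $\exp(-\Omega(a^{1/3}))$, and then conclude by the Le Cam bound that distinguishing them with constant probability requires $\Omega(\|\Mc-\Mc'\|_{TV}^{-2})=\exp(\Omega(a^{1/3}))$ samples. The construction should be viewed as a dual witness to the Borwein--Erd\'elyi bound used in the proof of~\pref{thm:learningmix}: instead of lower-bounding $|G(z)|$ at some point, we upper-bound $|f(z)|$ simultaneously at every point on the relevant contour.

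Concretely, I would appeal to an existence result for a nonzero polynomial $f(z)=\sum_{j=0}^{a}c_j z^j$ with $c_j\in\{-1,0,+1\}$ and equal numbers $k$ of $+1$ and $-1$ coefficients (the balance is automatic once $f(1)=0$, which is forced by the smallness condition below together with integrality of $f(1)$), such that the rescaled polynomial $F(s):=f((1+s)/2)$ satisfies
\[\max_{|s|=1}\bigl|F(s)\bigr|\;\le\;\exp\!\bigl(-\Omega(a^{1/3})\bigr).\]
Set $\{a_1,\ldots,a_k\}=\{j:c_j=+1\}$ and $\{b_1,\ldots,b_k\}=\{j:c_j=-1\}$, and define $\Mc,\Mc'$ as in the statement. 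The same generating-function identity as in the proof of~\pref{thm:learningmix} (with $q=1/2$) gives $\sum_t (A_t-B_t)s^t = F(s)/k$. Applying Parseval on $|s|=1$ and then Cauchy--Schwarz, using that this polynomial has degree at most $a$, yields
\[2\,\|\Mc-\Mc'\|_{TV}\;=\;\sum_t|A_t-B_t|\;\le\;\frac{\sqrt{a+1}}{k}\,\|F\|_{L^2(|s|=1)}\;\le\;\exp\!\bigl(-\Omega(a^{1/3})\bigr),\]
where the polynomial factor $\sqrt{a+1}/k$ is absorbed into the exponent. The sample-complexity statement then follows immediately from the two-point bound.

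The main obstacle is the polynomial construction itself. The map $s\mapsto(1+s)/2$ carries the unit circle to the circle $\{z:|z-\tfrac12|=\tfrac12\}$, which passes through both $z=1$ (at $s=1$) and $z=0$ (at $s=-1$), so a Littlewood polynomial that merely vanishes to high order at $z=1$ is not enough: near $s=-1$ the constant term $c_0$ dominates $|F(s)|$, and this too must be forced to (effectively) vanish. Producing a Littlewood polynomial with $|F(s)|\le \exp(-\Omega(a^{1/3}))$ uniformly on the unit $s$-circle requires a careful choice, for example products of cyclotomic- or Chebyshev-like factors arranged so that $f$ is uniformly exponentially small on the arc of the image circle away from $z=0$, together with a translation ensuring the low-order coefficients of $f$ are zero so that the region near $s=-1$ is also controlled. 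Existence of such polynomials with the $a^{1/3}$ exponent is precisely the tight converse to Corollary~3.2 of~\cite{BorweinE97}, and establishing it is the technical heart of the lower bound; the $a^{1/3}$ exponent matches the upper bound of~\pref{thm:learningmix}, showing that the mean-based analysis there is sharp up to the constant in the exponent.
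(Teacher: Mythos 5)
Your overall strategy---a two-point Le Cam argument applied to a pair of binomial mixtures whose difference is generated by a Littlewood-type polynomial that is uniformly exponentially small on the relevant circle---is the right shape, and your reduction from the polynomial statement to the TV bound (generating-function identity at $q=1/2$, then Parseval and Cauchy--Schwarz, absorbing the $\sqrt{a+1}/k$ factor) is sound. The problem is that you have left the entire crux unproved: the existence of a nonzero $\{-1,0,+1\}$-coefficient polynomial $f$ of degree $a$ with $\max_{|s|=1}|f((1+s)/2)|\le\exp(-\Omega(a^{1/3}))$ is exactly the content of the theorem (restated in polynomial language), and you explicitly defer it as ``the technical heart'' without a proof or a citation to a place where it is established. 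Corollary~3.2 of~\cite{BorweinE97}, which you gesture at, is the \emph{upper} bound direction used in~\pref{thm:learningmix}; its ``tight converse'' is not something you may assume. As written, the proposal therefore does not prove the statement.

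The missing ingredient is available, and this is how the paper proceeds: the mean-based lower bound of~\cite{NazarovP17,DeOS17} already provides two distinct strings $x,y\in\{0,1\}^n$ whose expected traces under the $(1/2,1/2)$-deletion channel satisfy $\sum_i|t_i^x-t_i^y|=\exp(-\Omega(n^{1/3}))$; after noting that $x$ and $y$ may be assumed to have equal weight $k$ (else the $\ell_1$ distance of expected traces is at least $1/2$), one takes $a_i$, $b_i$ to be the numbers of coordinates preceding the $i$-th \1 in $x$, $y$, and a direct calculation shows $\|\Mc-\Mc'\|_{TV}=\tfrac{2}{k}\sum_i|t_i^x-t_i^y|=\exp(-\Omega(n^{1/3}))$---no Parseval step is even needed, since the expected-trace coordinates \emph{are} (up to the factor $2/k$) the pointwise differences of the two mixture pmfs. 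If you replace your unproved polynomial-existence claim with an invocation of that known construction (which is equivalent to the uniform smallness on the circle that you want), your argument goes through; without it, there is a genuine gap.
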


\begin{proof}
Previous work \cite{NazarovP17,DeOS17} shows the existence of two  strings $x,y\in \{0,1\}^n$ such that $\sum_i |t^x_i-t^y_i|= \exp(-\Omega(n^{1/3}))$
where $t^z_i$ is the expected value of the $i$th element ({\color{black}element at $i$th position counted from beginning}) of a string formed by applying the $(1/2,1/2)$-deletion channel to the string $z$. We may assume $\sum_{i\in [n]} x_i=\sum_{i\in [n]} y_i\equiv k$ since otherwise
\begin{align*}
&\sum_i \left |t^x_i-t^y_i \right |\geq \left |\sum_i  t^x_i- \sum_i  t^y_i \right | = \left |\sum_{i\in [n]} x_i/2-\sum_{i\in [n]} y_i/2 \right | \geq 1/2
\end{align*}
which would contradict the assumption $\sum_i |t^x_i-t^y_i| =  \exp(-\Omega(n^{1/3}))$. 

Consider $\Mc=\sum_{i=1}^{k}\bin(a_i,1/2)/k$ and $\Mc'=\sum_{i=1}^{k}\bin(b_i,1/2)/k$, where $a_i$ ($b_i$) is the number  of coordinates preceding the $i$th 1 in $x$ ($y$). 
Note that  
\[
t^x_i= \sum_{r=1}^k {a_r \choose i}/2^{a_r+1} ~~~\mbox{ and }~~~ t^y_i= \sum_{r=1}^k {b_r \choose i}/2^{b_r+1} \ ,
\]
and so
\begin{align*}
\|\Mc-\Mc'\|_{TV} =
\sum_i |\prob{\Mc=i}-\prob{\Mc'=i}|=
& \sum_i \frac{1}{k} \left | \sum_{r=1}^k {a_r \choose i}/2^{a_r}- 
  \sum_{r=1}^k {b_r \choose i}/2^{b_r} \right |  \\
=& \frac{2}{k} \sum_i |t^x_i-t^y_i| =\exp(-\Omega(n^{1/3})) \ ,
\end{align*}
which proves the result.
\end{proof}

\section{Well-Separated Sequences}
\label{sec:gaps}

We now prove~\pref{thm:gaps_intro}, showing that $\poly(n)$ traces suffice for
reconstruction of a $k$-sparse string when there are $\Omega(k\log n)$
\0s between each consecutive \1. For clarity of exposition, we are going to prove the statement of \pref{thm:gaps_intro} for $p=1/2$. The proof follows verbatim for any other constant $p$.
We call such sequences of \0s the
\emph{\0-runs} of the string. We also refer to the length of the
shortest \0-run as the \emph{gap} $g$ of the string $x$.

\begin{theorem*}[Restatement of~\pref{thm:gaps_intro}]
Let $x$ be a $k$-sparse string of length $n$ and gap at least $ck\log(n)$ for a large enough $c$. Then $\poly(n)$ traces from the
$(1/2,1/2)$-Deletion Channel suffice to recover $x$ with high
probability.
\end{theorem*}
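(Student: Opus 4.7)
The plan is to estimate each of the $k$ original positions $i_1 < \cdots < i_k$ of the \1s in $x$ to additive error strictly less than $g/2 = \Omega(k \log n)$, which uniquely determines $x$ under the gap hypothesis. Conditional on the $j$-th \1 surviving in a trace, its trace position has mean $q \cdot i_j$ and variance $O(n)$, so given $N$ unbiased samples of this position, averaging yields additive error $O(\sqrt{n/N})$, which is $o(1)$---and hence well below $g$---once $N$ is a sufficiently large polynomial in $n$. The entire difficulty is producing such unbiased samples for each $j$, since a priori we do not know which \1 in a trace corresponds to which \1 in $x$.

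Were the gap $\Omega(\sqrt{n \log n})$, one could cluster \1s across traces by raw trace position: \1s from distinct original indices would then fall in disjoint intervals with high probability. For gaps as small as $\Theta(k \log n)$ this direct approach fails, so I will instead construct a hierarchical clustering of all \1s appearing in the $\poly(n)$ traces, with $L = \log \log n$ levels. At each level I refine every current cluster using local structural features of the traces (for example, offsets to neighboring \1s in the same trace and joint survival patterns across traces). The invariant to maintain inductively is that two \1s originating from the same index in $x$ are never separated (no false negatives), while at each refinement the fraction of false positives shrinks by a constant factor. After $L$ levels each leaf cluster is dominated by \1s from a single original index, but a naive average within a cluster is biased, because the rules used for refinement condition on events correlated with the trace position of the target \1.

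To remove this bias I apply a final debiasing step: retain only those traces in which a prescribed configuration of $2^L = \log n$ \1s is present. Conditional on this event the target \1's trace position becomes a clean binomial centered exactly at $q \cdot i_j$, decoupled from the clustering rule. The conditioning event has probability $2^{-2^L} = 1/n$, so starting from $\poly(n)$ traces still leaves $\poly(n)$ usable samples per target, enough for concentration via Chernoff-type bounds combined with a union bound over the $k \le n$ targets. The main obstacle will be the tight coupling between the clustering depth and the debiasing event: the hierarchy must refine enough to isolate each original \1, yet not so aggressively that the conditioning needed for unbiasedness becomes prohibitively unlikely. The choice $L = \log \log n$ is precisely the sweet spot---any smaller leaves the clusters too contaminated for the $k \log n$-scale averaging to succeed, any larger drops the usable sample size per target below $\poly(n)$. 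Choosing the constant $c$ in $g \ge c k \log n$ large enough then provides the slack needed for all concentration and union bound steps to go through.
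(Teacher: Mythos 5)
Your high-level architecture coincides with the paper's: a hierarchical clustering over the received \1s with $D=\log\log n$ levels that never separates \1s of common origin, followed by a debiasing step that conditions on the presence of $2^{D}=\log n$ specific \1s (probability about $1/n$, hence only polynomial overhead). But the proposal omits the technical idea that makes the recursion work, and the invariant you do state would not suffice. You refine clusters ``using local structural features (offsets to neighboring \1s, joint survival patterns)'' so that ``the fraction of false positives shrinks by a constant factor'' per level; constant-factor shrinkage over only $\log\log n$ levels reduces contamination by a polylogarithmic factor, nowhere near enough to isolate each original \1, and you give no mechanism ensuring such a refinement preserves the no-false-negative property. The paper's engine is quantitatively different: within each connected component it (i) re-indexes every received \1 by subtracting the position of the first received \1 of that component in the same trace, so the binomial noise at the next level scales with the component's span (at most $2k\tau_d$) rather than with $n$; (ii) applies a \emph{length filter} that keeps essentially only the subtraces containing the leftmost and rightmost \1 of the component's block, so the next level is an (almost) clean subproblem on a shorter substring; and (iii) uses thresholds obeying $\tau_{d+1}=\tilde{\Theta}(\sqrt{k\tau_d})$ with $\tau_1=\tilde{\Theta}(\sqrt{n})$. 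The filter's residual error is $\tilde{O}(\sqrt{k\tau_d})=\tau_{d+1}$, the same order as the next level's binomial deviation, which is exactly what keeps the clustering consistent; and the fixed point $\tau_D=\tilde{\Theta}(k)$ of this square-root recursion is the quantitative reason a gap of $ck\log n$ (rather than $\Omega(\sqrt{n\log n})$) suffices. None of this---the span reduction, the length filter, or the threshold recursion---appears in your argument, so the central claim that $\log\log n$ levels isolate each \1 under a $ck\log n$ gap is unsupported.

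Two further points on the debiasing step. The conditioning set cannot be an arbitrary ``prescribed configuration'': in the paper it is constructed bottom-up (the endpoint \1s of the first and last sub-blocks at each level of the hierarchy), and identifying which traces contain those \1s is only possible because the bottom-level clusters are pure---i.e., it presupposes the very isolation result your recursion has not established. Also, conditioned on those $\log n$ \1s being present, the position of the target \1 is not ``centered exactly at $q\cdot i_j$'': it is distributed as $(|S_L|+1)+\mathrm{Bin}(i_j-1-|S_L|,\,1/2)$, where $|S_L|$ counts the conditioned \1s lying to the left of the target, so the estimator must correct for this shift (easy, but it has to be stated).
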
 

In~\pref{sec:overview}, we present a high-level overview of the algorithm and the analysis to provide intuition. 
In~\pref{sec:algorithm} we describe the algorithm in detail, state the key lemmas, and explain how to set the parameters. 
Due to the technical nature of the analysis, full details, including
proofs of the lemmas, are deferred to~\pref{app:letsgettechnical}.

\subsection{A Recursive Hierarchical Clustering Algorithm and Its Analysis: Overview} \label{sec:overview}

Let $\{p_u\}_{u=1}^k$ denote the positions (index of the coordinate from the left) of the $k$ \1s in the
original string $x$. Let $\mathcal{N}$ denote the multi-set of all
positions of all received \1s and call $N = |\mathcal{N}|$. We will
construct a graph $G$ on $N$ vertices where every vertex is associated
with a received \1. We decorate each vertex $v$ with a number $z_v
\in \mathcal{N}$, which is the position of the associated received
\1. Each vertex $v$ also has an \emph{unknown} label $y_v \in
\{1,\ldots,k\}$ denoting the corresponding \1 in the
\emph{original} string.

At a high level, our approach uses the observed values $\{z_v\}_{v \in
  V}$ to recover the unknown labels $\{y_v\}_{v \in V}$. 
  Once this ``alignment'' has been performed, the original
string can be recovered easily, since the average of $\{z_v
\mathbf{1}\{y_v = u\}\}_{v \in V}$ is an unbiased estimator for $p_u/2$.

\paragraph*{A starting observation} 
Our first observation
is a simple fact about binomial concentration, which we will use to
define the edge set in $G$: by the Chernoff bound, with high
probability, for every vertex $v$, if $y_v = u$ then we must have
$|z_v - p_u/2|\leq c\sqrt{n \log n}$ for some constant $c$. 
Defining the edges in $G$ to be $\{(v,w) : |z_v - z_w| \leq
2c\sqrt{n\log n}\}$ then guarantees that all vertices with $y_v=u$ are
connected. This immediately yields an algorithm for the much stronger
gap condition $g \geq 4c\sqrt{n \log n}$, since with such separation,
no two vertices $v,w$ with $y_v \ne y_w$ will have an edge. Therefore,
the connected components reveal the labeling so that $\poly(n)$ traces
suffice with $g = \Omega(\sqrt{n \log n})$.

Intuitively, we have constructed a clustering of the received \1s that
corresponds to the underlying labeling. To tolerate a weaker gap
condition, we proceed recursively, in effect constructing a
\emph{hierarchical clustering}. However there are many subtleties that
must be resolved.

\paragraph*{The first recursion}
To proceed, let us consider the weaker gap condition of $g \geq
\tilde{\Omega}(k^{1/2}n^{1/4})$. In this regime, $G$ still maintains a
\emph{consistency} property that for each $u$ all vertices with $y_v =
u$ are in the same connected component, but now a connected component
may have vertices with different labels, so that each connected
component $C$ identifies a continguous set $U \subset \{1,\ldots,k\}$
of the original \1s. Moreover, due to the sparsity assumption, $C$
must have length, defined as $\max_{v \in C}z_v - \min_{v \in C}z_v$,
at most $O(k\sqrt{n \log n})$. Therefore if we can correctly identify
every trace that contains the left-most and right-most \1 in $U$, we
can recurse and are left to solve a subproblem of
length $O(k\sqrt{n \log n})$. Appealing to our starting observation,
this can be done with
a gap of $g
\geq \tilde{\Omega}(k^{1/2}n^{1/4})$.

The challenge for this step is in identifying every trace that
contains the left-most and right-most \1 in $U$, which we call $u_L$
and $u_R$ respectively. This is important for ensuring a ``clean''
recursion, meaning that the traces used in the subproblem are
generated by passing exactly the same substring through the deletion
channel. To solve this problem we use a device that we call a
\emph{Length Filter}. For every trace, consider the subtrace that
starts with the first received \1 in $U$ and ends with the last
received \1 in $U$ (this subtrace can be identified using $G$). If the
trace contains $u_L,u_R$ then the length of this subtrace is
$2+\textrm{Bin}(L-2,1/2)$ where $L$ is the distance
between $u_L, u_R$ in the original string. On the other hand, if the
subtrace does not contain both end points, then the length is
$2+\textrm{Bin}(L'-2, 1/2)$ where $L' \leq L -
g$. Since we know that $L \leq \tilde{O}(k\sqrt{n})$ and we are
operating with gap condition $g = \tilde{\Omega}(k^{1/2}n^{1/4}) =
\tilde{\Omega}(\sqrt{L})$, binomial concentration implies that with
high probability we can \emph{exactly} identify the subtraces
containing $u_L$ and $u_R$. 

\paragraph*{Further recursion}
The difficulty in applying a second recursive step is that when $g =
o(k^{1/2}n^{1/4})$ the length filter cannot isolate the subtraces that
contain the leftmost and rightmost \1s for a block $U$, so we cannot
guarantee a clean recursion. However, substrings that pass through the
filter are only missing a short prefix/suffix which upper bounds any
error in the indices of the received \1s.  We ensure consistency at
subsequent levels by incorporating this error into a more cautious
definition of the edge set (in fact the additional error is the same
order as the binomial deviation at the next level, so it has
negligible effect). In this way, we can continue the recursion until
we have isolated each \1 from the original string. The
$\Omega(k\log n)$ lower bound on run length arises since the gap at
level $t$ of the recursion, $g_t$, is related to the gap at level
$t-1$ via $g_{t} = \sqrt{k \log n \cdot g_{t-1}}$ with $g_1 = \sqrt{n \log
  n}$, and this recursion asymptotes at
$\Omega(k\log n)$.

The last technical challenge is that, while we can isolate each
original \1, the error in our length filter introduces some bias into
the recursion, so simply averaging the $z_v$ values of the clustered
vertices does not accurately estimate the original position. However,
since we have isolated each \1 into pure clusters, for any connected
component corresponding to a block of \1s, we can identify \emph{all}
traces that contain the first and last \1 in the block. Applying this
idea recursively from the bottom up allows us to debias the recursion
and accurately estimate all positions.

\subsection{The algorithm in detail: recursive hierarchical clustering}
\label{sec:algorithm}
We now describe the recursive process in more detail. Let us define
the thresholds:
\begin{align*}
&\tau_1 = \tilde{O}(n^{1/2}), \tau_2 = \tilde{O}(k^{1/2}n^{1/4}),\tau_3 = \tilde{O}(k^{3/4}n^{1/8}),\ldots, \tau_D = \tilde{O}(k^{1-1/2^{(D-1)}}n^{1/2^D}),
\end{align*}
which will be used in the length filter and in the definitions of the
edge set. Observe that with $D= O(\log_2 \log_2 n)$, we have $\tau_D =
\tilde{O}(k)$.  Let $\tilde{x}_1,\ldots,\tilde{x}_m$ denote the $m =
\poly(n)$ traces.  We will construct a sequence of graphs
$G_1,G_2,\ldots,G_D$ on the vertex sets $V_1\supset V_2,\ldots,
\supset V_D$, where each vertex $v$ corresponds to a received \1 in
some trace $t_v \in [m]$ and is decorated with its position $z_v$ and the unknown label $y_v$.
The $d^{\textrm{th}}$ round of the
algorithm is specified as follows with $z_v^{(1)} = z_v$, $V_1$ as
the multi-set of all received \1s {\color{black} and $C_1^{(0)} = V_1$}.
\begin{enumerate}
\item Define $G_d$ with edge set
    $E_d = \bigcup_j \{ (v,w): v,w \in V_d {\color{black} \cap C_j^{(d-1)}  } \textrm{ and } |z_v^{(d)} - z_w^{(d)}| \leq \tau_d \}$.
\item Extract $k_d \leq k$ connected components $C^{(d)}_1,\ldots,C^{(d)}_{k_d}$
  from $G_d$. 
\item For each connected component $C^{(d)}_i$, extract subtraces
  $\{\tilde{x}^{(d,i)}_j\}_{j=1}^m$ where $\tilde{x}^{(d,i)}_j$ is the
  substring of $\tilde{x}_j$ starting with the first \1 in $C^{(d)}_i$ and
  ending with the last \1 in $C^{(d)}_i$. Formally, with $\ell = \min\{z_v : v \in
    C^{(d)}_i, t_v = j\}$ and $r = \max\{z_v : v \in C^{(d)}_i, t_v = j\}$, we
  define $\tilde{x}_j^{(d,i)} = \tilde{x}_j[\ell,\ldots,r]$.
\item Length Filter: Define $L^{(d,i)} = \max_{j}
  \textrm{len}(\tilde{x}_j^{(d,i)})$. If
  \begin{align*}
    \textrm{len}(\tilde{x}_j^{(d,i)}) \leq L^{(d,i)} - \Omega(\sqrt{L^{(d,i)}\log(L^{(d,i)})}),
  \end{align*}
  delete all vertices $v \in C^{(d)}_i$ with $t_v = j$. Let $V_{d+1}$ be the multi-set of all surviving vertices. 
\item For $v \in V_{d+1} \cap C^{(d)}_i$, define $z_v^{(d+1)} = z_v - \min_{v'
  \in C^{(d)}_i, t_v = t_{v'}}z_{v'}$. 
\end{enumerate}

\begin{algorithm}[t]
\caption{Algorithm \texttt{RecurGap}}
\label{alg:recursive}
\begin{algorithmic}
\REQUIRE Traces $\Tc = \{\tilde{x}_j\}_{j=1}^m$, gap lower bound $g \ge ck\log^2(n)$, levels of recursion $D$.
\STATE For each received \1, create vertex $v$ decorated with $(z_v, t_v)$ where $z_v \in [n]$ is the position of the received \1 and $t_v \in [m]$ is the index of the trace. 
\STATE {\color{black} Define thresholds $\tau_1 = 4\sqrt{2n\log(nmk)}$, $\tau_d = 80\sqrt{k\tau_{d-1}\log(mnk)}$ for $d=2,\ldots,D$.}
\STATE Set $z_v^{(1)} = z_v$, $V_1 = C_1^{(0)} = V$ 
\FOR{$d=1,\ldots,D$:}
\STATE Create graph $G_d$ with vertices $V_d$ and with edges
\begin{align*}
E_d = \bigcup_j \Big\{(v,w) \in V_d {\color{black} \cap C_j^{(d-1)} }: |z_v^{(d)} - z_w^{(d)}| \leq \tau_d/4\Big\}
\end{align*}
\STATE Extract connected components $C^{(d)}_1,\ldots,C^{(d)}_{k_d}$ of $G_d$. 
\STATE For each connected component $C^{(d)}_i$, extract subtraces
$\{\tilde{x}_j^{(d,i)}\}_{j=1}^m$ where $\tilde{x}_j^{(d,i)}
= \tilde{x}_j[\ell,r]$ and $\ell = \min\{z_v: v \in C_i^{(d)}, t_v =
j\}$ and $r = \max\{z_v :v \in C_i^{(d)}, t_v = j\}$.
\STATE Define $L^{(d,i)} = \max_j \textrm{len}(\tilde{x}_j^{(d,i)})$. If
\begin{align*}
        \textrm{len}(\tilde{x}_j^{(d,i)}) \leq L^{(d,i)} - {\color{black} 2}\sqrt{2L^{(d,i)}\log(kmn)}, 
\end{align*}
delete all vertices $v \in C^{(d)}_i$ with $t_v = j$. Let $V_{d+1}$ be all surviving vertices.
\STATE For $v \in C_i^{(d)} \cap V_{d+1}$, define $z_{v}^{(d+1)} = z_v - \min\{z_{v'} : v' \in C_i^{(d)}, t_{v'} = t_v\}$. 
\ENDFOR
\end{algorithmic}
\end{algorithm}

{\color{black}See~\pref{alg:recursive} for pseudocode. We note that
  $z_v^{(d)}$ corresponds to a shifted index of the received \1
  associated with vertex $v$. Intuitively, we shift by removing a
  prefix of the trace $t_v$, which provides a form of noise reduction.}

We analyze the procedure via a sequence of lemmas. The first one
establishes a basic consistency property: that two \1s originating
from the same source \1 are always clustered together.

\begin{lemma}[Consistency]
\label{lem:consistency}
At level $d$ let $V_{d,u} = \{v \in V_d, y_v = u\}$ for each $u \in
[k]$. Then with high probability, for each $d$ and $u$ there exists
some component $C^{(d)}_i$ at level $d$ such that $V_{d,u} \subset
C^{(d)}_i$.
\end{lemma}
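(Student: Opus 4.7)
The plan is to prove the lemma by induction on the level $d \in \{1, \ldots, D\}$, strengthening the statement into a quantitative invariant: with probability $1 - \poly(n)^{-1}$ over the traces, for every $d$ and every $u \in [k]$ the shifted positions $\{z_v^{(d)} : v \in V_{d,u}\}$ all lie within a window of width at most $\tau_d/4$, where $V_{d,u} = \{v \in V_d : y_v = u\}$. This invariant immediately yields consistency, because all same-label vertices are in the same level-$(d-1)$ component by the induction, so the gating condition $v,w \in C_j^{(d-1)}$ in the definition of $E_d$ is satisfied, and the width bound makes them pairwise adjacent in $G_d$.

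For the base case ($d=1$), the raw position $z_v^{(1)} = z_v$ of a vertex $v$ with $y_v = u$ equals $1$ plus the number of retained characters in $x[1, p_u - 1]$, i.e., $1 + \textup{Bin}(p_u - 1, 1/2)$. Hoeffding's inequality together with a union bound over the polynomially many received \1s puts every such $z_v$ within $O(\sqrt{n \log(nmk)})$ of its mean with high probability, and the choice $\tau_1 = 4\sqrt{2n\log(nmk)}$ makes this window fit inside $\tau_1/4$.

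For the inductive step I fix $u$ and two vertices $v, w \in V_d$ with $y_v = y_w = u$. By induction they belong to the same level-$(d-1)$ component $C_j^{(d-1)}$ spanning a contiguous label range $[u_L, u_R]$. Writing $z_v^{(d)} - z_w^{(d)} = (z_v - z_w) - (\mu_v - \mu_w)$ with $\mu_v = \min\{z_{v'} : v' \in C_j^{(d-1)},\, t_{v'} = t_v\}$, I then need to control both differences. The key sub-claim is that the level-$(d-1)$ length filter forces, for every trace $t$ surviving into $V_d$, the leftmost received \1 from $C_j^{(d-1)}$ in trace $t$ to coincide with the true boundary \1 $u_L$ up to a controlled offset of $O(\sqrt{\tau_{d-1} \log(mnk)})$. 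The reason is that deleting $u_L$ removes a prefix of length at least $g/2 - O(\sqrt{g \log(mnk)})$ from the subtrace, which under the gap hypothesis strictly exceeds the filter slack $2\sqrt{2 L^{(d-1,j)} \log(kmn)}$; solving the recursion $\tau_d = 80\sqrt{k \tau_{d-1} \log(mnk)}$ shows $\tau_d = O(k \log(mnk))$ for all $d$, so $g = \Omega(k \log n)$ indeed dominates. With $\mu_v$ thus pinned to the trace-position of $u_L$ up to this offset, both $z_v - z_w$ and $\mu_v - \mu_w$ are differences of binomials of range $O(\tau_{d-1})$, and a second application of Hoeffding yields $|z_v^{(d)} - z_w^{(d)}| = O(\sqrt{\tau_{d-1} \log(mnk)}) \leq \tau_d/4$ by the choice of $\tau_d$.

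Finally, a union bound over the $D = O(\log \log n)$ levels, the $k$ labels, and the $O((mk)^2)$ pairs drives the overall failure probability to $\poly(n)^{-1}$. The main obstacle, as expected, is the length-filter analysis inside the inductive step: making precise that surviving traces retain the true boundary \1s of a component up to a small error, and propagating this error through the levels without it swamping the Hoeffding window at the next level. This is exactly why the gap condition is chosen tight enough to dominate the per-level slack, and why the thresholds $\tau_d$ are defined via $\tau_d = 80\sqrt{k\tau_{d-1}\log(mnk)}$ so that the recursion closes at scale $O(k \log(mnk))$ after $O(\log \log n)$ steps.
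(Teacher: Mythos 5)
Your overall architecture — induction over levels with the invariant that same-label shifted positions lie within $\tau_d/4$ of each other, a Hoeffding base case, and a need to control the bias of the length filter — is the same as the paper's, and the final arithmetic would close. But the justification of your key sub-claim is wrong, and it fails at exactly the point the whole multi-level construction exists to handle. You assert that solving $\tau_d = 80\sqrt{k\tau_{d-1}\log(mnk)}$ gives $\tau_d = O(k\log(mnk))$ ``for all $d$,'' and conclude that the gap $g=\Omega(k\log n)$ exceeds the filter slack $2\sqrt{2L^{(d-1,j)}\log(kmn)}$, so that any surviving trace must (essentially) contain the boundary \1 $u_L$. That is false: the recursion starts at $\tau_1=\Theta(\sqrt{n\log(nmk)})$, so $\tau_d = \tilde\Theta\bigl(k^{1-1/2^{d-1}}n^{1/2^d}\bigr)$, which reaches $\tilde O(k)$ only after $D=\Theta(\log\log n)$ levels and is polynomially large before that. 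Since $L^{(d-1,j)}$ can be as large as $2k\tau_{d-1}$ (\pref{lem:length_bd}), the filter slack is itself $\Theta(\tau_d)\gg k\log n$ at early levels, so the gap does not dominate it, and traces missing $u_L$ (indeed missing several leading \1s) can pass the filter. This is precisely the obstruction the paper flags in its overview: for $g = o(k^{1/2}n^{1/4})$ the length filter cannot isolate the subtraces containing the leftmost and rightmost \1s of a block. Note also that if your gap-versus-slack argument were valid, the filter would be exact at every level and the $\log\log n$-deep recursion (and the later de-biasing step) would be unnecessary — a sign the argument proves too much.

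The repair is the paper's \pref{lem:filter}, which makes no appeal to the gap: for a surviving trace, comparing its subtrace length against $L^{(d-1,i)}$ (which is at least $(p_{u_R}-p_{u_L})/2$ with overwhelming probability) yields the bias bound $p_{\min}(v,U)-p_{u_L}\le 8\sqrt{k\tau_{d-1}\log(nmk)}$. Observe this bound carries a $\sqrt{k}$ that your stated offset $O(\sqrt{\tau_{d-1}\log(mnk)})$ drops; likewise the relevant binomials have range $O(k\tau_{d-1})$, not $O(\tau_{d-1})$, again by \pref{lem:length_bd}. These missing $\sqrt{k}$ factors happen to be absorbed by the definition of $\tau_d$, so they are sloppiness rather than fatal, but the reliance on the gap is a genuine gap in the proof. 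With the correct bias bound, each $z_v^{(d)}$ with $y_v=u$ lies within $\sqrt{2k\tau_{d-1}\log(mnk)}+8\sqrt{k\tau_{d-1}\log(mnk)}\le\tau_d/8$ of the common center $(p_u-p_{u_{i,1}^{(d-1)}})/2$, so any two same-label vertices differ by at most $\tau_d/4$, are adjacent in $G_d$ (they share the level-$(d-1)$ component by induction), and hence fall in one component — which is the consistency claim.
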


The next lemma provides a length upper bound on any component, which
is important for the recursion. At a high level since we are using a
threshold $\tau_d$ at level $d$ and the string is $k$-sparse, no
connected component can span more than $k\tau_d$ positions.
\begin{lemma}[Length Bound]
\label{lem:length_bd}
At level $d$, the following holds with probability at least $1-1/n^2$:
For every component $C_i^{(d)}$ at level $d$, we have $L^{(d,i)} \leq
2k\tau_d$. Moreover if $U$ is a contiguous subsequence of
$\{1,\ldots,k\}$ with $\bigcup_{u \in U} V_{d,u} \subset C_i^{(d)}$,
then $| \min_{u \in U} p_u - \max_{u \in U} p_u | \leq 2k\tau_d$.
\end{lemma}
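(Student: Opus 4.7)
The plan is to prove both conclusions of the lemma simultaneously by induction on the level $d$, with an auxiliary invariant: for every vertex $v \in C_i^{(d)}$ with underlying label $y_v = u$, the shifted coordinate $z_v^{(d)}$ concentrates within $\tau_d/4$ of a cluster center $\mu_u$ that depends only on $u$ and on the level-$(d-1)$ ancestor component. Once this invariant is in hand, both assertions of the lemma reduce to a common ``no-large-gap between adjacent clusters'' argument: if two consecutive labels $u, u+1 \in [k]$ both contribute vertices to the same level-$d$ component $C_i^{(d)}$, their $z^{(d)}$-clusters must be bridgeable through a chain of edges of length $\le \tau_d/4$, which together with the $\tau_d/2$ cluster width forces $\mu_{u+1} - \mu_u \le 3\tau_d/4$ and hence $p_{u+1} - p_u \le 3\tau_d/2$. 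Telescoping over the at most $k-1$ consecutive gaps in any contiguous $U$ contained in the component gives the span bound $\max_{u \in U} p_u - \min_{u \in U} p_u \le 2k\tau_d$, and since all trace-$j$ vertices in the component have $z$-values within $\pm \tau_d/4$ of a common offset plus their $\mu_u$'s, the length bound $L^{(d,i)} \le k\tau_d + \tau_d/2 \le 2k\tau_d$ follows immediately (after observing that the label set of $C_i^{(d)}$ is itself contiguous, which is an easy consequence of the same cluster argument).

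For the base case $d=1$, $z_v^{(1)} = z_v$ is $1 + \bin(p_u - 1, 1/2)$ whenever $y_v = u$, so Hoeffding with a union bound over the $\le mk$ received \1s gives $|z_v - p_u/2| \le \sqrt{2n\log(nmk)} = \tau_1/4$ with probability $\ge 1 - n^{-3}$. Setting $\mu_u = p_u/2$, the cluster argument above yields both bounds at level $1$. For the inductive step, fix a level-$(d-1)$ component $C_j^{(d-1)}$ with inductive span $\le 2k\tau_{d-1}$. After the length filter, each surviving subtrace has length within $2\sqrt{2L^{(d-1,j)}\log(kmn)} = O(\sqrt{k\tau_{d-1}\log(kmn)})$ of $L^{(d-1,j)}$, which in turn forces the leftmost retained \1 of every surviving trace $t$ inside the component to correspond to a label $u_L(t)$ with $p_{u_L(t)} - p_{u_L} = O(\sqrt{k\tau_{d-1}\log(kmn)})$. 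Then $z_v^{(d)} = z_v - z_{v_L}$ is distributed as $\bin(p_u - p_{u_L(t)}, 1/2)$, and a Hoeffding bound using the inductive span on $p_u - p_{u_L(t)}$ gives $|z_v^{(d)} - (p_u - p_{u_L})/2| \le \tau_d/4$ uniformly, provided $\tau_d \ge 80\sqrt{k\tau_{d-1}\log(mnk)}$, which is precisely the algorithm's choice. Setting $\mu_u = (p_u - p_{u_L})/2$, the cluster argument repeats verbatim. A final union bound over the $D = O(\log\log n)$ levels, the $\le k$ components per level, and the $\poly(n)$ traces delivers the required probability $1 - n^{-2}$.

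The main obstacle is the trace-dependent bias $p_{u_L(t)} - p_{u_L}$ that the length filter admits: different surviving traces effectively re-origin $z_v^{(d)}$ at slightly different labels, and a priori this could shatter the cluster picture by introducing trace-dependent shifts between $\bin(p_u - p_{u_L(t)}, 1/2)$ and the idealized center $(p_u - p_{u_L})/2$. The resolution is to check that this bias is of order $\sqrt{k\tau_{d-1}\log(mnk)} = \tau_d/80$, strictly smaller than the edge threshold $\tau_d/4$, so clusters remain well-separated even with trace-dependent origins and the ``no-large-gap'' argument goes through with a harmless constant-factor loss. (This same bias is precisely what the debiasing step sketched in the overview eventually removes for the final position-estimation step.) The remainder of the argument is routine Hoeffding-plus-union-bound bookkeeping, all of whose overheads are absorbed into the $\log(mnk)$ factor already appearing in the definition of $\tau_d$.
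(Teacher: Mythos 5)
Your proposal is correct and follows essentially the same route as the paper's proof: induction over levels, with Hoeffding concentration of the shifted coordinates $z_v^{(d)}$ around per-label centers (the error being the sum of the binomial deviation and the length-filter bias $O(\sqrt{k\tau_{d-1}\log(mnk)})$ imported from \pref{lem:filter} at the previous level), followed by a chaining argument over the at most $k$ label clusters connected by edges of length at most $\tau_d/4$ to obtain the $2k\tau_d$ bound. The only differences are cosmetic (explicit cluster centers $\mu_u$ and a telescoping formulation, with slightly looser constants than the paper's $\tau_d/8$ radius), so no further comment is needed.
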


Finally we characterize the length filter.
\begin{lemma}[Length Filter]
\label{lem:filter}
Assume $m \geq n$. At level $d$, the following holds with probability
at least $1-1/n^2$: 
For a component $C_i^{(d)}$ at level $d$, let $U$ be the maximal
contiguous subsequence of $\{1,\ldots,k\}$ such that $\bigcup_{u \in
  U} V_{d,u} \subset C_i^{(d)}$. Define $u_L = \argmin_{u \in U} p_u$
and $u_R = \argmax_{u \in U} p_u$. Then for any $v \in C_i^{(d)}$, if
$u_L$ and $u_R$ are present in $t_v$, then $v$ survives to round
$d+1$, that is $v \in V_{d+1}$. Moreover, for any $v \in V_{d+1}$, let
$p_{\min}(v,U)$ denote the original position of the first \1 from $U$
that is also in the trace $t_v$. Then we have $p_{\min}(v,U) - p_{u_L}
\leq 8\sqrt{k\tau_d\log(nmk)}$.
\end{lemma}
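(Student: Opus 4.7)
The plan is to derive both claims from binomial concentration of subtrace lengths combined with the bound $L^{(d,i)} \le 2k\tau_d$ from \pref{lem:length_bd}. Let $q=1/2$ be the retention probability. For each trace $j$ and component $C_i^{(d)}$, let $a_j$ and $b_j$ denote the smallest and largest labels of $C_i^{(d)}$ retained in trace $j$; then, conditional on the retention pattern, $\mathrm{len}(\tilde{x}_j^{(d,i)}) = 2 + \mathrm{Bin}(p_{b_j}-p_{a_j}-1,q)$, so the length has mean close to $q(p_{b_j}-p_{a_j})$ and standard deviation $O(\sqrt{p_{b_j}-p_{a_j}})$. A Chernoff bound together with a union bound over all $mk$ (trace, component) pairs shows that, with probability at least $1-1/n^2$, every subtrace length is within $c\sqrt{(p_{b_j}-p_{a_j})\log(nmk)}$ of its mean for a suitable absolute constant $c$; call this good event $\mathcal{E}$. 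Since $m\ge n$ and both $u_L,u_R$ are retained in any single trace with probability $q^2 = 1/4$, on $\mathcal{E}$ every component also contains at least one \emph{full} trace, i.e.\ one retaining both $u_L$ and $u_R$.

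Work henceforth on $\mathcal{E}$ and fix a component $C_i^{(d)}$ whose labels form $U = \{u_L,\ldots,u_R\}$; set $\hat L := p_{u_R}-p_{u_L} \le 2k\tau_d$. Because every trace has $a_j \ge u_L$ and $b_j \le u_R$ (assuming $U$ is the full label set of the component; if the component contains additional contiguous blocks the argument goes through unchanged with $L^{(d,i)}$ controlled by the full extent of the component instead), no subtrace exceeds $q\hat L + O(\sqrt{\hat L\log(nmk)})$, and a full trace attains at least $q\hat L - O(\sqrt{\hat L\log(nmk)})$, so $L^{(d,i)}$ lies within this band around $q\hat L$. For claim~(a), any full trace $j$ then has $L^{(d,i)} - \mathrm{len}(\tilde{x}_j^{(d,i)}) = O(\sqrt{\hat L \log(nmk)})$, which is below the filter threshold $2\sqrt{2L^{(d,i)}\log(kmn)}$ for a suitable choice of the hidden constant, so the trace survives to round $d+1$.

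For claim~(b), take $v \in V_{d+1}$ with $t_v = j$ and let $u' \in U$ be the smallest label of $U$ retained in trace $j$, so $p_{\min}(v,U) = p_{u'}$ and $a_j = u'$. Then $\mathrm{len}(\tilde{x}_j^{(d,i)}) \le q(p_{u_R}-p_{u'}) + O(\sqrt{\hat L\log(nmk)})$, which combined with the lower bound $L^{(d,i)} \ge q\hat L - O(\sqrt{\hat L\log(nmk)})$ and the survival condition yields
\[
q(p_{u'}-p_{u_L}) \le 2\sqrt{2L^{(d,i)}\log(kmn)} + O\!\left(\sqrt{\hat L\log(nmk)}\right) \le 4\sqrt{k\tau_d\log(nmk)},
\]
and dividing by $q=1/2$ gives $p_{\min}(v,U) - p_{u_L} \le 8\sqrt{k\tau_d\log(nmk)}$ after absorbing constants.

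The main technical difficulty will be constant bookkeeping: the single filter threshold $2\sqrt{2L^{(d,i)}\log(kmn)}$ must be \emph{simultaneously} loose enough to retain every full trace (claim~(a)) and tight enough to force the $8\sqrt{k\tau_d\log(nmk)}$ bias bound on surviving partial traces (claim~(b)). The constants in the Chernoff tail, in the filter threshold, and in the definition of $\tau_d$ must all be matched consistently, since any slack propagates through the $D=O(\log\log n)$ recursive levels, and it is precisely the compounding of these constants that makes the gap requirement asymptote at $\Omega(k\log n)$.
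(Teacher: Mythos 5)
Your proposal is correct and follows essentially the same route as the paper's proof: two-sided concentration of subtrace lengths for traces containing both $u_L$ and $u_R$, a one-sided upper bound for surviving traces missing $u_L$, a lower bound on $L^{(d,i)}$ coming from traces that retain both endpoints, and the bound $L^{(d,i)} \le 2k\tau_d$ from the Length Bound lemma to turn everything into the $8\sqrt{k\tau_d\log(nmk)}$ form. The only cosmetic differences are that the paper lower-bounds $L^{(d,i)}$ by exactly $(p_{u_R}-p_{u_L})/2$ via a symmetry-of-the-binomial argument over all $m \ge n$ traces rather than via concentration for a single full trace (the existence of such a trace is a separate high-probability event that should be added to your good event rather than attributed to $\mathcal{E}$), and your constant bookkeeping is left about as loose as the paper's own.
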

The lemmas are all interconnected and proved formally
in~\pref{app:letsgettechnical}.  It is important that the error
incurred by the length filter is $\sqrt{k\tau_d} = \tau_{d+1}$ which
is exactly the binomial deviation at level $d+1$. Thus the threshold
used to construct $G_{d+1}$ accounts for both the length filter error
and the binomial deviation. This property, established
in~\pref{lem:filter}, is critical in the proof
of~\pref{lem:consistency}.

For the hierarchical clustering, observe that after $D = \log \log n$
iterations, we have $\tau_D = \tilde{O}(k)$. With gap condition $g =
\tilde{\Omega}(k)$ and applying~\pref{lem:consistency}, this means
that the connected components at level $D$ each correspond to exactly
one \1 in the original string. Moreover since the length filter
preserves every trace containing the left-most and right-most \1 in
the component, the probability that a subtrace passes through the
length filter is at least $1/4$. Hence, after $\log \log n$ levels,
the expected number of surviving traces in each cluster is $m/4^{\log
  \log n} = m/(\log^2n)$. Thus for each index $u \in \{1, \ldots, k\}$
corresponding to a $\1$ in the original string, our recursion
identifies at least $m/(\log^2n)$ vertices $v \in V_1$ such that $t_v
= u$.

\paragraph*{Removing Bias} 
The last step in the algorithm is to overcome {\color{black} the} bias introduced by the
length filter. The de-biasing process works upward from the bottom of
the recursion. Since we have isolated the vertices corresponding to
each \1 in the original string, for a component $C_i^{(D-1)}$ at level
$D-1$, we can identify all subtraces that survived to this level that
contain the first and last \1 of the corresponding block $U_i^{(D-1)}
\subset[k]$. Thus, we can eliminate all subtraces that erroneously
passed this length filter.

Working upwards, consider a component $C_i^{(d)}$ that corresponds to
a block $U_i^{(d)} \subset [k]$ of \1s in the original string. Since
we have performed further clustering, we have effectively partitioned
$U_i^{(d)}$ into sub-blocks $U_1^{(d+1)},\ldots,U_s^{(d+1)}$. We would
like to identify exactly the subtraces that survived to level $d$ that
contain the first and last \1 of $U_i^{(d)}$, but unfortunately this
is not possible due to a weak gap condition. However, by induction, we
\emph{can exactly} identify all subtraces that survive to level $d$
that contain the first and last \1 of the first and last sub-block of
$U_i^{(d)}$, namely $U_1^{(d+1)}$ and $U_s^{(d+1)}$. Thus we can
de-bias the length filter at level $d$ by filtering based on a more
stringent event, namely the presence of the $2^{D-d}$ nodes {\color{black} required to de-bias the first \emph{and} last blocks $U_1^{(d+1)}$ and $U_s^{(d+1)}$}. In total to
de-bias all length filters above a particular component, we require
the presence of $\sum_{d=1}^D 2^{D-d} = O(2^D) = O(\log n)$ nodes,
which happens with probability $\Omega(1/n)$. Thus we can debias with
only a polynomial overhead in sample complexity. 
See~\pref{fig:debias} for an illustration.

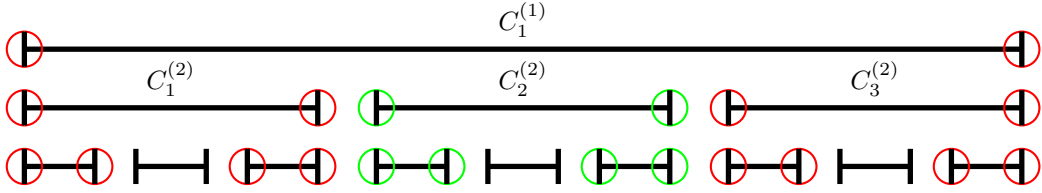
\begin{figure*}
\begin{center}
\begin{tikzpicture}[thick, scale=0.78]
 \filldraw[color=black, fill=red!0,  line width=2pt] (8.5,5) -- (25.5,5);
 \draw[color=red] (8.5, 5) circle (.3);
 \node[draw=none] at (17,5.5) {$C^{(1)}_{1}$};
 \node[draw=none] at (11,4.5) {$C^{(2)}_{1}$};
 \node[draw=none] at (17,4.5) {$C^{(2)}_{2}$};
 \node[draw=none] at (23,4.5) {$C^{(2)}_{3}$};
 \draw[color=red] (25.5, 5) circle (.3);
 \filldraw[color=black, fill=red!0,  line width=2pt] (8.5,4) -- (13.5,4);
 \draw[color=red] (8.5, 4) circle (.3);
 \draw[color=red] (13.5, 4) circle (.3);
 \filldraw[color=black, fill=red!0,  line width=2pt] 
(13.5,3.7) -- (13.5,4.3); 
 \filldraw[color=black, fill=red!0,  line width=2pt] (14.5,4) -- (19.5,4);
 \draw[color=green] (14.5, 4) circle (.3);
 \draw[color=green] (19.5, 4) circle (.3);
 \filldraw[color=black, fill=red!0,  line width=2pt] 
(14.5,3.7) -- (14.5,4.3); 
\filldraw[color=black, fill=red!0,  line width=2pt] 
(19.5,3.7) -- (19.5,4.3); 
 \filldraw[color=black, fill=red!0,  line width=2pt] (20.5,4) -- (25.5,4);
 \draw[color=red] (20.5, 4) circle (.3);
 \draw[color=red] (25.5, 4) circle (.3);
 \filldraw[color=black, fill=red!0,  line width=2pt] 
(20.5,3.7) -- (20.5,4.3); 
 \filldraw[color=black, fill=red!0,  line width=2pt] (8.5,3) -- (9.7,3);
 \draw[color=red] (8.5, 3) circle (.3);
 \draw[color=red] (9.7, 3) circle (.3);
 \filldraw[color=black, fill=red!0,  line width=2pt] 
(8.5,2.7) -- (8.5,3.3); 
 \filldraw[color=black, fill=red!0,  line width=2pt] 
(9.7,2.7) -- (9.7,3.3); 
\filldraw[color=black, fill=red!0,  line width=2pt] 
(10.4,3) -- (11.6,3);
 \filldraw[color=black, fill=red!0,  line width=2pt] 
(10.4,2.7) -- (10.4,3.3); 
  \filldraw[color=black, fill=red!0,  line width=2pt] 
(11.6,2.7) -- (11.6,3.3); 
\filldraw[color=black, fill=red!0,  line width=2pt] 
(12.3,3) -- (13.5,3);
\draw[color=red] (12.3, 3) circle (.3);
\draw[color=red] (13.5, 3) circle (.3);
 \filldraw[color=black, fill=red!0,  line width=2pt] 
(12.3,2.7) -- (12.3,3.3); 
  \filldraw[color=black, fill=red!0,  line width=2pt] 
(13.5,2.7) -- (13.5,3.3); 

 \filldraw[color=black, fill=red!0,  line width=2pt] (14.5,3) -- (15.7,3);
 \draw[color=green] (14.5, 3) circle (.3);
 \draw[color=green] (15.7, 3) circle (.3);
  \filldraw[color=black, fill=red!0,  line width=2pt] 
(14.5,2.7) -- (14.5,3.3); 
 \filldraw[color=black, fill=red!0,  line width=2pt] 
(15.7,2.7) -- (15.7,3.3); 
\filldraw[color=black, fill=red!0,  line width=2pt] 
(16.4,3) -- (17.6,3);
 \filldraw[color=black, fill=red!0,  line width=2pt] 
(16.4,2.7) -- (16.4,3.3); 
 \filldraw[color=black, fill=red!0,  line width=2pt] 
(17.6,2.7) -- (17.6,3.3); 
\filldraw[color=black, fill=red!0,  line width=2pt] 
(18.3,3) -- (19.5,3);
\draw[color=green] (18.3, 3) circle (.3);
\draw[color=green] (19.5, 3) circle (.3);
  \filldraw[color=black, fill=red!0,  line width=2pt] 
(18.3,2.7) -- (18.3,3.3); 
 \filldraw[color=black, fill=red!0,  line width=2pt] 
(19.5,2.7) -- (19.5,3.3); 
 \filldraw[color=black, fill=red!0,  line width=2pt] (20.5,3) -- (21.7,3);
 \draw[color=red] (20.5, 3) circle (.3);
 \draw[color=red] (21.7, 3) circle (.3);
  \filldraw[color=black, fill=red!0,  line width=2pt] 
(20.5,2.7) -- (20.5,3.3); 
 \filldraw[color=black, fill=red!0,  line width=2pt] 
(21.7,2.7) -- (21.7,3.3); 
\filldraw[color=black, fill=red!0,  line width=2pt] 
(22.4,3) -- (23.6,3);
 \filldraw[color=black, fill=red!0,  line width=2pt] 
(22.4,2.7) -- (22.4,3.3); 
  \filldraw[color=black, fill=red!0,  line width=2pt] 
(23.6,2.7) -- (23.6,3.3); 
\filldraw[color=black, fill=red!0,  line width=2pt] 
(24.3,3) -- (25.5,3);
\draw[color=red] (24.3, 3) circle (.3);
\draw[color=red] (25.5, 3) circle (.3);
 \filldraw[color=black, fill=red!0,  line width=2pt] 
(24.3,2.7) -- (24.3,3.3);
 \filldraw[color=black, fill=red!0,  line width=2pt] 
(25.5,2.7) -- (25.5,3.3); 
\filldraw[color=black, fill=red!0,  line width=2pt] 
(8.5,4.7) -- (8.5,5.3); 
\filldraw[color=black, fill=red!0,  line width=2pt] 
(25.5,4.7) -- (25.5,5.3); 
\filldraw[color=black, fill=red!0,  line width=2pt] 
(8.5,3.7) -- (8.5,4.3); 
\filldraw[color=black, fill=red!0,  line width=2pt] 
(25.5,3.7) -- (25.5,4.3); 
\end{tikzpicture}
\end{center}
\caption{
De-biasing of traces. The figure displays the regions of the original
string $x$ that correspond to each connected component found in the
algorithm. The end-points of each component correspond to \1s in the
original string. To de-bias the length filter for component
$C_1^{(1)}$ at level $1$, we identify and retain only the traces that
contain \emph{all} of the \1s colored red above. Then, to de-bias the
length filter at $C_2^{(2)}$ at level $2$, we identify and retain only
the traces that contain \emph{all} of the green \1s.}
\label{fig:debias}
\end{figure*}

\section{Applications of the Well-Separated Strings Result and Methodology}
\label{sec:wellsepapplications}

In this section, we present two applications of the results and methodology developed in the previous section. 

\subsection{Strengthening to a Parameterization by Runs} \label{sec:runs}
We next strengthen~\pref{thm:gaps_intro} to show that $\poly(n)$ traces suffice
under the assumption that each \0-run has length
$\tilde{\Omega}(r)$ where $r=1+|\{i\in [n-1]: x_i\neq x_{i+1}\}|$, in
the string $x$ being reconstructed. Observe that this is a weaker assumption than assuming $x$ has sparsity $k$ and each one is separated by a 0-run of length $\tilde{\Omega}(k)$ , since $r\leq 2k+1$ always, but $r$
can be much less than $k$.

\begin{theorem}
\label{thm:sparsity_with_runs}
For the $(1/2,1/2)$-Deletion Channel, $\poly(n)$ traces suffice {\color{black}with high probability} if the
lengths of the \0-runs are $\tilde{\Omega}(r)$ where $r$ is the number
of runs in $x$.
\end{theorem}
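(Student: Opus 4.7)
The plan is to adapt the \texttt{RecurGap} algorithm and its analysis from the proof of \pref{thm:gaps_intro}, treating \1-runs rather than individual \1s as the basic objects. The crucial observation is that within a single \1-run of the original string, the corresponding surviving \1s in a trace always stay contiguous (deletions can shorten a \1-run but never split it into two runs separated by a \0). Thus each \1-run in $x$ gives rise to at most one \1-run in each trace. Since there are at most $\lceil r/2\rceil$ \1-runs in $x$, the effective ``sparsity'' that controls the recursion is $r$, not the true weight $k$, so the gap requirement should degrade from $\tilde\Omega(k)$ to $\tilde\Omega(r)$.

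First, I would pre-process each trace $\tilde{x}_j$ to identify its maximal \1-runs; each run becomes a single vertex decorated with the position of its first \1 and with the length of the run. I would then run a variant of \pref{alg:recursive} on these vertices, with thresholds $\tau_1 = \tilde{O}(\sqrt{n})$, $\tau_2 = \tilde{O}(\sqrt{r}\,n^{1/4}), \ldots, \tau_D = \tilde{O}(r)$, i.e., with $r$ replacing $k$ throughout, and running $D=O(\log\log n)$ levels so that $\tau_D = \tilde O(r)$ is below the minimum \0-run length.

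Next, I would re-prove the three structural lemmas (\pref{lem:consistency}, \pref{lem:length_bd}, \pref{lem:filter}) in this setting. Consistency follows from the same Chernoff bound: a \1-run starting at position $p$ in $x$ appears in any trace that retains at least one of its \1s at a position within $p/2 \pm O(\sqrt{n\log n})$, so vertices originating from the same source \1-run always lie within $\tau_1$, and the recursive step is unchanged. The length bound only uses the combinatorial fact that a cluster contains vertices from a contiguous block of source \1-runs, and since there are at most $r$ such runs the cluster spans at most $r\tau_d$ positions in any trace. The length filter analysis is unchanged because it reasons only about the length of the substring between the leftmost and rightmost \1 of a cluster, which is a $\text{Bin}(L,1/2)$ quantity regardless of internal structure. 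The de-biasing step then gives an unbiased estimate of the starting position of each source \1-run using $\poly(n)$ traces.

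Finally, after the recursion has isolated each source \1-run, I would estimate its length $\ell_u$ by averaging the lengths of the corresponding \1-runs in the (de-biased) surviving traces: each observed length is distributed as $\text{Bin}(\ell_u, 1/2)$, so with $\poly(n)$ samples the sample mean pins down $\ell_u$ exactly with high probability. Combined with the starting-position estimates from the de-biased recursion, this reconstructs $x$. The main obstacle is verifying that the existing thresholds and concentration arguments still give the required separations when \1-runs have wildly varying lengths; the key point to check is that the Chernoff concentration governing the position of a run's first surviving \1 depends only on its origin $p \le n$ and not on the length of the run, so vertices from long and short runs can still be compared on the same scale and the $\tilde\Omega(r)$ gap suffices to separate different source \1-runs at every level of the recursion.
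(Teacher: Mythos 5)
Your overall strategy is essentially the paper's: treat \1-runs as single units so that the effective sparsity is $r$, run the recursive clustering of \pref{thm:gaps_intro} at sparsity $r$, and then recover each run's length from the corresponding observed runs. (The paper does this by literally replacing every maximal \1-run in every trace by a single \1, observing that the collapsed traces are distributed as traces of the collapsed string $x'$ through a channel that deletes \0s with probability $1/2$ and the $u$-th \1 with probability $2^{-t_u}\le 1/2$, arguing that the sparse analysis tolerates heterogeneous \1-deletion probabilities, and finally reading off the \1-run lengths from the pre-collapse traces.)

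There is, however, a genuine gap: your ``crucial observation'' rules out only splitting, not merging. A single maximal \1-run observed in a trace can originate from two or more distinct \1-runs of $x$, namely whenever an entire \0-run separating them is deleted. Your pre-processing, the vertex decorations (first-\1 position and run length), the run-level analogues of \pref{lem:consistency}, \pref{lem:length_bd} and \pref{lem:filter}, the de-biasing, and the final length estimation all implicitly assume that each observed run comes from exactly one source run, and nothing in your argument ensures this. The paper handles it explicitly: it conditions on the event that at least one bit of every \0-run of $x$ survives in every trace, which holds with high probability only because each \0-run has length at least $c\log n$ for a sufficiently large constant $c$ (union bound over $\poly(n)$ traces and at most $n$ runs), and it disposes of the regime $r=O(\log n)$ by a separate trivial argument (with probability $2^{-r}=1/\poly(n)$ a trace retains a bit of every run, and then the $i$-th run of such a trace aligns with the $i$-th run of $x$). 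You need to add this no-merge event, together with either the small-$r$ case or an explicit requirement that the gap be at least a large constant times $\log n$, before any of your lemmas can be re-proved for run-vertices. A smaller issue: conditioned on a run being observed at all, its observed length is $\bin(\ell_u,1/2)$ conditioned on being at least $1$, not $\bin(\ell_u,1/2)$; for $\ell_u=1$ the ``round twice the sample mean'' estimator outputs $2$. This is easily repaired by inverting the conditional mean $(\ell_u/2)/(1-2^{-\ell_u})$, which is strictly increasing in $\ell_u$ with consecutive values separated by a constant, but as written the final estimation step is incorrect for short runs.
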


The proof is via a reduction to the
$k$-sparse case in the previous sections.
%
Let $x'\in \{0,1\}^{<n}$ be the string formed by replacing every run
of \1s in $x$ by a single \1. We first argue that we can reconstruct
$x'$ with high probability using $\poly(n)$ traces generated by
applying the $(1/2,1/2)$-Deletion Channel to $x$.
We will prove this result for the case $r=\Omega(\log n)$ since
otherwise $\poly(n)$ traces is sufficient even with no gap
promise.\footnote{Specifically, if $r=O(\log n)$, with probability at
  least $1/2^r=1/\poly(n)$ a trace also has $r$ runs. Given
  $\poly(n)$ traces with $r$ runs we can estimate each run length because we know the $i^\textrm{th}$ run in each such trace
  corresponds to the $i^{\textrm{th}}$ run in the original string.}
Observe that with $m=\poly(n)$ traces, if every \0-run in $x$ has
length at least $c\log n$ for some sufficiently large constant $c>0$,
then a bit in every \0-run of $x$ appears in every trace with high
probability. Conditioned on this event, no two \1's that originally
appeared in different runs of $x$ are adjacent in any trace. Next
replace each run of \1s in each trace with a single \1. The end result
is that we generate traces that are generated as if we had deleted
each \0 in $x'$ with probability $1/2$ and each \1 in $x'$ with
probability $1-1/2^t\geq 1/2$ where $t$ is the length of the run that
the \1 belonged to in $x$. This channel is not equivalent to the
$(1/2,1/2)$-Deletion channel, but our analysis for the sparse case ({\color{black}that only depends on the alignment of \1s using the deletion properties of the \0s})
continues to hold even if the deletion probability of each \1 is
different. Thus we can apply~\pref{thm:gaps_intro} to recover $x'$, and the
sparsity of $x'$ is at most $r$.
Since the algorithm identifies corresponding \1s in $x'$ in the
different traces, we can then estimate the length of the \1-runs in
$x$ that were collapsed to each single $\1$ of $x'$ by looking at the
lengths of the corresponding \1-runs in the traces of $x$ before they
were collapsed.


\subsection{Reconstruction of random sparse strings with polynomial traces}
Suppose we have an unknown string $x \in \{0,1\}^n$ such that every element of $x$ is sampled uniformly and independently according to $\textrm{Ber}(\eta)$ for some sufficiently small $\eta$. Again, we send $x$ through the deletion channel where every bit is deleted with probability $1/2$ and observe random traces. We have the following theorem characterizing the sufficient number of traces required to recover $x$.

\begin{theorem}
$\mathsf{poly}(n)$ traces are sufficient to recover $x \in \{0,1\}^n$ with high probability if every element of $x$ is drawn randomly according to $\textrm{Ber}(\eta)$ for $\eta \leq c/\sqrt{n\log n}$ where $c>0$ is some small constant.
\end{theorem}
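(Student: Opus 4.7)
My plan is to reduce the problem to a combination of the well-separated sequence reconstruction result (\pref{thm:gaps_intro}) and the sparsity result (\pref{thm:sparsity_intro}), using a super-block decomposition of $x$. First, a standard Chernoff bound on $\sum_i x_i$ shows that with probability $1-o(1)$ over $x \sim \textrm{Ber}(\eta)^n$, the number of \1s is $k := |\{i : x_i = 1\}| \leq 2n\eta = O(\sqrt{n/\log n})$. I would then define a \emph{super-block} of $x$ to be a maximal set of consecutive \1s such that adjacent \1s within it are at distance at most $B := Ck\log n = O(\sqrt{n\log n})$, where $C$ is a large constant chosen to dominate the gap constant in \pref{thm:gaps_intro}. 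Using independence and a union bound, I would verify two structural properties: (i) the probability that a fixed inter-\1 gap in $x$ is $\leq B$ is at most $B\eta = O(Cc^2)$, which is a small constant for sufficiently small $c$; by iterating this bound, every super-block has size $s = O(\log n)$ with probability $1 - n^{-\omega(1)}$; and (ii) by construction, consecutive super-blocks are separated by at least $B = Ck\log n \geq Cr\log n$ \0s, where $r \leq k$ is the number of super-blocks.

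Second, I would reconstruct the super-block \emph{skeleton}: the sequence of super-block positions. Letting $x'$ denote the string obtained from $x$ by collapsing each super-block to a single \1 at its leftmost position, $x'$ is $r$-sparse with \0-runs of length at least $Cr\log n$, which satisfies the gap hypothesis of \pref{thm:gaps_intro}. Since the observed traces are of $x$ rather than of $x'$, I would adapt the recursive hierarchical clustering algorithm of \pref{sec:algorithm}: because the \1s within any super-block of $x$ are always within distance $B$ of each other, their received images remain clustered together across all recursion levels, while at the final level the threshold $\tau_D = O(k\log n) = O(B)$ is small enough to separate distinct super-blocks. Running this adapted procedure on traces of $x$ identifies, for each super-block of $x$, the corresponding contiguous segment in each trace.

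Third, for each identified super-block, I would apply \pref{thm:sparsity_intro} to its extracted sub-traces to reconstruct its internal structure. Each super-block contains $s = O(\log n)$ \1s in a window of length $W \le sB = O(\sqrt{n}\log^{3/2} n)$, so \pref{thm:sparsity_intro} requires $\exp(O(s^{1/3}\log^{2/3} W)) = \exp(O(\log n)) = \mathsf{poly}(n)$ traces per super-block. Taking a union bound over the at most $r \leq k = \mathsf{poly}(n)$ super-blocks yields the total $\mathsf{poly}(n)$ sample complexity. The main obstacle I anticipate is adapting the consistency and length-filter arguments of \pref{sec:algorithm} so that they continue to hold at the granularity of super-blocks rather than individual \1s: in the parameter regime $k = \Theta(\sqrt{n/\log n})$ the naive length bound $L^{(d,i)} \le 2k\tau_d$ becomes vacuous (of order $n$), so the super-block structure of $x$ must be invoked to control cluster lengths at intermediate recursion levels in place of the worst-case pairwise gap condition.
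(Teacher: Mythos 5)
Your structural first step is sound and matches the paper's starting point (with high probability $k=O(\sqrt{n/\log n})$ and the \1s organize into well-separated groups of $O(\log n)$ \1s), but the machinery you build on it has two genuine gaps. First, the super-block partition cannot be recovered by thresholding received positions: by maximality of a super-block, a within-block gap can equal $B$ while an adjacent between-block gap equals $B+1$; after the channel these concentrate at $B/2$ and $(B+1)/2$ with fluctuations of order $\sqrt{B\log n}$, so no threshold --- in particular not $\tau_D$ --- can keep every super-block internally connected while separating all distinct super-blocks. Moreover, with $\eta=\Theta(1/\sqrt{n\log n})$ the expected number of such borderline gaps in a random $x$ is $\tilde{\Theta}(n^{1/4})$, so this is the typical situation, not a corner case. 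Note also that in this regime $k\log n=\Theta(\sqrt{n\log n})$, so $\tau_D$ and $\tau_1$ are of the same order: the recursion of \pref{alg:recursive} buys nothing here, and what your argument actually needs --- and does not supply --- is a bound on the span and the \1-count of whatever components a single level of clustering produces once super-blocks are allowed to merge (chains of borderline gaps can concatenate several super-blocks). The paper's proof supplies exactly this and nothing more: one level of clustering with threshold $2a\sqrt{n\log n}$, together with the observation that no $\log n$ consecutive segments of length $6a\sqrt{n\log n}$ all contain \1s (probability $(6ac)^{\log n}$), which bounds each component's span by $6a\sqrt{n\log^3 n}$ and its number of \1s by $12ac\log n$.

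Second, feeding the extracted sub-traces of a super-block into \pref{thm:sparsity_intro} is not licensed: each sub-trace runs from the first to the last \emph{received} \1 of the cluster, so whenever an extreme \1 of the super-block is deleted the sub-trace is a truncated, biased object rather than a sample of the super-block substring passed through the deletion channel, and the binomial-mixture reduction behind \pref{thm:sparsity_intro} does not apply to it. Repairing this requires identifying which traces contain the extreme \1s, which is exactly the de-biasing difficulty of \pref{sec:gaps} that you also leave unresolved for the skeleton positions (your acknowledged ``main obstacle''). The paper sidesteps all of this by exploiting the $O(\log n)$ \1-count of each component: restrict attention to traces in which \emph{all} \1s of a component appear, an event of probability at least $1/\poly(n)$, and then recover every internal \0-run with the binomial mean estimator; the \0-run between two adjacent components is handled the same way by conditioning on all \1s of both components. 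No appeal to \pref{thm:sparsity_intro} or to the recursive clustering behind \pref{thm:gaps_intro} is needed, and as written your steps 2 and 3 each rest on adaptations (super-block-level consistency, length and filter lemmas, and clean sub-trace extraction) that are precisely the hard part and are not established.
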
  

\begin{proof}
Let $\{p_u\}$ denote the positions (index of the coordinate from the left) of the  \1s in the
original string $x$. Let $\mathcal{N}$ denote the multi-set of all
positions of all received \1s and call $N = |\mathcal{N}|$. 
construct a graph $G$ on $N$ vertices where every vertex is associated
with a received \1. We decorate each vertex $v$ with a number $z_v
\in \mathcal{N}$, which is the position of the associated received
\1. Each vertex $v$ also has an \emph{unknown} label $y_v$ denoting the corresponding \1 in the
\emph{original} string. Finally, the edges in $G$ are defined as following: two vertices $v,w$ will have an edge if $\{(v,w) : |z_v - z_w| \leq
2a\sqrt{n\log n}\}$ for some appropriate large constant $a$. Consider the original string $x$ partitioned into $O(\sqrt{n})$ contiguous segments each of length $6a\sqrt{n\log n}$. In that case, notice that
\begin{align*}
&\Pr( \log n \text{ consecutive segments all include 1’s })  < \Big(1-\Big(1-\frac{c}{\sqrt{n\log n}}\Big)^{6a \sqrt{n \log n}}\Big)^{\log n} < (6ac)^{\log n}.
\end{align*}
Taking a union bound over all sets of $\log n$ consecutive segments ($O(\sqrt{n})$ of them), we get that no consecutive $\log n$ segments should all include \texttt{1}'s with probability at least $1-O(\sqrt{n}(6ac)^{\log n})$. We now have the following two claims:

\begin{claim}
For any two vertices $u,v$ such that $y_u \neq y_v$ and $|p_{y_u}-p_{y_v}| > 6a\sqrt{n \log n}$, they will never have an edge {\color{black}with high probability}.
\end{claim}
\begin{proof}
We will prove this claim by contradiction. Suppose $u,v$ indeed have an edge which must imply that $|z_u-z_v| \le 2a\sqrt{n\log n}\}$ because of the definition of graph $G$. Therefore we must have by using Chernoff bound 
\begin{align*}
\Pr(|z_u-\frac{p_{y_u}}{2}| \ge \frac{a\sqrt{n \log n}}{2} \le n^{-a^2/24}
\end{align*}
we can take a union bound over all vertices of the graph to conclude that $|z_u-\frac{p_{y_u}}{2}|\le 0.5a\sqrt{n \log n}$ for all vertices of the graph $G$. In that case,
\begin{align*}
&|\frac{p_u}{2}-\frac{p_{y_v}}{2}| \le |\frac{p_{y_v}}{2}- z_u|+|z_v-\frac{p_{y_v}}{2}|+|z_u-z_v|  \le 3a\sqrt{n \log n} 
\end{align*} 
which is a contradiction to the fact that $|p_{y_u}-p_{y_v}| > 6a\sqrt{n \log n}$.
\end{proof}
Therefore two \texttt{1}'s in the original string $x$ which are separated by at least $6a\sqrt{n \log n}$ will never have an edge in the graph $G$.

 \begin{claim}
 For $u,v \in \mathcal{N}$ such that $y_u=y_v$, there will exist an edge between $z_u$ and $z_v$ in the graph $G$ {\color{black}with high probability}.
 \end{claim}
 \begin{proof}
For two vertices $u,v \in \mathcal{N}$ such that $y_u=y_v$ (implying that $p_u=p_v$), we must have
\begin{align*}
|z_u-z_v| \le |z_u -\frac{p_{y_u}}{2}|+|z_v-\frac{p_{y_v}}{2}| \le 2a\sqrt{n \log n} 
\end{align*} 
with probability at least $1-n^{-a^2/6}$. Again, we can take a union bound over all vertices and over all traces to ensure that for $u,v \in \mathcal{N}$ such that $y_u=y_v$, there will exist an edge between $z_u$ and $z_v$ in the graph $G$. 
 \end{proof}

Further, the total number of \texttt{1}'s in a particular segment of the string $x$ of length $6a \sqrt{n \log^3 n}$, denoted by the random variable $X$ is sampled according to 
\begin{align*}
X \sim \mathsf{Bin}\Big(6a \sqrt{n \log^3 n}, \frac{c}{\sqrt{n \log n}}\Big).
\end{align*}
Therefore, we have $\avg X=6ac \log n$ and we can further use Chernoff Bound to conclude that $X \le 12ac \log n$ with probability at least $1-n^{-2ac}$. Taking a union bound, we can say that all segments of the string $x$ of length $6a \sqrt{n \log^3 n}$ has at most $12ac\log n$ \texttt{1}'s with probability of failure at most $n^{1-2ac}$. In that case, fix a particular connected component $C$ in the graph $G$ so that we can focus on reconstructing the contiguous sub-sequence of $x$ corresponding to the component $C$. From our previous analysis, we can ensure that 
\begin{align*}
\max_{u,v \in C} |p_{y_u}-p_{y_v}| \le 6a \sqrt{n \log^3 n}.
\end{align*}
since at most $\log n$ contiguous segments will include \texttt{1} in all of them. Moreover the total number of \texttt{1}'s in the component $C$ is at most $12ac\log n$. The probability that in a particular trace, all the \texttt{1}'s in the component $C$ will appear is at least $1/n^{12ac}$ and from now, we will only consider traces which has all the \texttt{1}'s present.  Subsequently, if the total number of traces used is $8n^{12ac+3}\log n$, then the number of traces containing all the \texttt{1}'s in $C$ is at least $8n^3$ with exponentially high probability. Using the Binomial Mean Estimator (defined in~\pref{app:letsgettechnical}), on these subset of traces containing all the \texttt{1}'s from $C$, we can recover the length of all the \texttt{0}-runs in the component $C$ with probability at least $1-n\exp(-n)$ ({\color{black}after taking union bound over at most $n$ \0 runs in $C$}). We can repeat this procedure to reconstruct the substrings of $x$ corresponding to all the components in the graph $G$. \\
In order to reconstruct the length of the run of \texttt{0}'s between two distinct components $C,C'$, we can only consider those traces where all the \texttt{1}'s corresponding to both $C,C'$ has appeared. There are at most $24ac\log n$ such \texttt{1}'s and as before, we can use $8n^{24ac+3}\log n$ traces to obtain $8n^3$ traces containing all the \texttt{1}'s in $C,C'$. Subsequently, using the Binomial Mean Estimator, we can reconstruct the length of the \texttt{0}-run between $C,C'$. Thus we can reconstruct the entire string with probability of failure at most $\sqrt{n}(6ac)^{\log n}+n^{1-2ac}+n^{24ac+3-a^2/24}+o(1/n)$. Setting $a,c$ appropriately results in a failure probability of $o(1)$.
\end{proof}

\section{Bounded Hamming Distance}
\label{app:hamming}
In this section, we turn to the sparse testing problem. We show that
it is possible to distinguish between two strings $x$ and $y$ with
Hamming distance $\Delta(x,y)<2k$, given $\exp(O(k\log n))$ traces.
This question is naturally related to sparse reconstruction, since the
difference string $x-y \in \{-1,0,1\}^n$ is at most $2k$ sparse, but
distinguishing two strings from traces is also at the core of our
analysis in~\pref{sec:sparsity}, as well as the analysis
of Nazarov and Peres~\cite{NazarovP17} and De et al.~\cite{DeOS17}. In particular given a
testing routine, reconstruction simply requires applying the union
bound.

In the binary symmetric channel (where each bit is flipped
independently with some probability), distinguishing between two
strings is easier if the Hamming distance is larger, since the two
strings are farther apart. However, it is unclear if this intuition
carries over to the deletion channel. In particular, the number of
traces required for testing is unlikely to even be monotonic in the
Hamming distance; if the Hamming distance is odd, then $x$ and $y$
have different Hamming weight, and we can estimate the Hamming weight
using just $O(n)$ traces.

{\color{black}
Our analysis uses a combinatorial result about \emph{$k$-decks} due
to Krasikov and Roditty~\cite{KRASIKOV1997344} that is defined below, along with an approach first used in McGregor et al.~\cite{McGregorPV14}.

\begin{definition}
The \emph{$k$-deck} of a string is the multi-set of all length $k$ subsequences of the string.
\end{definition}

\begin{theorem}[Krasikov and Roditty~\cite{KRASIKOV1997344}]\label{thm:kdecksuffice}
No two strings $x,y$ of length $n$ have the same $k$-deck if $\Delta(x,y)<2k$.
\end{theorem}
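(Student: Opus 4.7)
I would prove the contrapositive: if $x \neq y$ are binary strings of length $n$ with the same $k$-deck, then $\Delta(x,y) \geq 2k$. The starting point is to encode the $k$-deck as the multilinear polynomial
\[
\Phi_x(z_1, \ldots, z_k) := \sum_{1 \le i_1 < \cdots < i_k \le n}\, \prod_{j=1}^k z_j^{x_{i_j}},
\]
whose coefficient of $\prod_j z_j^{w_j}$ is exactly the number of times $w$ appears as a subsequence of $x$; thus $x, y$ share a $k$-deck iff $\Phi_x \equiv \Phi_y$ as polynomials. Substituting $z_j = 1+\eta_j$ and using $(1+\eta_j)^{x_{i_j}} = 1 + \eta_j x_{i_j}$ (valid since $x_{i_j} \in \{0,1\}$) rewrites this as
\[
\Phi_x(1+\eta_1, \ldots, 1+\eta_k) = \sum_{S \subseteq [k]} \eta^S\, N_S(x), \qquad N_S(x) := \sum_{i_1 < \cdots < i_k}\, \prod_{j \in S} x_{i_j},
\]
so the hypothesis becomes $N_S(x) = N_S(y)$ for every $S \subseteq [k]$.

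Focusing first on $|S| = 1$ yields $k$ linear equations in the difference vector $d := x - y \in \{-1,0,1\}^n$: namely $\sum_p d_p\, f_j(p) = 0$ with $f_j(p) := \binom{p-1}{j-1}\binom{n-p}{k-j}$, for $j = 1, \ldots, k$. Restricted to the support $S_d$ of $d$, the coefficient matrix factors as $A \cdot V$, where $V$ is the Vandermonde-type matrix with columns $(1, p, p^2, \ldots, p^{k-1})^\top$ for $p \in S_d$, and $A$ expresses $\{f_j\}$ in the monomial basis. A direct check shows that $(f_l(j))_{j, l \in [k]}$ is triangular with nonzero diagonal entries $\binom{n-l}{k-l}$, so $\{f_j\}$ is linearly independent and $A$ is invertible; hence the system has rank $\min(k, |S_d|)$, and $|S_d| \leq k$ already forces $d = 0$, yielding $\Delta(x,y) \geq k+1$ whenever $x \neq y$.

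To strengthen this bound from $k+1$ to $2k$, I would bring in the higher-order constraints $|S| \geq 2$. A standard telescoping expansion of $\prod_{j \in S} x_{i_j} - \prod_{j \in S} y_{i_j}$ shows that each $N_S(x) - N_S(y)$ remains a linear form in $d$, but with coefficients that now depend on the values of $x, y$ at the other positions. Stacking all $2^k - 1$ equations yields a combined linear system on $d$, and the goal would be to show that this system has full column rank whenever $|S_d| < 2k$, forcing $d = 0$ in that regime and hence $\Delta(x,y) \geq 2k$.

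The principal obstacle lies in this last step: the $|S| \geq 2$ equations couple the unknown $d$ with the ambient values of $x, y$, so the clean Vandermonde factorization from the $|S| = 1$ case does not apply directly. Obtaining the sharp $2k$ bound thus requires a delicate multi-index analogue of the rank argument, exploiting the combinatorial structure of the weights $\binom{\cdot}{\cdot}$ across all $S \subseteq [k]$; this is the technical heart of the Krasikov and Roditty proof.
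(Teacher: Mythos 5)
Your setup is sound and is indeed the standard way into this result: the generating-polynomial encoding of the $k$-deck, the reduction to $N_S(x)=N_S(y)$ for all $S\subseteq[k]$, and the $|S|=1$ analysis showing that $d=x-y$ satisfies $\sum_p d_p\,p^t=0$ for $t=0,\dots,k-1$ are all correct. (The paper itself offers no proof of this theorem --- it is cited from Krasikov and Roditty --- so there is nothing internal to compare against.) The genuine gap is your final step: you stop at $\Delta(x,y)\ge k+1$ and then propose, without proof, that stacking the $|S|\ge 2$ constraints yields a system of full column rank whenever $|\supp(d)|<2k$. That step is not carried out, and it is also not where the $2k$ bound comes from. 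If you treat $d$ as an arbitrary real vector, $k$ linear constraints can never force more than $k+1$ nonzeros, and the $|S|\ge 2$ equations have coefficients depending on the unknown values of $x$ and $y$, so the claimed rank property is neither established nor obviously true.

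The missing idea is to exploit the sign structure of $d\in\{-1,0,1\}^n$, which your argument never uses. Let $A=\{i:x_i=1,\,y_i=0\}$ and $B=\{i:x_i=0,\,y_i=1\}$; these sets are disjoint and $\Delta(x,y)=|A|+|B|$. The $t=0$ moment gives $|A|=|B|=m$, and the moments $t=1,\dots,k-1$ give equal power sums $\sum_{a\in A}a^t=\sum_{b\in B}b^t$. If $m\le k-1$, Newton's identities (over $\mathbb{Q}$) force all elementary symmetric functions of $A$ and $B$ to agree, hence $\prod_{a\in A}(z-a)=\prod_{b\in B}(z-b)$ and $A=B$, contradicting disjointness (both sets are nonempty because $x\ne y$). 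Therefore $m\ge k$ and $\Delta(x,y)=2m\ge 2k$. Equivalently: a nonzero polynomial with coefficients in $\{-1,0,1\}$ divisible by $(z-1)^k$ must have at least $k$ coefficients equal to $+1$ and at least $k$ equal to $-1$. So the first-order deck statistics you already derived suffice, and no analysis of the $|S|\ge 2$ constraints is needed; the upgrade from $k+1$ to $2k$ comes from the integrality and disjoint-support structure of $d$, not from additional equations.
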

}
\begin{theorem}\label{thm:learningkdeck}
The $k$-deck of a binary string can be determined exactly with
$\exp(O(k\log n))$ traces from the symmetric deletion channel {\color{black}with high probability} assuming
$p\leq 1-k/n$.
\end{theorem}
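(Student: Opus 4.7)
The plan is to build an unbiased estimator for $\binom{x}{s}$, the number of subsequences of $x$ equal to $s$, for every $s\in\{\0,\1\}^k$, and then union bound. Here is the key unbiasedness identity: for any fixed $s\in\{\0,\1\}^k$, if $\tilde{x}$ is a single trace obtained by deleting each coordinate of $x$ independently with probability $p$, then
\[
\expec{\textstyle\binom{\tilde{x}}{s}} = \sum_{I\subset[n],|I|=k} \Pr[\text{all coordinates in }I\text{ retained}]\cdot \mathbf{1}[x_I = s] = q^k\cdot \textstyle\binom{x}{s},
\]
since every length-$k$ subsequence of $\tilde{x}$ comes from a unique set of $k$ retained positions of $x$.

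Given $m$ independent traces $\tilde{x}_1,\dots,\tilde{x}_m$, I would form the plug-in estimator $\widehat{N}_s := q^{-k} m^{-1}\sum_{j=1}^m \binom{\tilde{x}_j}{s}$, and round to the nearest integer. Each summand lies in $[0,\binom{n}{k}]\subseteq[0,n^k]$, so Hoeffding's inequality gives
\[
\prob{\left|m^{-1}\textstyle\sum_j \binom{\tilde{x}_j}{s} - q^k\binom{x}{s}\right| \ge \varepsilon} \le 2\exp\!\left(-\tfrac{2m\varepsilon^2}{n^{2k}}\right).
\]
Since $\binom{x}{s}$ is a non-negative integer, exact recovery of $\binom{x}{s}$ from $\widehat{N}_s$ requires only the additive error after dividing by $q^k$ to be strictly less than $1/2$, i.e., $\varepsilon < q^k/2$.

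Plugging in and using $q\ge k/n$ (the hypothesis $p\le 1-k/n$) gives $q^{-2k}\le (n/k)^{2k}$, so it suffices to take
\[
m \;\ge\; \tfrac{8 n^{2k}}{q^{2k}}\cdot(k+1)\log(2n) \;\le\; 8(k+1)\log(2n)\cdot(n^2/k)^{2k} \;=\; \exp(O(k\log n)),
\]
which makes the failure probability for a single $s$ smaller than $2^{-k-1}/\poly(n)$. Taking a union bound over all $2^k$ choices of $s\in\{\0,\1\}^k$ then yields, with high probability, the exact value of $\binom{x}{s}$ for every $s$, i.e., the entire $k$-deck of $x$. Note that exact recovery of all coefficients of the $k$-deck is precisely what is needed to invoke \pref{thm:kdecksuffice} in the testing application.

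The main (mild) obstacle is balancing the two sources of blow-up: the range of the per-trace counter is $n^{\Theta(k)}$ and the debiasing factor contributes another $q^{-k} \le (n/k)^k$. Both are $\exp(O(k\log n))$ under the assumption $q\ge k/n$, so the combined sample complexity is $\exp(O(k\log n))$ as claimed; if instead $q$ were allowed to be much smaller, the $q^{-k}$ factor would dominate and the estimator would no longer fit into the target bound, which is why the hypothesis $p\le 1-k/n$ is used.
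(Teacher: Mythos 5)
Your proof is correct, but it takes a genuinely different route from the paper. The paper's argument is a reduction to uniform sampling of $k$-subsequences: since $p\le 1-k/n$, a trace has length at least $k$ with constant probability, and (by exchangeability of the deletion channel conditioned on the trace length) a uniformly random $k$-subset of such a trace corresponds to a uniformly random $k$-subset of positions of $x$; the paper then estimates each pattern frequency $f_u$ from $r=3n^{2k}\log n^k$ such uniform samples via a Chernoff bound, using $\expec{X_u}=rf_u/\binom{n}{k}$. You instead build an unbiased ``mean-based'' estimator directly from the full subsequence counts $\binom{\tilde{x}}{s}$ of each trace, using the identity $\expec{\binom{\tilde{x}}{s}}=q^k\binom{x}{s}$, and de-bias by the factor $q^{-k}$, which is where you invoke $q\ge k/n$. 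Your route avoids the paper's (slightly subtle) claim that a random $k$-subsequence of a trace is distributed as a uniform random $k$-subsequence of $x$, and it uses every trace, including those of length below $k$; the price is an extra $q^{-2k}\le (n/k)^{2k}$ factor in the sample complexity, compared with the paper's use of the hypothesis only to guarantee a constant-probability rejection-sampling step. Both bounds are $\exp(O(k\log n))$, so your argument establishes the theorem as stated, and your concluding remark correctly identifies that exact recovery of the whole $k$-deck is what \pref{thm:kdecksuffice} requires.
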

\begin{proof}
We argue that sampling $\exp(O(k\log n))$ length $k$-subsequence of a
string is sufficient to reconstruct the $k$-deck with high
probability. The result then follows because if $p\leq 1-k/n$, then
with constant probability a trace generated by the deletion channel
has length at least $k$ and hence we can take a random $k$ subsequence
of such a trace as a random $k$ subsequence from $x$.
{\color{black}
Let $f_u$ be the number of times that $u\in \{\0,\1\}^k$ appears as a subsequence of $x$. Then, let $X_u$ be the number of times $u$ is generated if we sample $r=3n^{2k} \log n^k$ subsequences of length $k$ uniformly at random. $\expec{X_u}=rf_u / {n \choose k}$ and by an application of the Chernoff bound,
\begin{align*}
\prob{|X_u {n \choose k}/r-f_u|\geq 1} =\prob{|X_u-\expec{X_u}|\geq r/{n \choose k} } \leq \exp \left (- \frac{r}{3f_u{n \choose k}} \right )
\leq 1/n^k.
\end{align*}
where the last line follows given $ f_u\leq {n \choose k}$  and $r=3n^{2k} \log n^k$. Hence, by taking the union bound over all $2^k$ sequences $u$, it follows that we can determine the frequency of all length $k$ subsequences with high probability.
}
\end{proof}

\pref{thm:hamming_intro} follows directly from~\pref{thm:kdecksuffice}
and~\pref{thm:learningkdeck}.

\begin{theorem*}[Restatement of~\pref{thm:hamming_intro}]
For all $x, y\in \{\0,\1\}^n$ such that $\Delta(x,y)<2k$,  \[m=\exp(O(k\log n))\] traces are sufficient to  be  distinguished between $x$ and $y$ {\color{black}with high probability}.
\end{theorem*}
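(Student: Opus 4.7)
The plan is to combine the two results just established: \pref{thm:kdecksuffice} (Krasikov--Roditty), which tells us that the $k$-deck is a complete invariant for strings at Hamming distance less than $2k$, and \pref{thm:learningkdeck}, which tells us that $\exp(O(k\log n))$ traces suffice to recover the $k$-deck exactly. Once the $k$-deck is known, distinguishing $x$ from $y$ is immediate because their $k$-decks must differ.

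Concretely, I would first draw $m=\exp(O(k\log n))$ traces from the deletion channel. I would then invoke \pref{thm:learningkdeck}, whose hypothesis $p \le 1-k/n$ is exactly the hypothesis of \pref{thm:hamming_intro}, to conclude that with high probability we can exactly determine the multiset of length-$k$ subsequences of the unknown string. Let $\Dc^x$ and $\Dc^y$ denote the $k$-decks of $x$ and $y$, which are determined offline from the candidate strings. Because $\Delta(x,y) < 2k$, \pref{thm:kdecksuffice} gives $\Dc^x \neq \Dc^y$. The testing procedure therefore outputs $x$ if the reconstructed deck equals $\Dc^x$ and $y$ if it equals $\Dc^y$ (and either one otherwise, which occurs only on the failure event of \pref{thm:learningkdeck}).

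The union bound over the two possible underlying strings loses only a constant factor in the failure probability, so the overall sample complexity remains $\exp(O(k\log n))$. Since the bulk of the work has already been done in \pref{thm:kdecksuffice} and \pref{thm:learningkdeck}, there is essentially no remaining obstacle; the only thing to check is that the retention hypothesis $p \le 1-k/n$ used by \pref{thm:learningkdeck} matches the hypothesis of \pref{thm:hamming_intro}, and that the exact recovery of the deck (rather than approximate recovery) is what enables the deterministic discrimination guaranteed by Krasikov--Roditty. Both points hold, so the theorem follows by the chain: traces $\Rightarrow$ exact $k$-deck $\Rightarrow$ identification of $x$ versus $y$.
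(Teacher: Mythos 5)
Your proposal is correct and matches the paper's argument exactly: the paper likewise derives the theorem directly by combining \pref{thm:learningkdeck} (exact recovery of the $k$-deck from $\exp(O(k\log n))$ traces when $p \le 1-k/n$) with \pref{thm:kdecksuffice} (distinct $k$-decks for strings at Hamming distance less than $2k$). No gaps to flag.
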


As noted earlier, if $\Delta(x,y)$ is odd then $\poly(n)$ traces suffice. Also, regardless of the Hamming distance, if the location of the first and second positions (say $i$ and $j$) where $x$ and $y$ differs by at least $\Omega(\sqrt{n\log n})$ then it is easy to show that expected weight of the length $i/2$ prefix of the traces differs by $\Omega(1/\poly(n))$ and hence we can distinguish $x$ and $y$ with $\poly(n)$ traces.

\section{Reconstructing Arbitrary Matrices}{\label{sec:matrices}}
Recall that in the matrix reconstruction problem, we are given samples
of a matrix $X \in \{0,1\}^{\sqrt{n}\times\sqrt{n}}$ passed through a
\emph{matrix deletion channel}, which deletes each row and each column
independently with probability $p=1-q$.  In this section we
prove~\pref{thm:matrix_intro}.

\begin{theorem*}[Restatement of~\pref{thm:matrix_intro}]
For matrix reconstruction, $\exp(O(n^{1/4}\sqrt{p\log n}/q))$ traces
suffice {\color{black}with high probability} to recover an arbitrary matrix $X \in
\{0,1\}^{\sqrt{n}\times\sqrt{n}}$, where $p$ is the deletion
probability and $q=1-p$.
\end{theorem*}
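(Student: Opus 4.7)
The plan is to adapt the mean-based complex-analytic approach of Nazarov--Peres and De--O'Donnell--Servedio to the bivariate setting, exploiting the fact that row and column deletions are independent so that the expected trace factorizes as a product of binomial weights in the two coordinates. Concretely, let $T_{i,j}^X$ be the probability that position $(i,j)$ of a received trace equals $\1$; by conditioning separately on the row and column survival patterns,
\[
T_{i,j}^X = \sum_{a,b} X_{a,b}\binom{a-1}{i-1}q^i p^{a-i}\binom{b-1}{j-1}q^j p^{b-j}.
\]
As in the univariate case, once one shows that for any $X\neq Y$ the $\ell_1$ distance $\sum_{i,j}|T^X_{i,j}-T^Y_{i,j}|$ is at least some $\epsilon$, a Chernoff/union bound argument yields reconstruction with $\mathrm{poly}(n)\cdot \epsilon^{-2}$ traces and implies the theorem.

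To lower-bound $\sum_{i,j}|T^X_{i,j}-T^Y_{i,j}|$, first form the bivariate generating function $F(u,v)=\sum_{i,j}(T^X_{i,j}-T^Y_{i,j})u^{i-1}v^{j-1}$. Expanding the binomial sums shows that $F(u,v)=q^2 P_D(s,t)$ with $s=qu+p$, $t=qv+p$, and $P_D(s,t)=\sum_{a,b}(X_{a,b}-Y_{a,b})s^{a-1}t^{b-1}$ a nonzero bivariate Littlewood polynomial (coefficients in $\{-1,0,1\}$) of bidegree $(\sqrt n,\sqrt n)$. The triangle inequality then gives, for any $(s_0,t_0)$ on the bitorus,
\[
\sum_{i,j}|T^X_{i,j}-T^Y_{i,j}|\,|u_0|^{i-1}|v_0|^{j-1}\;\geq\;q^2|P_D(s_0,t_0)|.
\]
The engine of the bound is therefore a bivariate analog of Borwein--Erd\'elyi: for every nonzero $\{-1,0,1\}$-polynomial $P$ of bidegree $(\sqrt n,\sqrt n)$ and every $L$, some $(s_0,t_0)$ with $|s_0|=|t_0|=1$ and $|\arg s_0|,|\arg t_0|\le \pi/L$ satisfies $|P(s_0,t_0)|\ge \exp(-cL\log n)$. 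I would prove this by reducing to the univariate case in two stages: write $P(s,t)=\sum_a Q_a(t)s^a$ and apply the univariate Borwein--Erd\'elyi bound to some nonzero $Q_{a^*}$ on $|\arg t|\le \pi/L$, producing a $t_0$ for which $|Q_{a^*}(t_0)|$ is at least $\exp(-cL)$; the remaining polynomial $P(\,\cdot\,,t_0)$ has complex but bounded-magnitude coefficients ($O(\sqrt n)$ in modulus) with one coefficient of known lower magnitude, so a second application of the precision version of Borwein--Erd\'elyi (the same Corollary~3.2 used in the proof of \pref{thm:learningmix}) gives the claimed lower bound.

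With this bivariate Littlewood estimate in hand, the last step is to solve $s_0=qu_0+p$, $t_0=qv_0+p$ and invoke the standard Nazarov--Peres estimate $|u_0|,|v_0|\le \exp(O(1/(qL)^2))$ on that arc. Truncating the sum at $\tau=O(q\sqrt n)$ using binomial tail bounds on $T_{i,j}$ (the negligible high-index tail is killed by an elementary $T_{i,j}\le 2^{-\max(i,j)}$ type bound, exactly analogous to the $\tau=6qa$ cutoff in the proof of \pref{thm:learningmix}) leads to
\[
\sum_{i,j\le\tau}|T^X_{i,j}-T^Y_{i,j}|\;\ge\;\exp\!\big(-O(L\log n)-O(\tau/(qL)^2)\big).
\]
Optimizing over $L$ (balancing the two terms with $L\sim (\sqrt n/(q^2\log n))^{1/4}$) produces $\sum |T^X-T^Y|\ge \exp(-O(n^{1/4}\sqrt{p\log n}/q))$ and hence the claimed sample complexity after the mean-based reduction and a union bound over pairs of candidate matrices.

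The main obstacle is the bivariate Borwein--Erd\'elyi step. A naive slice loses a factor of $\sqrt n$ in coefficient size, and it is not obvious that the loss can be absorbed cheaply; the precision version of Borwein--Erd\'elyi only degrades the bound by a $\log(\sqrt n)$ factor inside the exponent, which is exactly what allows the optimization over $L$ to produce a $\sqrt{\log n}$ (rather than $\log n$) loss and the final $n^{1/4}$ exponent. Getting this quantitative dependence correct, and verifying that the constraint $c/(qL)^2\le (\ln 2)/2$ in the truncation argument remains satisfied at the optimal $L$, is the delicate part of the proof.
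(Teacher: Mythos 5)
Your overall architecture is the same as the paper's: a mean-based statistic, the bivariate generating function that turns expected trace differences into a bivariate Littlewood polynomial evaluated at $z=qw+p$, the Nazarov--Peres modulus estimate on the arc, and a Chernoff-plus-union-bound test over all pairs of candidate matrices. (One simplification: no truncation step is needed at all, since received indices are automatically at most $\sqrt{n}$, so $|w_1^iw_2^j|\leq \exp(O(p\sqrt{n}/(Lq)^2))$ holds term by term without any tail bound.) Where you genuinely diverge is the key lemma. The paper (\pref{lem:bivariate_littlewood}) proves the bivariate arc bound directly, by forming $F(z_1,z_2)=\prod_{1\leq a,b\leq L}f(z_1e^{\pi i a/L},z_2e^{\pi i b/L})$, showing $|F|\geq 1$ somewhere on the torus via an iterated maximum-modulus argument, and extracting one factor of size at least $n^{-(L^2-1)}$; this gives $\exp(-O(L^2\log n))$ and never needs a two-variable Borwein--Erd\'elyi statement. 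You instead slice, applying univariate Borwein--Erd\'elyi \cite{BorweinE97} first to a nonzero coefficient polynomial $Q_{a^*}(t)$ and then (precision version) to $P(\cdot,t_0)$.

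The gap in your sketch is quantitative. In the second application the relevant coefficient ratio is not $O(\sqrt{n})$ but $M=O(\sqrt{n}\,e^{cL})$, because the coefficient you can lower-bound, $Q_{a^*}(t_0)$, is only guaranteed modulus $e^{-cL}$; Corollary 3.2 then yields $\exp(-c_1L\log M)=\exp(-O(L\log n+L^2))$, not the $\exp(-cL\log n)$ you claim. Your own optimization betrays the inconsistency: balancing $L\log n$ against $p\sqrt{n}/(Lq)^2$ would give $L\sim n^{1/6}$ and an $n^{1/6}\log^{2/3}n$-type exponent, whereas the $L\sim(\sqrt{n}/(q^2\log n))^{1/4}$ and final bound $n^{1/4}\sqrt{p\log n}/q$ you wrote correspond to an $L^2\log n$-type bound. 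Fortunately the corrected estimate $\exp(-O(L\log n+L^2))$ still proves the theorem: balancing $L^2$ against $p\sqrt{n}/(Lq)^2$ gives exponent $O(\sqrt{p}\,n^{1/4}/q)$, with the $L\log n$ term lower order for constant $p$, so your route, properly accounted, matches (indeed marginally improves on) the paper's lemma, whose bound is $\exp(-O(L^2\log n))$. Two smaller repairs: take $a^*$ to be the \emph{minimal} index with $Q_{a^*}\not\equiv 0$, so that after factoring out $s^{a^*}$ the coefficient you control is the constant term, which is what the Borwein--Erd\'elyi hypothesis (constant coefficient of modulus one after normalization, complex coefficients bounded by $M$) requires; and verify that the $\exp(-cL)$ arc bound for $Q_{a^*}$ is stated for the arc centered at $1$ of width $2\pi/L$, so that both $t_0$ and the later $s_0$ land in $\gamma_L$ as needed for the $|w|$ estimate.
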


 The bulk of the proof involves designing a procedure to
test between two matrices $X$ and $Y$. This test is based on
identifying a particular received entry where the traces must differ
significantly, and to show this, we analyze a certain bivariate
Littlewood polynomial, which is the bulk of the proof. Equipped with
this test, we can apply a union bound and simply search over all pairs
of matrices to recover the string.

For a matrix $X \in \{0,1\}^{\sqrt{n}\times\sqrt{n}}$, let $\tilde{X}$
denote a matrix trace.  Let us denote the $(i,j)^{\textrm{th}}$ entry
of the matrix as $X_{i,j}, i,j=0,1, \dots, \sqrt{n}-1$, an indexing
protocol we adhere to for every matrix.  For two complex numbers
$w_1,w_2 \in \mathbb{C}$, observe that
\begin{align*}
\mathbb{E}\left[\sum_{i,j=0}^{\sqrt{n}-1}\tilde{X}_{i,j}w_1^iw_2^j\right]  
&= q^2\sum_{i,j}w_1^iw_2^j\sum_{k_i\geq i, k_j \geq j} X_{k_i,k_j}{k_i \choose i} \times  {k_j \choose j} p^{k_i-i}q^{i} p^{k_j-j}q^{j}\\
& = q^2\sum_{k_1,k_2=0}^{\sqrt{n}-1} X_{k_1,k_2}(qw_1 + p)^{k_1}(qw_2 + p)^{k_2} \ .
\end{align*}
Thus, for two matrices $X,Y$, we have
\begin{align*}
\frac{1}{q^2} \mathbb{E}\left[\sum_{i,j=0}^{\sqrt{n}-1}(\tilde{X}_{i,j}-\tilde{Y}_{i,j})w_1^iw_2^j\right]  = \sum_{k_1,k_2=0}^{\sqrt{n}-1} (X_{k_1,k_2}-Y_{k_1,k_2})(qw_1 + p)^{k_1}(qw_2 + p)^{k_2} \triangleq A(z_1,z_2)
\end{align*}
where we are rebinding $z_1 = qw_1+p$ and $z_2 = qw_2+p$. Observe that
$A(z_1,z_2)$ is a \emph{bivariate Littlewood polynomial}; all
coefficients are in $\{-1,0,1\}$, and the degree is {\color{black}$\sqrt{n}-1$} in each variable. For
such polynomials, we have the following estimate, which extends 
a result due of Borwein and Erd\'elyi~\cite{BorweinE97} for univariate
polynomials. 

\begin{lemma}
\label{lem:bivariate_littlewood}
Let $f(z_1,z_2)$ be non-zero Littlewood polynomial of {\color{black}degree $\sqrt{n}-1$ in each variable}. Then,
\begin{align*}
|f(z_1^\star,z_2^\star)| \geq \exp(-C_1L^2 \log n)
\end{align*}
for some $z_1^\star=\exp( i\theta_1), z_2^\star=\exp( i\theta_2)$
where $|\theta_1|, |\theta_2|\leq \pi/L$, and $C_1$ is a
universal constant.
\end{lemma}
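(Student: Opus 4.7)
The plan is to reduce the bivariate problem to two successive univariate applications of the Borwein-Erd\'elyi bound, one in each variable. First I would expand $f(z_1,z_2) = \sum_{i=0}^{\sqrt{n}-1} h_i(z_2)\, z_1^i$, where each $h_i(z_2) = \sum_j c_{ij} z_2^j$ is a univariate Littlewood polynomial of degree at most $\sqrt{n}-1$. Since $f$ is non-zero, there is some index $i_0$ for which $h_{i_0}$ is a non-zero Littlewood polynomial. Applying the univariate Borwein-Erd\'elyi bound (Corollary~3.2 of~\cite{BorweinE97}, in the same form used in the proof of~\pref{thm:learningmix}) to $h_{i_0}$ produces $z_2^\star = e^{i\theta_2}$ with $|\theta_2| \leq \pi/L$ and $|h_{i_0}(z_2^\star)| \geq \exp(-c_1 L)$.

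Next I would freeze this $z_2^\star$ and consider the univariate polynomial $P(z_1) := f(z_1, z_2^\star) = \sum_i a_i z_1^i$, where $a_i = h_i(z_2^\star) \in \mathbb{C}$. Trivially $|a_i| \leq \sqrt{n}$, since each $h_i$ has at most $\sqrt{n}$ terms with unit-bounded coefficients evaluated on the unit circle, while $|a_{i_0}| \geq \exp(-c_1 L)$ by the first step. I would then invoke a complex, bounded-precision version of the Borwein-Erd\'elyi bound: for any $Q(z) = \sum_i b_i z^i$ with $b_i \in \mathbb{C}$, $|b_i| \leq M$, and $\max_i |b_i| \geq m$, there exists $z^\star = e^{i\theta}$ with $|\theta| \leq \pi/L$ such that $|Q(z^\star)| \geq m\exp(-c_2 L \log(M/m))$. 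Plugging in $M = \sqrt{n}$ and $m = \exp(-c_1 L)$ yields
\begin{align*}
|P(z_1^\star)| \;\geq\; \exp(-c_1 L)\cdot \exp\bigl(-c_2 L (\tfrac{1}{2}\log n + c_1 L)\bigr) \;=\; \exp\bigl(-O(L^2 + L\log n)\bigr),
\end{align*}
which is at least $\exp(-C_1 L^2 \log n)$ for a suitably large absolute constant $C_1$ (taking $L \geq 1$ and $n \geq 2$).

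The main obstacle is justifying the complex, bounded-precision extension of Borwein-Erd\'elyi used in the second step. For real coefficients on the lattice $\{-1,\ldots,-\gamma,0,\gamma,\ldots,1\}$ this is precisely the bound $|G(z)| \geq \gamma \exp(-c_1 L \log(1/\gamma))$ already invoked in~\pref{thm:learningmix}. I expect the same bound (with modified constants) to hold for arbitrary complex coefficients with $|b_i| \leq M$ and $\max_i|b_i| \geq m$; this can be obtained either by tracking through the potential-theoretic proof in~\cite{BorweinE97}, which is naturally stated for complex coefficients, or by decomposing $Q = A + iB$ into real and imaginary parts and applying the real lattice bound to whichever of $A,B$ retains a coefficient of magnitude at least $m/\sqrt{2}$, then ruling out perfect cancellation between $A$ and $B$ by testing several candidate arguments along the arc. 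Once this extension is in hand, the computation above yields the stated lower bound.
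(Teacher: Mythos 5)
Your second step rests on a lemma that is false as stated. The claim ``if $|b_i|\leq M$ for all $i$ and $\max_i|b_i|\geq m$, then $\sup_{z\in\gamma_L}|Q(z)|\geq m\exp(-c_2L\log(M/m))$'' cannot hold with constants independent of the degree: take $Q(z)=2^{-d}(1-z)^d$ with $d=\sqrt{n}-1$. Its coefficients $2^{-d}\binom{d}{j}$ are bounded by $1$ and the largest has modulus $\approx d^{-1/2}$, so your bound would force $\sup_{\gamma_L}|Q|\geq\Omega\bigl(d^{-1/2-c_2L/2}\bigr)$, polynomially small for fixed $L$; but on the arc $|\theta|\leq\pi/L$ one has $|Q(e^{i\theta})|=|\sin(\theta/2)|^{d}\leq\sin\bigl(\pi/(2L)\bigr)^{d}$, which is $\exp(-\Omega(\sqrt{n}))$. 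The moral is that knowing \emph{some} coefficient is large is not enough on a subarc; what the Borwein--Erd\'elyi-type bounds (including the $\gamma$-granular form invoked in the proof of~\pref{thm:learningmix}) actually require is that the \emph{lowest-order nonzero} coefficient (equivalently, the constant term after factoring out a common power of $z$) be large, and in your $P(z_1)=\sum_i h_i(z_2^\star)z_1^i$ the low-order coefficients $h_i(z_2^\star)$ with $i<i_0$ could be nonzero yet exponentially small, so no purely magnitude-based hypothesis on a single interior coefficient can rescue the step. Your fallback sketch (splitting into real and imaginary parts and ``ruling out perfect cancellation'') does not address this and is not a proof.

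The gap is repairable, and the repair is cheap: choose $i_0$ to be the \emph{minimal} index with $h_{i_0}\not\equiv 0$. Then $h_i\equiv 0$ for $i<i_0$, so after factoring out $z_1^{i_0}$ (harmless on the unit circle) the constant term of $P$ is $h_{i_0}(z_2^\star)$, of modulus at least $e^{-c_1L}$, while all coefficients have modulus at most $\sqrt{n}$. Dividing by $\sqrt{n}$ puts you exactly in the constant-term-normalized, bounded-complex-coefficient regime of Corollary~3.2 of~\cite{BorweinE97} (or, if you prefer to avoid citing a complex-coefficient extension, you can rerun the product-over-rotations/maximum-modulus argument directly on $P$), giving $\sup_{\gamma_L}|P|\geq\exp\bigl(-O(L^2+L\log n)\bigr)\geq\exp(-C_1L^2\log n)$ as you computed. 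With that fix your argument is a genuinely different route from the paper's: the paper works with both variables at once, forming the product $\prod_{a,b\leq L}f(z_1e^{\pi ia/L},z_2e^{\pi ib/L})$, extracting a point on the torus where this product has modulus at least $1$ via an iterated maximum-modulus/factorization argument, and then using the trivial bound $|f|\leq n$ to isolate one factor on the arc; your reduction to two univariate applications is more modular and would extend to the tensor case (\pref{lem:multivariate_littlewood}) by induction on the number of variables, but as submitted the second univariate step is unjustified.
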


\begin{proof}
Fix $L > 0$ and define the polynomial
\begin{align*}
F(z_1,z_2)=\prod_{1\leq a\leq L,1\leq b\leq L} f( z_1 e^{\pi i a /L}, z_2 e^{\pi i b /L}).
\end{align*}
{\color{black}
We use the maximum modulus principle that is stated as follows: For any holomorphic function $f$, the modulus of $f$ i.e. $|f|$ does not have a strict local maxima completely within its domain and therefore achieves the maximum value on the boundary of its domain. We first show by an iterated application of the maximum modulus principle that 
there exists $z^\star_1,z^\star_2$ on the unit disk such that
$F(z^\star_1,z^\star_2) \geq 1$}. First factorize $F(z_1,z_2) = z_2^k
G(z_1,z_2)$ where $k$ is chosen such that $G(z_1,z_2)$ has no common
factors of $z_2$. Since $F$ has non-zero coefficients, this implies
that $G(z_1,0)$ is a non-zero univariate polynomial. Further factorize
$G(z_1,0) = z_1^{\ell}H(z_1)$ so that terms in $H$ have no common
factors of $z_1$. $H$ is also a Littlewood polynomial and moreover it
has non-zero leading term, so that $|H(0)| \geq 1$. Thus by the
maximum modulus principle:
\begin{align*}
&|F(z^\star_1,z^\star_2)| = |G(z^\star_1,z^\star_2)| \geq |G(z^\star_1,0)| = |H(z^\star_1)| \geq |H(0)| \geq 1.
\end{align*}
Now, for any $a,b \in \{1,\ldots,L\}$ we have
{\color{black}
\begin{align*}
1 \leq |F(z_1^\star,z_2^\star)| \leq |f(z_1^\star e^{\pi ia/L}, z_2^\star e^{\pi i b/L})| \cdot n^{(L^2-1)},
\end{align*}
where we are using the fact that $|f(z_1,z_2)| \leq n$. This proves the lemma, since we may choose $a,b$ such that $z_1^\star e^{\pi ia /L} = \exp(i\theta_1),z_2^\star e^{\pi ib /L} = \exp(i\theta_2)$ for $|\theta_1|,|\theta_2| \leq \pi/L$. 
}
\end{proof}

Let $\gamma_L = \{e^{i\theta}: |\theta| \leq \pi/L\}$ denote the arc
specified in~\pref{lem:bivariate_littlewood}. For any $z_1 \in
\gamma_L$, Nazarov and Peres~\cite{NazarovP17} provide the following estimate for the modulus of
$w_1 = (z_1-p)/q$:
\begin{align*}
\forall z \in \gamma_L: |(z-p)/q| \leq \exp(C_2p/(Lq)^2).
\end{align*}
Using these two estimates, we may sandwich $|A(z_1,z_2)|$ by
\begin{align*}
\exp(-C_1L^2\log n) \leq \max_{z_1,z_2 \in \gamma_L} |A(z_1,z_2)|   \leq \frac{\exp(C'p\sqrt{n}/(Lq)^2)}{q^2}\sum_{ij} \left|\EE[\tilde{X}_{ij} - \tilde{Y}_{ij}]\right|  \ . 
\end{align*}
This implies that there exists some coordinate $(i,j)$ such that
\begin{align*}
\left|\EE[\tilde{X}_{ij} - \tilde{Y}_{ij}]\right| \geq \frac{q^2}{n} \exp\left(-C_1L^2\log n - \frac{C'p\sqrt{n}}{L^2q^2}\right)  \geq \frac{q^2}{n} \exp\left(-C \frac{n^{1/4}\sqrt{p\log n}}{q}\right)
,
\end{align*}
where the second inequality follows by optimizing for $L$.

The remainder of the proof follows the argument of~\citep{NazarovP17}:
Since we have witnessed significant separation between the traces
received from $X$ and those received from $Y$, we can test between
these cases with $\exp(O(n^{1/4}\sqrt{\log n}))$ samples (via a simple
Chernoff bound). Since we do not know which of the $2^{n}$ {\color{black}matrices} is
the truth, we actually test between all pairs, where the test has no
guarantee if neither matrix is the truth. However, via a union bound,
the true matrix will beat every other in these tests and this only
introduces a $\poly(n)$ factor in the sample complexity. 

\section{Reconstructing Random Matrices}{\label{sec:rand}}

In this section, we prove~\pref{thm:random_matrix_intro}: $O(\log n)$ traces suffice to reconstruct a random $\sqrt{n}\times
\sqrt{n}$ matrix with high probability for any constant deletion
probability $p<1$. This is optimal since $\Omega(\log n)$ traces are
necessary to just ensure that with high probability, every bit appears in at least one trace.

Our result is proved in two steps. We first design an oracle that allows us to identify when two rows (or two columns) in different matrix traces correspond to the same row (resp.~column) of the original matrix. We then use this oracle to identify which rows and columns of the original matrix have been deleted to generate each trace. This allows us to identify the original position of each bit in each trace. Hence, as long as each bit is preserved in at least one trace (and $O(\log n)$ traces is sufficient to ensure this with high probability), we can reconstruct the entire original matrix. 

\subsection{Steps to reconstruct the matrix}

\paragraph{Oracle for Identifying Corresponding Rows/Columns} We will first design an oracle that given two strings $t$ and $t'$ distinguishes, for any constant $q>0$, with high probability between the cases:

\begin{description}
\item[Case 1:]~ $t$ and $t'$ are traces generated by the deletion channel with preservation probability $q$ from the same random string $x\in_R \{0,1\}^{\sqrt{n}}$
\item[Case 2:]~ $t$ and $t'$ are traces generated by the deletion channel with preservation probability $q$ from independent random strings $x,y\in_R \{0,1\}^{\sqrt{n}}$
\end{description}

It $t$ and $t'$ are two rows (or two columns) from two different matrix traces, then this test determines whether $t$ and $t'$ correspond to the same or different row (resp.~column) of the original matrix.
In~\pref{sec:oracle}, we show how to perform this test with failure probability at most $1/n^{10}$. 
In fact, the failure probability can be made exponentially small but a polynomially small failure probability will be sufficient for our purposes.

\paragraph{Using the Oracle for Reconstruction}
Given $m=\Theta(\log n)$ traces we can ensure that every bit of $X$ appears in at least one of the matrix traces with high probability. We then use this oracle to associate each row in each trace with the rows in other traces that are subsequences of the same original row. This requires at most $\binom{m\sqrt{n}}{2}\leq (m\sqrt{n})^2$ applications of the oracle and so, by the union bound, this can performed with failure probability  at most $(m\sqrt{n})^2/n^{10}\leq 1/n^8$ where the inequality applies for sufficiently large $n$.

After using the oracle to identify corresponding rows amongst the different traces we group all the rows of the traces into $\sqrt{n}$ groups $G_1, \ldots, G_{\sqrt{n}}$ where the expected size of each group is $mq$. 
We next infer which group corresponds to the $i^{\textrm{th}}$ row of $X$ for each $i \in [\sqrt{n}]$.
Let $f$ be the bijection between groups and $[\sqrt{n}]$ that we are trying to learn, i.e., $f(j)=i$ if the $j^{\textrm{th}}$ group corresponds to the $i^{\textrm{th}}$ row of $X$.  If suffices to determine whether $f(j)<f(j')$ or $f(j)>f(j')$ for each pair $j\neq j'$. If there exists a matrix trace $\tilde{X}$ that includes a row in $G_j$ and a row in $G_{j'}$ then we can infer the relative ordering of  $f(j)$  and $f(j')$ based on whether the row from $G_j$ appears higher or lower in $\tilde{X}$ than the row in $G_{j'}$. The probability there exists such a trace is $1-(1-q^2)^m\geq 1-1/\poly(n)$ and we can learn the bijection $f$ with high probability.

We also perform an analogous process with columns. After both rows and columns have been processed, we know exactly which rows and columns were deleted to form each trace, which reveals the original position of each received bit in each trace. Given that every bit of $X$ appeared in at least some trace, this suffices to reconstruct $X$, proving~\pref{thm:random_matrix_intro}.

\begin{theorem*}[Restatement of~\pref{thm:random_matrix_intro}]
For any constant deletion probability $p<1$, $O(\log n)$ traces are sufficient to reconstruct a random $X\in \{0,1\}^{\sqrt{n}\times \sqrt{n}}$ {\color{black}with high probability}.
\end{theorem*}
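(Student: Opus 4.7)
The plan is to follow the two-step decomposition indicated in the excerpt. First, I would build a \emph{comparison oracle} $\mathcal{O}(t,t')$ that, for any constant retention probability $q$, distinguishes with failure probability at most $n^{-10}$ between (i) $t,t'$ being traces of a common random $x\in_R\{0,1\}^{\sqrt n}$, and (ii) $t,t'$ being traces of two independent random strings. Second, I would invoke the oracle on all pairs of rows and all pairs of columns extracted from the $m=\Theta(\log n)$ matrix traces to cluster them, then pin down which physical row/column index each cluster represents, and finally read off the entries of $X$ from the aligned traces.

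For the oracle, a natural statistic is the number of length-$k$ substrings (contiguous $k$-grams) shared between $t$ and $t'$ for a threshold $k=c\log n$ with $c$ a small constant depending on $q$. In Case~(i), any window of $k$ consecutive bits of the common source $x$ whose bits are all retained in \emph{both} traces appears as a common substring of $t$ and $t'$; there are $\sqrt n-k+1$ such windows, each surviving with probability $q^{2k}$, so the expected count is $\Omega(q^{2k}\sqrt n)$. In Case~(ii), two fixed length-$k$ substrings drawn from independent random strings collide with probability $2^{-k}$, so the expected number of shared $k$-grams is $O(n\cdot 2^{-k})$. Choosing $c$ so that $q^{2c}>2^{-c}$ by a constant factor creates a polynomial gap between the two expectations, and since each statistic is a sum of bounded (though not independent) indicators, Hoeffding/Chernoff bounds applied to an appropriate decomposition (e.g.\ restricting attention to disjoint $k$-windows of $x$ to enforce independence) concentrate the count around its mean well enough to threshold with failure probability $n^{-10}$. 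I expect this to be the main technical obstacle because one must handle the mild correlations introduced by overlapping windows and by the channel, but the exponential separation of the two means provides considerable slack.

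Given the oracle, the remaining argument is essentially bookkeeping, exactly as sketched in the excerpt. I would apply $\mathcal{O}$ to all $\binom{m\sqrt n}{2}=\mathrm{poly}(n)$ pairs of rows and separately to all such pairs of columns; by a union bound, every oracle call is correct with probability $1-o(1)$. The equivalence classes induced by the ``same source row'' relation partition the rows of the $m$ traces into at most $\sqrt n$ groups $G_1,\ldots,G_{\sqrt n}$ of expected size $mq$, and similarly for columns. To learn the bijection $f$ between row-groups and original row indices, I would note that for any pair $j\neq j'$ the probability that some matrix trace retains a row from each of $G_j$ and $G_{j'}$ is $1-(1-q^2)^m\ge 1-1/\mathrm{poly}(n)$; the relative vertical order of those two rows inside such a trace determines whether $f(j)<f(j')$. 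Taking a union bound over all $O(n)$ pairs recovers $f$ with high probability, and the same argument recovers the column bijection.

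Once the row and column bijections are known, every matrix trace reveals the values of $X$ at exactly the (row, column) positions of its surviving rows and columns. Each entry of $X$ is observed in some trace with probability $1-(1-q^2)^m\ge 1-1/\mathrm{poly}(n)$, so a union bound over the $n$ entries guarantees that every entry of $X$ appears in at least one trace, allowing us to reconstruct $X$ exactly. Optimality of $O(\log n)$ follows because with $o(\log n)$ traces, some row or column is deleted from every trace with nontrivial probability, at which point reconstruction is information-theoretically impossible.
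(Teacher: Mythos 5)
Your high-level architecture (a pairwise ``same source?'' oracle applied to all rows and columns of the traces, followed by grouping, ordering the groups via co-occurrence in a common trace, and reading off every entry that survives in some trace) is exactly the paper's, and that bookkeeping half of your argument is fine. The genuine gap is in the oracle itself. Your statistic counts shared contiguous $k$-grams with $k=c\log n$. The ``signal'' (matches inherited from a window of $k$ consecutive source bits retained in both traces) has expectation about $\sqrt{n}\,q^{2k}$, while coincidental matches between two independent random traces have expectation about $n\,2^{-k}$ --- and these coincidences are present in the same-source case as well, so the signal must beat this baseline (or at least its fluctuations). The condition you propose, $q^{2c}>2^{-c}$, is equivalent to $2q^{2}>1$, i.e.\ $q>1/\sqrt{2}$, and is \emph{independent of $c$}: no choice of $c$ can arrange it once $q\le 1/\sqrt{2}$. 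Quantitatively, having the signal dominate the noise mean requires $(2q^{2})^{k}\ge\sqrt{n}$, impossible for any $k$ when $q\le 2^{-1/2}$; even the weaker requirement of exceeding the noise's typical fluctuation (of order $\sqrt{n}\,2^{-k/2}$) forces $q^{2}>2^{-1/2}$, i.e.\ $q>2^{-1/4}\approx 0.84$. Since the theorem must hold for \emph{every} constant deletion probability $p<1$ (every constant $q>0$), your oracle covers only a small range of deletion probabilities, roughly $p<0.3$ (or $p<0.16$ under the fluctuation-based variant). Restricting to disjoint $k$-windows to decouple the indicators only shrinks the signal further and does not address this; nor can you separate ``true'' from ``coincidental'' matches in the observed count.

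The paper avoids this signal-to-noise problem by not using exact substring matches at all. It partitions positions into windows $S_i$ of width $w=\Theta(n^{1/4}\sqrt{\log n/q})$ separated by gaps of width $w$, compares the window sums $X_i=\sum_{j\in S_i}t_j$ and $Y_i=\sum_{j\in S_i}t'_j$, and uses $Z_i=(X_i-Y_i)^2$. A Chernoff bound on positional drift shows that when $t,t'$ come from the same string, at least a $q/2$ fraction of the bits contributing to $X_i$ and $Y_i$ are literally the same source bits and cancel in $X_i-Y_i$, so $\mathbb{E}[Z_i]$ drops by a constant factor (depending only on $q$) relative to the independent case; a median of sums of $\Theta(1/q^2)$ blocks over $\Theta(\log n)$ groups then gives failure probability $n^{-10}$ for every constant $q$. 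If you want to salvage your write-up, you need an oracle of this flavor --- one whose advantage degrades only polynomially in $q$ rather than requiring $q$ bounded below by an absolute constant like $2^{-1/2}$.
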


\subsection{Oracle: Testing whether two traces come from same random string}
\label{sec:oracle}

For any $i \in \{0,1,\dots,\lfloor n/2w \rfloor \}$, define $S_i = \{2wi+j:j=0,\ldots, w-1\}$ to be a contiguous subset of size \[w=100n^{1/4} \sqrt{1/q\cdot \log n} \ .\]  Note that  there are size $w$ gaps between each $S_i$ and $S_{i+1}$, i.e., $w$ elements that are both larger than $S_i$ and smaller than $S_{i+1}$. This will later help us argue that  the bits in positions $S_i$ and $S_{i+1}$ in different traces are independent.
Given traces $t, t'$, define the three quantities:
$
X_i=\sum_{j\in S_i} t_{j}$, $ Y_i=\sum_{j\in S_i} t'_{j}$ and $
Z_i=(X_i-Y_i)^2$.
We will show that by considering $Z_0, Z_1, Z_2, \ldots$ we can determine whether $t$ and $t'$ are traces of the same original string or traces of two different random strings.

The basic idea is that if $t$ and $t'$ are generated by the same string, many of the bits summed to construct $X_i$ and the bits summed to construct $Y_i$ will correspond to the same bits of the original string; hence $Z_i$ will be smaller than it would be if $t$ and $t'$ were generated from two independent random strings.  
To make this precise, we need to introduce some additional notation.

\begin{definition}
For $A\subset \{0,1,2, \ldots \}$, let $R_t(A)$ be the indices of the bits in the transmitted string that landed in positions $A$ in trace $t$. Similarly define $R_{t'}(A)$.  
For example, if  bits in position 0 and 2 were deleted during the transmission of $t$ then 
$R_t(\{0,1,2\})=\{1,3,4\}$.
\end{definition}

The next lemma quantifies the overlap between $R_t(S_i)$ and $R_{t'}(S_i)$.

\begin{lemma}[Deletion Patterns]\label{lem:delpat}
With high probability over the randomness of the deletion channel, 
\begin{align*}
\forall i~,~ |R_t(S_i)\cap R_{t'}(S_i)| \geq q w/2
\quad \mbox{ and } \quad \quad \forall i\neq j~,~ |R_t(S_i)\cap R_{t'}(S_j)|= 0 \ .
\end{align*}
Note that  conditioned on the second property, each of the $Z_i$'s are independent random variables.
\end{lemma}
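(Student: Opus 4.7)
The plan is to reduce both statements to standard binomial concentration bounds applied to the two independent deletion patterns, followed by a union bound over $i$ and over the $k\le \sqrt{n}$ original positions. Let $A_j,A'_j\in\{0,1\}$ denote the indicators that original position $j$ is retained in $t$ and $t'$ respectively, and let $P_k(t):=\sum_{j<k}A_j\sim \textrm{Bin}(k,q)$ be the rank (zero-indexed position) of original position $k$ in $t$, with $P_k(t')\sim \textrm{Bin}(k,q)$ defined analogously and independent of $P_k(t)$. Then $k\in R_t(S_i)$ iff $A_k=1$ and $P_k(t)\in S_i$, which reduces both claims to quantitative statements about the binomial ranks $P_k(t),P_k(t')$.

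For the disjointness claim ($i\ne j$), I would observe that if $k\in R_t(S_i)\cap R_{t'}(S_j)$ then $P_k(t)\in S_i$ and $P_k(t')\in S_j$, forcing $|P_k(t)-P_k(t')|\ge w$, since consecutive chunks are separated by at least $w$ intermediate positions. Hoeffding applied to the mean-zero sum $P_k(t)-P_k(t')=\sum_{j<k}(A_j-A'_j)$ gives $\Pr[|P_k(t)-P_k(t')|\ge w]\le 2\exp(-w^2/(2k))$, which is $\exp(-\Omega(\log n/q))$ for $k\le \sqrt{n}$ by the choice $w=100 n^{1/4}\sqrt{\log n/q}$; for constant $q$ this is inverse superpolynomial, and union-bounding over $k\le \sqrt{n}$ and pairs $(i,j)$ finishes this claim.

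For the lower-bound claim, the plan is to restrict attention to a \emph{core} set of original positions whose ranks, with high probability, land strictly inside $S_i$ in both traces. Define $K_i:=\{k:kq\in[2wi+w/8,\,2wi+7w/8]\}$, which has $|K_i|\ge 3w/(4q)-O(1)$. For $k\in K_i$ the mean rank $kq$ lies at distance at least $w/8$ from the boundary of $S_i$, so Hoeffding yields $\Pr[|P_k(t)-kq|>w/8]\le 2\exp(-w^2/(32k))\le n^{-C}$ for any desired constant $C$, by the same computation. Hence the event $E_i:=\{P_k(t),P_k(t')\in S_i\text{ for all }k\in K_i\}$ holds with probability $1-n^{-\Omega(1)}$ after union-bounding over $k\in K_i$ and the two traces, and on $E_i$ any $k\in K_i$ with $A_k=A'_k=1$ lies in $R_t(S_i)\cap R_{t'}(S_i)$; hence $|R_t(S_i)\cap R_{t'}(S_i)|\ge \sum_{k\in K_i}A_kA'_k$ on this event.

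The final step is a lower-tail Chernoff bound on the i.i.d.\ $\textrm{Bernoulli}(q^2)$ sum $Y:=\sum_{k\in K_i}A_kA'_k$, whose mean is at least $3qw/4$, yielding $\Pr[Y<qw/2]\le \exp(-\Omega(qw))$; since $qw=\Omega(n^{1/4}\sqrt{\log n})$ this is also inverse superpolynomial, and a union bound over the polynomially many chunks $i$ completes the proof. The main subtlety I anticipate is the coupling between $E_i$ (which depends on the retention variables $A_j, A'_j$ for indices below $K_i$ as well as inside $K_i$) and the Chernoff bound on $Y$: I would dodge this via the crude inclusion $\{|R_t(S_i)\cap R_{t'}(S_i)|<qw/2\}\subseteq \neg E_i\cup\{Y<qw/2\}$, giving $\Pr[|R_t(S_i)\cap R_{t'}(S_i)|<qw/2]\le \Pr[\neg E_i]+\Pr[Y<qw/2]$, which loses nothing asymptotically because both probabilities are already individually inverse-polynomial.
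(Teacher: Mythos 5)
Your proof is correct and follows essentially the same route as the paper's: concentration of each original bit's position in the traces gives the disjointness claim via the size-$w$ gaps between chunks, and a core set of indices whose expected positions lie well inside $S_i$ (your $K_i$, the paper's $S'_i$) combined with counting the bits retained in both traces gives the $qw/2$ lower bound. Your treatment is slightly more explicit than the paper's on two points—bounding $|P_k(t)-P_k(t')|$ directly rather than each trace's deviation from $qk$, and handling the dependence between the positioning event $E_i$ and the retention count $Y$ by a union bound—but these are refinements of the same argument, not a different approach.
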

\begin{proof}
First note that by the Chernoff bound, for each $j\in [\sqrt{n}]$, the $j^{\textrm{th}}$ bit of the original sequence appears in position that belongs to $[qj- r,qj-r+1,\dots,qj+r-1,qj+r]$ where $r=5n^{1/4} \sqrt{q\log n}$ with high probability. The second part of the lemma follows since $r= wq /20 < w/20$ and therefore, with high probability, any bit in the original string will not appear in $S_\alpha$ in one trace and $S_\beta$ in another for $\alpha \neq \beta$ because there was a size $w$ gap between $S_\alpha$ and $S_\beta$ .

For the first part of the lemma, for each $S_i$, define 
\begin{align*}
S'_i=\Big\{\frac{2wi}{q}+\frac{r}{q}, \frac{2wi}{q}+\frac{r+1}{q}, \ldots, \frac{2wi+w-1}{q}-\frac{r}{q} \Big\}  \ .
\end{align*}
By the Chernoff Bound, with high probability the $w/q-2r/q> 0.9w/q$ bits in $S'_i$ positions in the original string arrive in positions $S_i$ in the trace. Also with high probability, $0.9 q^2 |S'_i|$ of the bits in $S'_i$ are transmitted in the generation of both $t$ and $t'$. Hence,
$
|R_t(S_i)\cap R_{t'}(S_i)| \geq  0.9w/q \cdot 0.9 q^2 > qw/2
$
as required. 
\end{proof}

{\color{black}Now, we prove a helper lemma characterizing the mean and variance of the square of difference of two independent binomials.} 

\begin{lemma}\label{lem:binomialdiff}
Let $A \sim \bin(h,1/2)$ and $B\sim \bin(h,1/2)$ be independent and $C=(A-B)^2$. Then, 
\[ \EE[C]=h/2 \quad \mbox{ and } \quad \Var[C]\leq h^2/2 \ .\]
\end{lemma}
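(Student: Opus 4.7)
The mean computation is immediate from the standard variance decomposition: since $A$ and $B$ are independent with $\EE[A]=\EE[B]=h/2$ and $\Var[A]=\Var[B]=h/4$, we have
\[
\EE[C]=\EE[(A-B)^2]=\Var[A-B]+(\EE[A-B])^2=\tfrac{h}{4}+\tfrac{h}{4}+0=\tfrac{h}{2}.
\]

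For the variance bound, the natural plan is to decompose $A-B$ into i.i.d.\ summands. Write $A=\sum_{i=1}^h X_i$ and $B=\sum_{i=1}^h Y_i$ with all $X_i,Y_i$ independent Bernoulli$(1/2)$, so that $A-B=\sum_{i=1}^h D_i$ where $D_i=X_i-Y_i$ takes values in $\{-1,0,1\}$ with probabilities $1/4,1/2,1/4$. The $D_i$ are i.i.d.\ and symmetric, so $\EE[D_i]=\EE[D_i^3]=0$ and $\EE[D_i^2]=\EE[D_i^4]=1/2$. Thus
\[
\Var[C]=\EE[(A-B)^4]-\bigl(\EE[(A-B)^2]\bigr)^2=\EE[(A-B)^4]-h^2/4.
\]

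To compute the fourth moment, I expand $\bigl(\sum_i D_i\bigr)^4$ and use independence together with the vanishing of odd moments: the only surviving terms are those where every index that appears does so with an even multiplicity. These come in two types: all four factors share a single index (contributing $h\cdot\EE[D^4]$), or the indices form two disjoint pairs (contributing $3\,h(h-1)\cdot(\EE[D^2])^2$, the factor $3$ being the number of perfect matchings on four labeled positions). Substituting $\EE[D^4]=\EE[D^2]=1/2$ gives $\EE[(A-B)^4]=h/2+3h(h-1)/4=3h^2/4-h/4$, whence $\Var[C]=h^2/2-h/4\leq h^2/2$.

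There is no real obstacle: the argument is a direct fourth-moment calculation enabled by the i.i.d.\ decomposition. The only points requiring mild care are (i) correctly enumerating the partitions of $\{1,2,3,4\}$ into pairs when expanding the quartic, and (ii) recording that all odd moments of $D_i$ vanish by symmetry, so that no ``three equal, one different'' terms contribute.
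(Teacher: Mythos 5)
Your proof is correct and follows essentially the same route as the paper: a direct computation of the second and fourth moments of $A-B$, yielding $\EE[(A-B)^4]=h/2+3h(h-1)/4$ and hence $\Var[C]=(2h-1)h/4\leq h^2/2$. The only difference is that you spell out the fourth-moment calculation via the i.i.d.\ decomposition $A-B=\sum_i (X_i-Y_i)$ and the pairing count, details the paper states without elaboration.
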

\begin{proof}
The result follows by direct calculation:
\begin{align*}
&\EE[(A-B)^2]=\EE[A^2]+\EE[B^2]-2\EE[A]\EE[B] =h(h+1)/2-h^2/2=h/2
\end{align*}
and 
\begin{align*}
&\Var((A-B)^2)=\EE[(A-B)^4]-(h/2)^2 =h/2+6 {h \choose 2}/4-h^2/4= (2h-1)h/4 \ .\tag*\qedhere
\end{align*}
\end{proof}

We are now ready to argue that the values $Z_0, Z_1, \ldots $ are sufficient to determine whether or not $t$ and $t'$ are generated from the same random string. 

\begin{theorem}
\label{thm:median_estimator}
Let $z_j=\sum_{i=0}^{g-1} Z_{jg+i}$ for $g=96/q^2$ and $D=\textup{median}(z_0, z_1, z_2, \ldots , z_{\Theta (\log n)} )$.
\begin{description}
\item[Case 1.] \hspace{0.1in} If $t$ and $t'$ are generated from the same string, then $\Pr[D< (1-q/4)gw/2]\geq 1-1/n^{10}$.
\item[Case 2.] \hspace{0.1in} If $t$ and $t'$ are generated from different strings, then $\Pr[D\geq (1-q/4)gw/2] \geq 1-1/n^{10}$. 
\end{description}
\end{theorem}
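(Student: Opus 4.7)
\medskip

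\noindent\textbf{Proof plan.} The plan is to condition on the high-probability ``good'' event of~\pref{lem:delpat} and then separately compute the conditional mean and variance of each $Z_i$ in the two cases, so that the threshold $(1-q/4)gw/2$ sits comfortably between the two conditional means of $z_j$. Because~\pref{lem:delpat} guarantees both $|R_t(S_i)\cap R_{t'}(S_i)|\geq qw/2$ and $|R_t(S_i)\cap R_{t'}(S_j)|=0$ for $i\ne j$ with high probability, I may henceforth assume both properties hold, and in particular the different $Z_i$ use disjoint sets of original random bits, so they are mutually independent.

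\smallskip
\noindent\textbf{Computing the means.} In Case~2, $X_i$ and $Y_i$ come from two independent random strings, so conditional on the deletion patterns $X_i,Y_i$ are independent $\bin(w,1/2)$ variables, and~\pref{lem:binomialdiff} gives $\EE[Z_i]=w/2$, hence $\EE[z_j]=gw/2$. In Case~1, the $h_i:=|R_t(S_i)\cap R_{t'}(S_i)|$ shared original bits contribute identically to $X_i$ and $Y_i$ and cancel in $X_i-Y_i$. The remaining $w-h_i$ bits in $X_i$ and $w-h_i$ bits in $Y_i$ come from disjoint positions of the common random original string, so $X_i-Y_i\stackrel{d}{=}A-B$ with $A,B\sim\bin(w-h_i,1/2)$ independent. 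Applying~\pref{lem:binomialdiff} and using $h_i\ge qw/2$, we get $\EE[Z_i\mid h_i]=(w-h_i)/2\leq (w/2)(1-q/2)$, and therefore $\EE[z_j]\leq (gw/2)(1-q/2)=gw/2-gwq/4$. Thus the threshold $(1-q/4)gw/2=gw/2-gwq/8$ is exactly halfway (gap $gwq/8$) between the two conditional means.

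\smallskip
\noindent\textbf{Variance and single-block concentration.} From~\pref{lem:binomialdiff}, in both cases $\Var(Z_i)\leq (w-h_i)^2/2\leq w^2/2$. Since the $Z_i$ are independent within a block, $\Var(z_j)\leq gw^2/2$. Chebyshev's inequality then gives
\begin{align*}
\Pr\bigl[\,|z_j-\EE[z_j]|\geq gwq/8\,\bigr]\;\leq\;\frac{gw^2/2}{(gwq/8)^2}\;=\;\frac{32}{gq^2}\;=\;\frac13,
\end{align*}
using $g=96/q^2$. Hence in each of the two cases, an individual $z_j$ lies on the correct side of the threshold $(1-q/4)gw/2$ with probability at least $2/3$.

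\smallskip
\noindent\textbf{Boosting via the median.} Since the $z_0,z_1,\dots,z_{\Theta(\log n)}$ are computed from disjoint blocks $S_{jg},\ldots,S_{jg+g-1}$ of indices, they are mutually independent (conditional on the good event), and each is on the correct side with probability $\geq 2/3$. The median $D$ is on the wrong side only if at least half of these samples are on the wrong side, so a Chernoff bound for the sum of Bernoulli$(1/3)$ indicators shows that for a sufficiently large constant in $\Theta(\log n)$, this failure probability is at most $1/n^{10}$. Combining with the negligible failure probability of~\pref{lem:delpat} yields both claims of the theorem. The main (but mild) obstacle is just being careful with the conditioning on deletion patterns so that the ``independent bits'' argument for Case~1 goes through; the second property of~\pref{lem:delpat} is exactly what I need to make $\{Z_i\}_i$ independent in Case~1, and this drives the choice of spacing $w$ between consecutive $S_i$'s.
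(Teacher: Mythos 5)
Your proof is correct and follows essentially the same route as the paper's: condition on the deletion-pattern event of Lemma~\ref{lem:delpat}, use Lemma~\ref{lem:binomialdiff} to compute the conditional mean and variance of each $z_j$ in the two cases (with the shared bits cancelling in Case~1 so that $r\le w-qw/2$), apply Chebyshev with $g=96/q^2$ to get per-block failure probability $1/3$, and boost with a Chernoff bound on the median. Your write-up is, if anything, slightly more explicit than the paper's about the cancellation of the $h_i$ shared bits and the placement of the threshold between the two means, but it is the same argument.
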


\begin{proof}
Throughout the proof we condition on the equations in~\pref{lem:delpat} being satisfied. Note that this event is a function of the randomness of the deletion channel rather than the randomness of the strings being transmitted over the deletion channel.

First, suppose $t$ and $t'$ are generated from different strings. Then $Z_i$ has the same distribution as the variable $C$ in~\pref{lem:binomialdiff} when $r$ is set to $w$. Hence, $\EE[z_j]=gw/2$ and $\Var(z_j)\leq gw^2/2$. Therefore,
\begin{align*}
&\Pr[z_j< (1-q/4)gw/2]\leq \Pr[|z_j-\EE[z_j]|\geq (q/4)gw/2]  \leq \frac{\Var(z_j)}{\EE[z_j]^2 \cdot q^2/16} \leq \frac{2}{g q^2/16}=1/3.
 \end{align*}
 Therefore, by the Chernoff bound,  $D\geq (1-q/4)gw/2$ with  probability at least $1-1/n^{10}$.
 
Now, suppose $t$ and $t'$ are generated from the same string. Then, $Z_i$ has the same distribution as $C$ in~\pref{lem:binomialdiff} for  some $r\leq w-qw/2$. Hence, $\EE[z_j]=gr/2$ and $\Var(z_j)\leq gr^2/2$. Therefore,
\begin{align*}
&\Pr[z_j\geq (1-q/4)gw/2] 
\leq \Pr[|z_j-\EE[z_j]|\geq (q/4)gw/2] \leq \frac{\Var(z_j)}{\EE[z_j]^2 \cdot q^2/16} \leq \frac{2}{g q^2/16}=1/3.
 \end{align*}
 Therefore, by the Chernoff bound,  $D<  (1-q/4)gw/2$ with  probability at least $1-1/n^{10}$.
\end{proof}

\section{Extending Matrix Results to Tensors}
\label{sec:tensors}
\subsection{Reconstruction of arbitrary tensors}

In this setting, we have a $k^{th}$ order binary tensor
$T \in \{0,1\}^{n^{1/k} \times n^{1/k}\times \dots \times n^{1/k}}$ such that $T$ has equal number of elements along every dimension. The tensor $T$ is now passed through a
\emph{tensor deletion channel}, which deletes each element along every dimension 
independently with probability $p=1-q$. Notice that this is a generalization of the previous settings in matrix reconstruction (special case for $k=2$) and the trace reconstruction problem (special case for $k=1$) considered earlier.

In this section we
prove~\pref{thm:tensor_intro}.

\begin{theorem}{\label{thm:tensor_intro}}
For tensor reconstruction, $\exp\Big(O\Big( (n(kp/q^2)^k\log^2 n)^{1/(k+2)} \Big)\Big)$ traces
suffice {\color{black}with high probability} to recover an arbitrary tensor $T \in
\{0,1\}^{n^{1/k} \times n^{1/k}\times \dots \times n^{1/k}}$, where $p$ is the deletion
probability and $q=1-p$.
\end{theorem}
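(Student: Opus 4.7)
The plan is to lift the matrix argument from Section \ref{sec:matrices} to $k$-variate polynomials, keeping the same overall architecture: reduce tensor distinguishability to a lower bound on a $k$-variate Littlewood polynomial, combine this with a per-variable Nazarov--Peres modulus estimate for $w_j=(z_j-p)/q$, and finally optimize the parameter $L$. First I would compute the multivariate analogue of the generating-function identity. For any $w_1,\dots,w_k\in\mathbb{C}$ and a tensor trace $\widetilde T$ of $T\in\{0,1\}^{n^{1/k}\times\cdots\times n^{1/k}}$, expanding as in Section~\ref{sec:matrices} should give
\begin{align*}
\mathbb{E}\!\left[\sum_{i_1,\dots,i_k} \widetilde T_{i_1,\dots,i_k}\,w_1^{i_1}\cdots w_k^{i_k}\right]
= q^k \sum_{k_1,\dots,k_k} T_{k_1,\dots,k_k}\prod_{j=1}^{k}(q w_j+p)^{k_j}.
\end{align*}
Subtracting the identity for a second tensor $T'$ and setting $z_j=q w_j+p$ produces a $k$-variate polynomial $A(z_1,\dots,z_k)$ of degree at most $n^{1/k}-1$ in each variable with coefficients in $\{-1,0,1\}$, i.e.\ a $k$-variate Littlewood polynomial with at most $n$ nonzero terms.

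The first substantive step is to extend Lemma~\ref{lem:bivariate_littlewood} to $k$ variables: for any nonzero $k$-variate Littlewood polynomial $f$ in variables of individual degree at most $n^{1/k}-1$,
\begin{align*}
\max_{\theta_1,\dots,\theta_k \in [-\pi/L,\pi/L]} \bigl|f(e^{i\theta_1},\dots,e^{i\theta_k})\bigr| \;\geq\; \exp(-C_1 L^k \log n).
\end{align*}
The proof would mirror the bivariate argument: form the product $F(z_1,\dots,z_k)=\prod_{a_1,\dots,a_k\in[L]} f(z_1 e^{\pi i a_1/L},\dots,z_k e^{\pi i a_k/L})$, then apply the maximum modulus principle iteratively in each variable, at each step stripping off the largest common factor $z_j^{\ell_j}$ so the remaining polynomial has a nonzero value at $0$ in that coordinate. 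Since $F$ is nonzero and all coefficients are integers with absolute value at least $1$, this yields $|F(z_1^\star,\dots,z_k^\star)|\ge 1$ at some point on the polytorus. There are $L^k$ factors and each factor is at most $n$ in modulus on the unit polydisk, so at least one factor must be at least $n^{-(L^k-1)}$, which after rotating gives a point with all arguments in $[-\pi/L,\pi/L]$ having modulus at least $\exp(-C_1 L^k \log n)$. The iterated maximum-modulus step is the main obstacle, but the recipe is already laid out in the bivariate proof and the extra coordinates introduce no new conceptual issue.

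Once this estimate is in hand, the rest is routine. For each $z_j$ on the short arc $\gamma_L$, Nazarov--Peres gives $|w_j|=|(z_j-p)/q|\le \exp(C_2 p/(Lq)^2)$, and since $i_j\le n^{1/k}$, the monomial bound is $|w_1^{i_1}\cdots w_k^{i_k}|\le \exp(C_2 p k n^{1/k}/(Lq)^2)$. Sandwiching yields
\begin{align*}
\exp(-C_1 L^k \log n) \;\leq\; \frac{\exp(C_2 p k n^{1/k}/(Lq)^2)}{q^k} \cdot n \cdot \max_{i_1,\dots,i_k}\bigl|\mathbb{E}[\widetilde T_{i_1,\dots,i_k}-\widetilde T'_{i_1,\dots,i_k}]\bigr|.
\end{align*}
Choosing $L$ to balance $L^k \log n$ against $p k n^{1/k}/(L^2 q^2)$, i.e.\ $L^{k+2} \asymp p k n^{1/k}/(q^2\log n)$, produces an exponent of order $(n(kp/q^2)^k \log^2 n)^{1/(k+2)}$, which matches the claimed bound.

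Finally, to conclude the theorem I would follow the closing paragraph of Section~\ref{sec:matrices} verbatim: since there exists a coordinate on which the two trace distributions differ by at least the reciprocal of the above expression, a Chernoff bound lets us distinguish any fixed pair $T\ne T'$ with this many traces, and a union bound over the $2^{2n}$ pairs (absorbed into the $\exp(\cdot)$) upgrades this pairwise test to full reconstruction. The main obstacle is the $k$-variate Littlewood bound; everything else is bookkeeping with the two-variable calculation replaced by its $k$-variable counterpart.
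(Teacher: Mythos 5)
Your proposal follows essentially the same route as the paper: the multivariate generating-function identity, a $k$-variate Littlewood lower bound proved by the product-over-rotations construction with an iterated maximum modulus principle (the paper's Lemma \ref{lem:multivariate_littlewood}), the per-variable Nazarov--Peres modulus estimate, the sandwich yielding a coordinate where the trace means differ by $\frac{q^k}{n}\exp\bigl(-C_1L^k\log n - Ckpn^{1/k}/(L^2q^2)\bigr)$, and the same optimization of $L$ followed by a Chernoff test and union bound over all pairs of tensors (Lemma \ref{lem:distincttensors} and the surrounding argument). The only nitpick is that in the product argument one should note that \emph{every} factor is at least $n^{-(L^k-1)}$ (since the product is at least $1$ and each factor is at most $n$), so the specific rotation placing all arguments in $[-\pi/L,\pi/L]$ can be chosen freely, rather than merely ``at least one factor'' being large.
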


We again design a procedure to
test between two tensors $T_1$ and $T_2$. This test is based on
identifying a particular received entry where the traces (traces of the two tensors) must differ
significantly, and to show this, we analyze a certain multivariate
Littlewood polynomial. Equipped with
this test, we can apply a union bound and simply search over all pairs
of tensors to recover the correct one. We will begin by showing an extension of Lemma \ref{lem:bivariate_littlewood} for any value of $k$. 
\begin{lemma}{\label{lem:multivariate_littlewood}}
Let $f(z_1,z_2,\dots,z_k)$ be a non-zero Littlewood polynomial of {\color{black}degree $n^{1/k}$ in each variable}. In that case, 
\begin{align*}
\left|f(z_1^{\star},z_2^{\star},\dots,z_k^{\star})\right| \ge \exp(-C_1L^k \log n)
\end{align*}
for some $z_1^{\star}=\exp(i\theta_1),z_2^{\star}=\exp(i\theta_2),\dots,z_k^{\star}=\exp(i\theta_k)$ where $|\theta_1|,|\theta_2|,\dots,|\theta_k| \le \pi/L$ and $C_1$ is a universal constant.
\end{lemma}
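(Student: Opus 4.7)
The plan is to mimic the bivariate argument of \pref{lem:bivariate_littlewood}, replacing the product over pairs $(a,b)$ with a product over $k$-tuples and iterating the maximum modulus principle once per variable. Define
\[
F(z_1, \ldots, z_k) \;=\; \prod_{a_1,\ldots,a_k \in \{1,\ldots,L\}} f\bigl(z_1 e^{\pi i a_1/L},\; z_2 e^{\pi i a_2/L},\; \ldots,\; z_k e^{\pi i a_k/L}\bigr),
\]
which is a product of $L^k$ non-zero polynomials (each is a rotation of the non-zero polynomial $f$) and hence itself a non-zero polynomial in $(z_1,\ldots,z_k)$.

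The next step is to exhibit a point $(z_1^\star,\ldots,z_k^\star)$ on the unit polytorus with $|F(z_1^\star,\ldots,z_k^\star)| \ge 1$. I would do this by induction on the number of variables: at stage $j$, running from $j=k$ down to $j=1$, factor out the largest common power of $z_j$ from the current polynomial so that its specialization at $z_j = 0$ is a non-zero polynomial in the remaining variables. After all $k$ stages we arrive at a non-zero constant $H(0)$, which equals a particular coefficient of $F$ singled out by the monomial ordering used in the iterated factorization. This coefficient can be written as a product of $L^k$ terms, each being a $\pm 1$ coefficient of $f$ multiplied by a unit-modulus phase $e^{\pi i(\,\cdot\,)/L}$, so $|H(0)| \ge 1$. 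Reversing the factorization and applying the maximum modulus principle one variable at a time (as in the bivariate case) lifts the bound $|H(0)| \ge 1$ to the required inequality $|F(z_1^\star,\ldots,z_k^\star)| \ge 1$ for some $(z_1^\star,\ldots,z_k^\star)$ with $|z_j^\star| = 1$ for all $j$.

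Finally, because $f$ is Littlewood with at most $n$ monomials, $|f| \le n$ on the unit polytorus. Each of the $L^k$ factors in $F(z_1^\star,\ldots,z_k^\star)$ therefore has modulus at most $n$, so isolating any single factor yields
\[
|f(z_1^\star e^{\pi i a_1^\star/L},\ldots,z_k^\star e^{\pi i a_k^\star/L})| \;\ge\; \frac{|F(z_1^\star,\ldots,z_k^\star)|}{n^{L^k-1}} \;\ge\; \exp(-C_1 L^k \log n).
\]
One then picks $(a_1^\star,\ldots,a_k^\star)$ so that each rotated coordinate $z_j^\star e^{\pi i a_j^\star/L}$ lies on the arc $\gamma_L$; this is possible because the $L$ rotations cover the unit circle at resolution $\pi/L$, giving a point $(e^{i\theta_1},\ldots,e^{i\theta_k})$ with $|\theta_j| \le \pi/L$ as required.

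The main obstacle I expect is the coefficient analysis at the base of the induction: one must carefully describe which monomial of $F$ is extracted by the $k$-fold factorization procedure and verify that its coefficient factors cleanly as a product of $\pm 1$'s and roots of unity. This is the direct multivariate analog of the bivariate observation that the reduced univariate polynomial has a non-zero constant term of unit modulus, but in $k$ variables the bookkeeping grows and one has to be explicit about how the choice of the "lowest-degree" monomial at each stage interacts with the product structure over $(a_1,\ldots,a_k) \in \{1,\ldots,L\}^k$. Once this step is pinned down, the rest follows by straightforward substitution and the same maximum modulus and counting arguments used in the bivariate proof.
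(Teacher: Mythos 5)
Your proposal is correct and follows essentially the same route as the paper's proof of \pref{lem:multivariate_littlewood}: the same product $F$ over the $L^k$ rotated copies of $f$, the same iterated factor-out-the-lowest-power-of-$z_j$ and maximum-modulus argument to get $|F|\ge 1$ on the polytorus, the same $n^{L^k-1}$ counting step, and the same choice of rotations to land on the arc. If anything, your justification of the base case---identifying the extracted constant as the product of the lex-minimal coefficients of the rotated factors, each of unit modulus---is spelled out more explicitly than in the paper, which simply asserts $|F^k(0)|\ge 1$.
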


The proof of Lemma \ref{lem:multivariate_littlewood} follows from an iterative use of the maximum modulus principle for multivariate Littlewood polynomials and follows along the lines of the proof presented in Lemma \ref{lem:bivariate_littlewood}. The detailed proof has been deferred to Appendix \ref{app:misstensors}.

For a matrix $T \in
\{0,1\}^{n^{1/k} \times n^{1/k}\times \dots \times n^{1/k}}$, let $\tilde{T}$
denote a tensor trace (the output after the tensor $T$ is passed through the \emph{tensor deletion channel}).  
Let us denote by $T_{i_1,i_2,\dots,i_k}$ the element in $T$ whose location along the $j^{\textrm{th}}$ dimension is $i_j+1$ i.e. there are $i_j$ elements along the $j^{\textrm{th}}$ dimension before $T_{i_1,i_2,\dots,i_k}$. Notice that this indexing protocol uniquely determines the element within the tensor. We now show the following lemma:

\begin{lemma}{\label{lem:distincttensors}}
For any two distinct tensors $T_1,T_2$, there exists a position denoted by the set of ordered indices $i_1,i_2,\dots,i_k$ such that 
\begin{align*}
\left|\EE[\tilde{T_1}_{i_1,i_2,\dots,i_k} - \tilde{T_2}_{i_1,i_2,\dots,i_k}]\right| \geq \frac{q^k}{n} \exp\left(-C n^{1/(k+2)} \Big( \frac{kp \log^{2/k} n}{q^2} \Big)^{k/(k+2)}\right),
\end{align*}
\end{lemma}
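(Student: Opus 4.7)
The plan is to mirror the matrix-case argument from~\pref{sec:matrices} almost verbatim, replacing the bivariate Littlewood estimate of~\pref{lem:bivariate_littlewood} with the multivariate one provided by~\pref{lem:multivariate_littlewood}. The starting point is a generating-function identity. Since the $k$ coordinate deletions are independent and each marginally reduces to a binomial sum, for complex variables $w_1, \ldots, w_k$ one obtains
\[
\mathbb{E}\!\left[\sum_{i_1,\ldots,i_k} \tilde{T}_{i_1,\ldots,i_k}\, w_1^{i_1}\cdots w_k^{i_k}\right] = q^k \sum_{j_1,\ldots,j_k} T_{j_1,\ldots,j_k}\,(qw_1+p)^{j_1}\cdots (qw_k+p)^{j_k}.
\]
Setting $z_\ell = qw_\ell + p$ and subtracting the identities for $T_1$ and $T_2$ yields a multivariate Littlewood polynomial $A(z_1,\ldots,z_k)$ of degree $n^{1/k}-1$ in each variable whose absolute value simultaneously (i) admits a lower bound via the Littlewood estimate, and (ii) admits an upper bound in terms of the sum of $|\mathbb{E}[\tilde{T_1}-\tilde{T_2}]_{i_1,\ldots,i_k}|$ weighted by $|w_1|^{i_1}\cdots |w_k|^{i_k}$.

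Next, I would apply~\pref{lem:multivariate_littlewood} to obtain $z_1^\star,\ldots,z_k^\star$ on the arc $\gamma_L=\{e^{i\theta}:|\theta|\le \pi/L\}$ with $|A(z_1^\star,\ldots,z_k^\star)| \ge \exp(-C_1 L^k \log n)$. Since each $z_\ell^\star \in \gamma_L$, the univariate estimate $|(z-p)/q|\le \exp(C_2 p/(Lq)^2)$ from~\cite{NazarovP17} applies coordinatewise, so $|w_\ell^\star|\le \exp(C_2 p/(Lq)^2)$ for every $\ell$. Because each exponent $i_\ell$ in the expansion is at most $n^{1/k}$, multiplying across the $k$ coordinates gives the uniform product bound $|w_1^\star|^{i_1}\cdots |w_k^\star|^{i_k} \le \exp(k C_2 p n^{1/k}/(Lq)^2)$. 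Combining this with the triangle inequality on the generating function and dividing by the $n$ possible positions $(i_1,\ldots,i_k)$ produces the existence of some coordinate with
\[
\left|\mathbb{E}[\tilde{T_1}_{i_1,\ldots,i_k}-\tilde{T_2}_{i_1,\ldots,i_k}]\right| \ge \frac{q^k}{n}\exp\!\left(-C_1 L^k \log n - \frac{k C_2 p n^{1/k}}{(Lq)^2}\right).
\]

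The final step is to optimize $L$: balancing $L^k \log n$ against $k p n^{1/k}/(Lq)^2$ yields $L^{k+2} \asymp n^{1/k} k p / (q^2 \log n)$, and substituting this back recovers the claimed exponent $-C n^{1/(k+2)}(kp\log^{2/k} n/q^2)^{k/(k+2)}$ after routine algebra (which indeed reduces to $-C n^{1/4}\sqrt{p\log n}/q$ when $k=2$, matching the matrix case). I do not anticipate any conceptual difficulty beyond the matrix proof; the main obstacle is purely notational, namely carefully tracking how the per-coordinate Nazarov--Peres bound compounds across $k$ variables and how each factor of $n^{1/k}$ enters the degree count. Once~\pref{lem:distincttensors} is established, it can be fed into the same Chernoff-plus-union-bound testing recipe used in~\pref{sec:matrices} to derive~\pref{thm:tensor_intro}.
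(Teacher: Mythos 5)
Your proposal is correct and follows essentially the same route as the paper's proof in the appendix: the tensor generating-function identity with the substitution $z_\ell = qw_\ell+p$, the multivariate Littlewood lower bound from~\pref{lem:multivariate_littlewood}, the coordinatewise Nazarov--Peres arc estimate compounded to $\exp(kC_2pn^{1/k}/(Lq)^2)$, and the final optimization $L^{k+2}\asymp kpn^{1/k}/(q^2\log n)$. The algebra you sketch does yield the stated exponent (and correctly specializes to the matrix case at $k=2$), so there is nothing substantive to add.
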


The proof of Lemma \ref{lem:distincttensors} follows from using the complex generating function of the tensor traces and subsequently, using Lemma \ref{lem:multivariate_littlewood} based on similar ideas as in Section \ref{sec:matrices}. The detailed proof has been deferred to Appendix \ref{app:misstensors}. For the remaining part, we follow the argument of~\citep{NazarovP17}:
Since we have witnessed significant separation between the traces
received from $X$ and those received from $Y$, we can test between
these cases with $\exp(O( (nk^k\log^2 n)^{1/(k+2)} ))$ samples (via a simple
Chernoff bound). Since we do not know which of the $2^{n}$ traces is
the truth, we actually test between all pairs, where the test has no
guarantee if neither tensor is the truth. However, via a union bound,
the true tensor will beat every other in these tests and this only
introduces a $\poly(n)$ factor in the sample complexity. 

\subsection{Reconstruction of random tensors}
In this section, we extend the results in Section \ref{sec:rand} for random tensors. Suppose we have a $k^{th}$ order random binary tensor
$T \in \{0,1\}^{n^{1/k} \times n^{1/k}\times \dots \times n^{1/k}}$ such that $T$ has equal number of elements along every dimension and every element in $T$ is randomly sampled from $\{0,1\}$ uniformly and independently. The tensor $T$ is now passed through a
\emph{tensor deletion channel}, which deletes each element along every dimension 
independently with probability $p=1-q$. In this section we will prove the following theorem:

\begin{theorem}\label{thm:randomtensors}
For any constant deletion probability $p<1$, $O(\log n/(1-p)^k)$ traces are sufficient {\color{black}with high probability} to reconstruct a random $X\in \{0,1\}^{n^{1/k} \times n^{1/k}}$.
\end{theorem}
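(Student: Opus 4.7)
The plan is to lift the two-step strategy of~\pref{sec:rand} from matrices to $k$-tensors. Step~1: design an oracle that, given two hyperslices taken from two different tensor traces, decides whether they originate from the same $(k-1)$-dimensional hyperslice of $T$ or from independent random hyperslices, with failure probability at most $1/n^{10}$. Step~2: use the oracle along each of the $k$ coordinate directions to reconstruct the full deletion pattern of every trace, at which point every surviving entry can be routed back to its true position in $T$.

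For the coverage and alignment in Step~2, each entry of $T$ survives any single trace with probability $q^k$, so the probability that some entry is deleted from all $m = C\log n/q^k$ traces is at most $n(1-q^k)^m = o(1)$. Calling the oracle on all $\binom{m n^{1/k}}{2}$ pairs of hyperslices per dimension and union-bounding at rate $n^{-10}$ keeps every call correct; this partitions the hyperslices along each axis into $n^{1/k}$ equivalence classes. The bijection between classes and true indices is then recovered by finding, for each pair of classes, one trace containing a representative of each and reading off their relative position along that axis; such a witness exists with probability $1-(1-q^2)^m = 1-o(1)$. After doing this for all $k$ dimensions the deletion pattern of every trace is known, every retained bit's original coordinate follows, and combined with coverage this reconstructs $T$.

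The heart of the proof, and the main obstacle, is Step~1. To test whether two $(k-1)$-dimensional hyperslices $S, S'$ come from the same underlying random hyperslice, fix any axis of the hyperslice and partition its indices into contiguous blocks $S_0, S_1, \ldots$ of size $w = \Theta(n^{1/(2k)} \sqrt{(\log n)/q})$ separated by gaps of the same size. Define
\[
X_i = \sum_{\vec{j}: j_1 \in S_i} S_{\vec{j}}, \qquad Y_i = \sum_{\vec{j}: j_1 \in S_i} S'_{\vec{j}}, \qquad Z_i = (X_i - Y_i)^2,
\]
and feed $\{Z_i\}$ into the median-of-means estimator from~\pref{thm:median_estimator}. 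The analogue of~\pref{lem:delpat} for the chosen axis shows that, with high probability, the pre-images of $S_i$ agree in at least $qw/2$ positions in the same-origin case and never collide across different blocks. Hence in the different-origin case $X_i - Y_i$ is a difference of independent symmetric binomials whose variance scales like $\Theta(w q^{k-2} n^{(k-2)/k})$, while in the same-origin case at least a $q$ fraction of the contributing bits cancel. The resulting gap in $\EE[Z_i]$, amplified by a median-of-means over $\Theta(\log n)$ disjoint block groups, drives the failure probability down to $n^{-10}$.

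The key technical difficulty is tuning the block size $w$: the randomness along the remaining $k-2$ axes of each hyperslice inflates the variance of $X_i, Y_i$, so $w$ must be large enough that the ``signal'' $qw \cdot q^{k-2} n^{(k-2)/k}$ dominates the transverse fluctuations, yet small enough ($\ll n^{1/k}$) that $\Theta(\log n)$ disjoint blocks fit on one axis. The scaling $w = \Theta(n^{1/(2k)} \sqrt{(\log n)/q})$ satisfies both constraints and preserves the announced $O(\log n /q^k)$ sample complexity.
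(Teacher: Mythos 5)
Your overall architecture (a pairwise same-origin oracle plus per-axis grouping, ordering via co-occurrence in a trace, and a coverage argument with $m=O(\log n/q^k)$ traces) matches the paper's, and your Step 2 is essentially the paper's argument. The gap is in your Step 1 oracle, and it is real for $k\ge 3$. When you compare two $(k-1)$-dimensional hyperslices from traces $t,t'$, the sums $X_i$ and $Y_i$ range over \emph{all} transverse positions that survived in the respective trace, so $X_i$ aggregates $wN_t$ entries and $Y_i$ aggregates $wN_{t'}$ entries, where $N_t,N_{t'}\sim \mathrm{Bin}(n^{(k-2)/k},q^{k-2})$ are independent across the two traces. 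Conditioned on the deletion patterns, $X_i-Y_i$ therefore has nonzero mean $w(N_t-N_{t'})/2$, and $Z_i=(X_i-Y_i)^2$ contains the term $w^2(N_t-N_{t'})^2/4$, typically of order $w^2q^{k-2}n^{(k-2)/k}$. This nuisance term is the same for every block $i$ (it depends only on the global transverse survivor counts), so neither averaging over blocks nor the median over groups removes it; it is random across hyperslice pairs, so no fixed threshold in the style of \pref{thm:median_estimator} can be calibrated around it; and it exceeds your same-vs-different signal, which is only of order $q^{2k-3}wn^{(k-2)/k}$ (note also the cancelling fraction is $\Theta(q^{k-1})$, not $\Theta(q)$, once transverse survival in both traces is accounted for), by a factor of roughly $w/q^{k-1}$. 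Your proposed remedy of tuning $w$ goes the wrong way: the nuisance grows like $w^2$ while the signal grows like $w$. As written, the statistic lands above any signal-scale threshold whenever $N_t\neq N_{t'}$, which is the typical situation, so the test would essentially always report ``different origin.'' (For $k=2$ there are no transverse axes, $N_t=N_{t'}=1$, and your construction collapses to the paper's matrix oracle, which is why the issue is invisible there.) The oracle can be repaired by centering, e.g., replacing $X_i-Y_i$ by $(X_i-wN_t/2)-(Y_i-wN_{t'}/2)$ and letting the threshold scale with the observed $N_t+N_{t'}$; then the relative gap is again $\Theta(q^{k-1})$ and the median-of-means machinery applies with $\Theta(q^{-2(k-1)})$ blocks per group, but this step is missing from your argument.

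For comparison, the paper never constructs a multidimensional oracle: it applies the existing one-dimensional oracle of \pref{sec:oracle} to the $n^{(k-1)/k}$ one-dimensional fibers obtained by fixing all coordinates but one, groups the fiber traces by origin, and then infers the ordering of the groups from co-occurrence of fibers within a single tensor trace, one dimension at a time. Because a fiber trace has no transverse directions, the unequal-size phenomenon above simply does not arise there; the price is a larger number of oracle calls per dimension, which is absorbed by the union bound. So either adopt the fiber reduction or add the centering fix to your hyperslice oracle; without one of these, the proof does not go through for $k\ge 3$.
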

Notice that this bound is also tight since we need $\Omega(\log n/(1-p)^k)$ traces to at least observe every bit in the tensor $T$. The detailed proof of Theorem \ref{thm:randomtensors} is a generalization of the ideas presented in Section \ref{sec:rand} and has been deferred to Appendix \ref{app:misstensors}.

\section{Conclusion}
In this paper, we study several variations on the trace reconstruction
problem to understand how structural assumptions on the input
influence the sample complexity. Our results shed light on how
sparsity, separation between \1s, randomness, and multivariate
structures can enable efficient statistical inference with the
deletion channel. Along the way, we refine existing techniques, such
as the Littlewood polynomial approach, and introduce several new
ideas, including clustering and combinatorial methods. We hope our
insights and techniques will prove useful in future work on trace
reconstruction and related problems.


\bibliographystyle{IEEEtran}
\bibliography{references}

\begin{thebibliography}{10}
\providecommand{\url}[1]{#1}
\csname url@samestyle\endcsname
\providecommand{\newblock}{\relax}
\providecommand{\bibinfo}[2]{#2}
\providecommand{\BIBentrySTDinterwordspacing}{\spaceskip=0pt\relax}
\providecommand{\BIBentryALTinterwordstretchfactor}{4}
\providecommand{\BIBentryALTinterwordspacing}{\spaceskip=\fontdimen2\font plus
\BIBentryALTinterwordstretchfactor\fontdimen3\font minus
  \fontdimen4\font\relax}
\providecommand{\BIBforeignlanguage}[2]{{%
\expandafter\ifx\csname l@#1\endcsname\relax
\typeout{** WARNING: IEEEtran.bst: No hyphenation pattern has been}%
\typeout{** loaded for the language `#1'. Using the pattern for}%
\typeout{** the default language instead.}%
\else
\language=\csname l@#1\endcsname
\fi
#2}}
\providecommand{\BIBdecl}{\relax}
\BIBdecl

\bibitem{levenshtein1997reconstruction}
V.~Levenshtein, ``Reconstruction of objects from a minimum number of distorted
  patterns,'' in \emph{Doklady Mathematics}, vol.~55, no.~3.\hskip 1em plus
  0.5em minus 0.4em\relax Pleiades Publishing, Ltd., 1997, pp. 417--420.

\bibitem{levenshtein2001efficient}
V.~I. Levenshtein, ``Efficient reconstruction of sequences,'' \emph{IEEE
  Transactions on Information Theory}, vol.~47, no.~1, pp. 2--22, 2001.

\bibitem{levenshtein2001efficient2}
V.~Levenshtein, ``Efficient reconstruction of sequences from their subsequences
  or supersequences,'' \emph{Journal of Combinatorial Theory, Series A},
  vol.~93, no.~2, pp. 310--332, 2001.

\bibitem{KRASIKOV1997344}
I.~Krasikov and Y.~Roditty, ``On a reconstruction problem for sequences,''
  \emph{Journal of Combinatorial Theory, Series A}, 1997.

\bibitem{BatuKKM04}
T.~Batu, S.~Kannan, S.~Khanna, and A.~McGregor, ``Reconstructing strings from
  random traces,'' in \emph{Symposium on Discrete Algorithms}, 2004.

\bibitem{KM05}
S.~Kannan and A.~McGregor, ``More on reconstructing strings from random traces:
  Insertions and deletions,'' in \emph{International Symposium on Information
  Theory}, 2005.

\bibitem{ViswanathanS08}
K.~Viswanathan and R.~Swaminathan, ``Improved string reconstruction over
  insertion-deletion channels,'' in \emph{Symposium on Discrete Algorithms},
  2008.

\bibitem{HolensteinMPW08}
T.~Holenstein, M.~Mitzenmacher, R.~Panigrahy, and U.~Wieder, ``Trace
  reconstruction with constant deletion probability and related results,'' in
  \emph{Symposium on Discrete Algorithms}, 2008.

\bibitem{HoldenPP18}
N.~Holden, R.~Pemantle, and Y.~Peres, ``Subpolynomial trace reconstruction for
  random strings and arbitrary deletion probability,'' in \emph{Conference On
  Learning Theory, {COLT} 2018, Stockholm, Sweden, 6-9 July 2018.}, 2018, pp.
  1799--1840.

\bibitem{PeresZ17}
Y.~Peres and A.~Zhai, ``Average-case reconstruction for the deletion channel:
  Subpolynomially many traces suffice,'' in \emph{Symposium on Foundations of
  Computer Science}, 2017.

\bibitem{HartungHP18}
L.~Hartung, N.~Holden, and Y.~Peres, ``Trace reconstruction with varying
  deletion probabilities,'' in \emph{Workshop on Analytic Algorithmics and
  Combinatorics}, 2018.

\bibitem{NazarovP17}
F.~Nazarov and Y.~Peres, ``Trace reconstruction with $\exp({O}(n^{1/3})$
  samples,'' in \emph{Symposium on Theory of Computing}, 2017.

\bibitem{DeOS17}
A.~De, R.~O'Donnell, and R.~A. Servedio, ``Optimal mean-based algorithms for
  trace reconstruction,'' in \emph{Symposium on Theory of Computing}, 2017.

\bibitem{McGregorPV14}
A.~McGregor, E.~Price, and S.~Vorotnikova, ``Trace reconstruction revisited,''
  in \emph{European Symposium on Algorithms}, 2014.

\bibitem{2019arXiv190205101D}
S.~Davies, M.~Z. Racz, and C.~Rashtchian, ``Reconstructing trees from traces,''
  in \emph{Conference On Learning Theory}.\hskip 1em plus 0.5em minus
  0.4em\relax PMLR, 2019, pp. 961--978.

\bibitem{2019arXiv190309992C}
M.~Cheraghchi, R.~Gabrys, O.~Milenkovic, and J.~Ribeiro, ``Coded trace
  reconstruction,'' \emph{IEEE Transactions on Information Theory}, vol.~66,
  no.~10, pp. 6084--6103, 2020.

\bibitem{holden2018lower}
N.~Holden, R.~Lyons \emph{et~al.}, ``Lower bounds for trace reconstruction,''
  \emph{Annals of Applied Probability}, vol.~30, no.~2, pp. 503--525, 2020.

\bibitem{GilbertI10}
A.~C. Gilbert and P.~Indyk, ``Sparse recovery using sparse matrices,''
  \emph{Proceedings of the {IEEE}}, 2010.

\bibitem{dasgupta1999learning}
S.~Dasgupta, ``Learning mixtures of gaussians,'' in \emph{Foundations of
  Computer Science}, 1999.

\bibitem{achlioptas2005spectral}
D.~Achlioptas and F.~McSherry, ``On spectral learning of mixtures of
  distributions,'' in \emph{Conference on Learning Theory}, 2005.

\bibitem{kalai2010efficiently}
A.~T. Kalai, A.~Moitra, and G.~Valiant, ``Efficiently learning mixtures of two
  gaussians,'' in \emph{Symposium on Theory of Computing}, 2010.

\bibitem{belkin2010polynomial}
M.~Belkin and K.~Sinha, ``Polynomial learning of distribution families,'' in
  \emph{Foundations of Computer Science}, 2010.

\bibitem{arora2001learning}
S.~Arora and R.~Kannan, ``Learning mixtures of arbitrary gaussians,'' in
  \emph{Symposium on Theory of Computing}, 2001.

\bibitem{moitra2010settling}
A.~Moitra and G.~Valiant, ``Settling the polynomial learnability of mixtures of
  gaussians,'' in \emph{Foundations of Computer Science}, 2010.

\bibitem{feldman2008learning}
J.~Feldman, R.~O'Donnell, and R.~A. Servedio, ``Learning mixtures of product
  distributions over discrete domains,'' \emph{SIAM Journal on Computing},
  2008.

\bibitem{chan2013learning}
S.-O. Chan, I.~Diakonikolas, R.~A. Servedio, and X.~Sun, ``Learning mixtures of
  structured distributions over discrete domains,'' in \emph{Symposium on
  Discrete Algorithms}, 2013.

\bibitem{hopkins2018mixture}
S.~B. Hopkins and J.~Li, ``Mixture models, robustness, and sum of squares
  proofs,'' in \emph{Symposium on Theory of Computing}, 2018.

\bibitem{hardt2015tight}
M.~Hardt and E.~Price, ``Tight bounds for learning a mixture of two
  gaussians,'' in \emph{Symposium on Theory of Computing}, 2015.

\bibitem{KosLS09}
G.~K{\'{o}}s, P.~Ligeti, and P.~Sziklai, ``Reconstruction of matrices from
  submatrices,'' \emph{Mathematics of Computation}, 2009.

\bibitem{ban2019beyond}
F.~Ban, X.~Chen, A.~Frelich, R.~A. Servedio, and S.~Sinha, ``Beyond trace
  reconstruction: Population recovery from the deletion channel,''
  \emph{arXiv:1904.05532}, 2019.

\bibitem{BorweinE97}
P.~Borwein and T.~Erd\'{e}lyi, ``Littlewood-type problems on subarcs of the
  unit circle,'' \emph{Indiana University Mathematics Journal}, 1997.

\end{thebibliography}

\appendix

\section{Sparsity with gap: Technical details}
\label{app:letsgettechnical}

This section contains missing details from~\pref{sec:gaps}. Recall
that we have a string $x \in \{0,1\}^n$ that is $k$-sparse. We further
assume that each pair of successive \1s in $x$ is separated by a run of $g$ \0s, and we
refer to $g$ as the \emph{gap}.
Recall that we define $\{p_u\}_{u=1}^k$ as the position of the $k$ \1s
in original string, where $p_1 < p_2 < \ldots, p_k$. As further
notation we refer to the collection of $m=\poly(n)$ traces as $\Tc
= \{\tilde{x}_j\}_{j=1}^m$.

\paragraph*{The first level}
As a warm up, we show an algorithm called \texttt{FindPositions}, that uses
$\poly(n)$ traces to reconstruct $x$ exactly with high probability
when the gap $g = \Omega(\sqrt{n \log n})$. The algorithm returns the
values $\{p_u\}_{u=1}^{k}$ and
crucially uses a {\em binomial mean estimator}. Given $s$ samples
$X_1, X_2, \dots, X_s$ from a binomial distribution ${\rm
Bin}(n, \frac12)$ this estimator returns an estimate of $n$, $\hat{n}
= {\rm round}\Big(\frac{2}{s}\sum_{i=1}^s X_i\Big),$ where the ${\rm
round}$ function simply rounds the argument to the nearest
integer. From the Hoeffding bound, it is clear that
\begin{align*}
&\Pr(\hat{n} \ne n) = \Pr(|\hat{n} -n| \ge 1)  =  \Pr\Big(\Big|\frac{1}{s}\sum_{i=1}^s X_i - \frac{n}2\Big| \ge \frac14 \Big) \le 2 \exp\Big(-\frac{s}{8n^2}\Big) \le 2\exp(-n^\epsilon),
\end{align*} 	
as long as $s = 8n^{2+\epsilon}$ for any $\epsilon >0$. 

\begin{algorithm}
\caption{\texttt{FindPositions}}
\begin{algorithmic}
\REQUIRE length of $x$: $n$, $m$ traces $\Tc$, gap $g > 4\sqrt{2n \log (mn^3)}$. 
\STATE For each received \1, create a vertex $v$ decorated with tuple
$(z_v,t_v)$ where $z_v\in [n]$ is the position of the received \1 and
$t_v \in [m]$ is the index of the trace.
\STATE Create graph $G = (V,E)$ using vertex set above, and with edges:
\begin{align*}
E = \Big\{(v,w) : |z_v - z_w| \leq \sqrt{2n \log(mn^3)}\Big\}
\end{align*}
\STATE Find connected components $C_1,\ldots,C_{k'}$ in $G$ (If $k' \ne k$
report failure).
\STATE For each connected component $C_i$, use the binomial mean estimator on
$\{z_v\}_{v \in C_i}$ to estimate $\hat{p}_i$.
\STATE Return $\{\hat{p}_i\}_{i=1}^{k'}$. 
\end{algorithmic}
\label{alg:suff_gap}
\end{algorithm}

The algorithm \texttt{FindPositions} is displayed
in~\pref{alg:suff_gap}. Our first result of this section guarantees
that with $g = \Omega(\sqrt{n\log n})$~\pref{alg:suff_gap} recovers
$x$ exactly with $\poly(n)$ traces.
\begin{proposition}
\pref{alg:suff_gap} (\texttt{FindPositions}) successfully returns the string $x$ from $m$ traces 
with probability at least $1-3n^{-2}$ as long as
$m \geq \Omega(n^2\log n)$ and the gap $g \ge
4 \sqrt{2n \log(nm^3)}=\Theta(\sqrt{n \log n})$.
\end{proposition}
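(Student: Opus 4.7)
The plan is to prove the proposition in two stages: first, show that with high probability the graph $G$ built by \texttt{FindPositions} has exactly $k$ connected components, one per \1 in the source string $x$; second, show that the binomial mean estimator applied to each component returns the exact position of the corresponding source \1. The gap lower bound $g \geq 4\sqrt{2n\log(mn^3)}$ is calibrated precisely against a binomial deviation of order $\sqrt{(n/2)\log(mn^3)}$, so the entire argument rides on a single Hoeffding concentration applied to the trace positions of the received \1s, plus a Chernoff bound on component sizes.

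To set up the first stage, I would observe that if a vertex $v$ corresponds to a received \1 whose source is the $u^{\text{th}}$ \1 of $x$, then its position in the trace satisfies $z_v = 1 + B_v$ where $B_v \sim \textup{Bin}(p_u - 1, 1/2)$, independently across vertices. Hoeffding's inequality gives $|z_v - (p_u+1)/2| \leq \sqrt{(n/2)\log(mn^3)}$ with probability at least $1 - 2/(mn^3)$ per vertex, and a union bound over the at most $mn$ vertices in $G$ establishes this deviation bound simultaneously for every vertex with probability at least $1 - 2/n^2$. Conditioned on this event, two vertices with the same source label lie within $\sqrt{2n\log(mn^3)}$ of each other (hence adjacent in $G$), while two vertices with distinct source labels are separated by at least $g/2 - \sqrt{2n\log(mn^3)} > \sqrt{2n\log(mn^3)}$ (hence non-adjacent), so the components of $G$ coincide exactly with the source-label equivalence classes.

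Next I would verify that every source-label class is populated. Since each source \1 survives in any given trace independently with probability $1/2$, a Chernoff bound combined with a union bound over the $k \leq n$ classes shows that every class contains at least $m/4$ representatives, except with probability at most $1/n^2$, provided $m = \Omega(n^2\log n)$. In particular the number of components returned by the algorithm equals $k$, so it does not report failure.

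For the second stage, within a component $C_i$ of size $s_i \geq m/4 = \Omega(n^2\log n)$ whose source lies at position $p_u$, the samples $\{z_v - 1\}_{v \in C_i}$ are i.i.d.\ $\textup{Bin}(p_u - 1, 1/2)$, so (after the trivial shift by $1$) the binomial mean estimator quoted in the preamble returns $p_u$ exactly whenever its sample mean is within $1/4$ of its expectation; by the tail bound already recorded in the preamble, this fails with probability at most $2\exp(-s_i/(8n^2)) \leq 1/n^3$, for $m = \Omega(n^2\log n)$ with a sufficiently large constant. A final union bound over the at most $n$ components controls the estimation error by $1/n^2$, and summing the three bad events (vertex concentration, component size, and per-component estimator failure) yields the advertised $1 - 3/n^2$ success probability. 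There is no deep obstacle beyond bookkeeping; the only delicate point is the arithmetic showing that the single threshold $\sqrt{2n\log(mn^3)}$ simultaneously exceeds twice the binomial deviation (so same-source \1s are clustered) and falls below $g$ minus twice the deviation (so different-source \1s are separated), which is exactly what the constant factor $4$ in the gap condition enforces.
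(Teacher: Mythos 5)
Your proposal follows essentially the same route as the paper's proof: a single Hoeffding bound on the positions of received \1s (with the threshold set so that same-source vertices are adjacent and, via the gap condition, different-source vertices are not), a Chernoff bound guaranteeing each source \1 populates its component with $\Omega(m)$ vertices, and the binomial mean estimator plus union bounds to recover each $p_u$ exactly. The only differences are cosmetic (your slightly more careful $z_v = 1+\mathrm{Bin}(p_u-1,1/2)$ model and constant bookkeeping that lands at $4n^{-2}$ rather than $3n^{-2}$, a slack the paper itself absorbs in its constants), so the argument is correct and matches the paper.
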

\begin{proof}
First, let us associate with each vertex $v$ an unknown label $y_v \in
[k]$ describing the correspondence between this received \1 and a \1
in the original string. The first observation is that if $y_v = u$
then $z_v \sim \textrm{Bin}(p_u, \frac{1}{2})$ and we always have
$p_u \leq n$. Thus, by Hoeffding's inequality and a union bound, we
have
\begin{align*}
&\Pr[\exists v \in V : |z_v - p_u/2| > \tau]  \leq |V|\exp(-2\tau^2/n) \leq \exp(\log(mk) - 2\tau^2/n) \ .
\end{align*}
And so with $\tau = \sqrt{n\log(mkn^{2})/2}$, with probability {\color{black} at least}
$1-n^{-2}$ all $z_v$ values concentrate appropriately.

This event immediately implies that $G$ is \emph{consistent} in the
sense that if $y_v = y_{v'}$ then $(v,v') \in E$. Further the gap
condition implies the converse property, which we call \emph{purity}:
if $y_v \ne y_{v'}$ then $(v,v') \notin E$. Formally, if $y_v \ne
y_{v'}$ then
\begin{align*}
g/2 \leq~ & |p_{y_v}/2 - p_{y_{v'}}/2| \\
\leq~ & |z_v - p_{y_v}/2| + |z_v - z_{v'}| + |p_{y_{v'}}/2 - z_{v'}| \\
\leq~ & \sqrt{2n\log(mkn^2)} + |z_v - z_{v'}|
\end{align*}
which implies that $|z_v - z_{v'}| \geq g/2
- \sqrt{2n\log(mkn^2)} > \sqrt{2n \log (mn^3)}$. Hence $(v,v') \notin
E$.

The above two properties reveal that each connected component can be
identified with a single index $u \in [k]$ corresponding to a $\1$ in the original string and the component contains
exactly the received \1s corresponding to that original one (formally
$C_u = \{v: y_v = u\}$). From here we simply use the binomial
estimator on each component. First observe that, by a Chernoff bound,
with probability {\color{black} at least} $1-k\exp(-m/36)$, each \1 from the original string
appears in at least a $1/3$-fraction of the traces, so that
$|C_u| \geq m/3$. Then apply the guarantee for the binomial mean
estimator along with another union bound over the $k$
positions. Overall the failure probability is {\color{black} at most}
\begin{align*}
n^{-2} + k\exp(-m/36) + 2k\exp\Big(\frac{-m}{24n^2}\Big)
\end{align*}
which is at most $3n^{-2}$ with $m \geq 24n^2\log(2kn^2)$. With this
choice, we can tolerate $g = O(\sqrt{n \log n})$.
\end{proof}

\paragraph*{The recursion}
The algorithm \texttt{RecurGap} (\pref{alg:recursive}) uses the clustering scheme
in~\texttt{FindPositions} in a recursive manner to estimate the
parameters $p_1, \ldots, p_k$ even when the gap $g$ is much less than
$\sqrt{n\log n}$. Define a series of threshold parameters, to be used
in each level of the recursion:
\begin{align*}
\tau_1 &= 4\sqrt{2n \log (mnk)};\\
\tau_d & = 80\sqrt{k\tau_{d-1}\log (mnk)}, \quad d=2, \ldots,D
\end{align*}
where the total number of levels is $D$. Note that, $\tau_d
\leq 80^2\cdot 4\sqrt{2}\cdot k^{1-\frac{1}{2^{d-1}}}n^{\frac{1}{2^d}}\log^{1-1/2^d}(nmk)$. In particular,
if $D= O(\log \log n)$ then we have $\tau_D = O(k\log(n))$.  

Recall that $V$ is the vertex set for the graph used above, where each
vertex $v$ corresponds to a received \1 and is associated with an unknown
original one $y_v$.  Our main result for \texttt{RecurGap} is the
following.
\begin{theorem}
\label{thm:hierarchy}
Assume $g \ge 2\tau_D$ for some $D\leq \log \log (n)$. Then with probability at least
$1-1/n$,~\pref{alg:recursive} (\texttt{RecurGap}) with $D$ levels of
recursion returns sets $S_1,\ldots,S_k \subset V$ such that {\color{black} $\forall u \in [k]$ }
\begin{enumerate}
\item $S_u \subset \{v \in V: y_v = u\}$.
\item $|S_u| \ge m/\log^5(n)$. 
\end{enumerate}
\end{theorem}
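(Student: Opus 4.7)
The plan is to combine the three lemmas---Consistency (\pref{lem:consistency}), Length Bound (\pref{lem:length_bd}), and Length Filter (\pref{lem:filter})---by induction on the recursion depth $d$. I would maintain two invariants level-by-level: a \emph{purity} invariant stating that after processing level $d$, each component $C_i^{(d)}$ equals $\bigcup_{u \in U_i^{(d)}} V_{d,u}$ for a contiguous block $U_i^{(d)} \subset [k]$ (so Consistency holds and no two distinct blocks have erroneously merged), and a \emph{population} invariant lower-bounding $|V_{d,u}|$ for every $u \in [k]$. Given these invariants at level $D$, we can take $S_u$ to be the level-$D$ component that corresponds to the singleton block $\{u\}$.

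For the population invariant, I would apply \pref{lem:filter} inductively. A vertex $v$ with $y_v = u$ is guaranteed to survive level $d$ whenever its trace $t_v$ contains the leftmost and rightmost \1 of the block currently containing $u$; under the $(1/2,1/2)$-channel this event has probability at least $1/4$, and the events are independent across the $m$ traces. Chaining the bound across all $D \le \log \log n$ levels yields per-trace survival probability at least $4^{-D} \ge 1/\log^2 n$. A Chernoff bound then gives $|V_{D,u}| \ge m/(2\log^2 n) \ge m/\log^5 n$ with failure probability $\exp(-\Omega(m/\log^2 n))$. A union bound over $u \in [k]$ and over the high-probability events underlying the three lemmas (each failing with probability at most $1/n^2$) produces total failure probability at most $1/n$.

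The main obstacle is verifying the purity invariant. \pref{lem:consistency} gives one direction---vertices with the same label are never separated---but the converse, that vertices with different labels are never merged, requires careful coordination between the filter slack at one level and the edge threshold at the next. The threshold schedule $\tau_d = 80\sqrt{k\tau_{d-1}\log(mnk)}$ is chosen precisely so that the $O(\sqrt{k\tau_{d-1}\log(mnk)})$ slack introduced by the length filter at level $d-1$ (\pref{lem:filter}) matches the binomial deviation permitted by the edge threshold at level $d$. At the base level, $\tau_1 = 4\sqrt{2n\log(mnk)}$ together with Hoeffding concentration of $z_v \sim \mathrm{Bin}(p_{y_v}, 1/2)$ prevents erroneous same-trace edges between vertices with different labels, and cross-trace edges are confined by the edge-set restriction $v, w \in C_j^{(d-1)}$ built into the algorithm. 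Induction then propagates the purity property downward until $\tau_D \le g/2$, at which point the gap assumption directly rules out any two original \1s sharing a level-$D$ component, completing the argument and allowing us to read off $S_u$ as desired.
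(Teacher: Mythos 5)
Your proposal is correct and takes essentially the same approach as the paper's proof: condition on the three lemmas via a union bound over the $D \le \log\log n$ levels, deduce purity of the level-$D$ components from $g \ge 2\tau_D$ together with the $\tau_D/8$ concentration of the shifted positions, and lower-bound $|S_u|$ through the per-level $1/4$ endpoint-retention probability guaranteed by the length filter plus a Chernoff bound. The only differences are cosmetic---the paper applies a Chernoff bound at each level (retaining a $1/5$ fraction) instead of chaining the per-trace bound $4^{-D}\ge 1/\log^2 n$, and your remark that $\tau_1$-concentration ``prevents same-trace edges between vertices with different labels'' at the base level is slightly misstated (such edges do occur within a block; what concentration gives is the consistency and length-bound properties), but this does not affect the argument.
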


The theorem follows from the three lemmas stated earlier. Here we
restate the lemmas and provide the proofs.
\begin{lemma*}
[Consistency, restatement of~\pref{lem:consistency}]
At level $d$ let $V_{d,u} = \{v \in V_d, y_v = u\}$ for each $u \in
[k]$. Then with probability $1-1/n^2$, for each $d$ and $u$ there exists some
component $C^{(d)}_i$ at level $d$ such that $V_{d,u} \subset
C^{(d)}_i$.
\end{lemma*}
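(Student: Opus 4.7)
The plan is to prove the lemma by induction on the level $d$, in tandem with the Length Bound and Length Filter lemmas, since the three statements are interdependent: establishing consistency at level $d$ uses the Filter guarantee at level $d{-}1$, which itself rests on consistency and length bounds at that level. A union bound over the $D = O(\log\log n)$ levels then yields total failure probability $o(1/n)$, as needed.

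For the base case $d=1$, note that $z_v^{(1)} = z_v \sim \mathrm{Bin}(p_{y_v},1/2)$. Hoeffding together with a union bound over the $\poly(n)$ received $\1$s gives $|z_v - p_{y_v}/2| \le \sqrt{n\log(nmk)/2}$ for all $v$ with probability at least $1-1/n^2$. Hence for any $v,w \in V_{1,u}$ we get $|z_v-z_w|\le \sqrt{2n\log(nmk)} = \tau_1/4$, so $(v,w)\in E_1$ and $V_{1,u}$ lies in a single connected component of $G_1$.

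For the inductive step, fix $d \ge 2$, pick $u$, and take $v,w \in V_{d,u}$. Since $V_d\subseteq V_{d-1}$, the inductive consistency hypothesis places $v,w$ in a common component $C_i^{(d-1)}$, so the edge condition at level $d$ reduces to verifying $|z_v^{(d)}-z_w^{(d)}|\le \tau_d/4$. By construction $z_v^{(d)}$ counts retained bits between the first surviving $\1$ of the block $U_i^{(d-1)}$ and the $\1$ at $p_u$ in trace $t_v$, and conditioned on both endpoints being kept this is distributed as $\mathrm{Bin}(p_u - p_{\min}(v,U_i^{(d-1)}),1/2)$. The Length Bound at level $d{-}1$ caps the index by $2k\tau_{d-1}$, so Hoeffding gives an $O(\sqrt{k\tau_{d-1}\log(nmk)})$ deviation for each of $z_v^{(d)}$ and $z_w^{(d)}$, while the Length Filter at level $d{-}1$ forces both $p_{\min}(v,U_i^{(d-1)})$ and $p_{\min}(w,U_i^{(d-1)})$ to lie within $8\sqrt{k\tau_{d-1}\log(nmk)}$ of $p_{u_L}$, so their means differ by $O(\sqrt{k\tau_{d-1}\log(nmk)})$ as well. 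Since $\tau_d = 80\sqrt{k\tau_{d-1}\log(nmk)}$, the triangle inequality yields $|z_v^{(d)}-z_w^{(d)}|\le \tau_d/4$, closing the induction.

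The main obstacle is calibrating the recursion so that the three lemmas mutually support each other at every level. The threshold $\tau_d$ has to simultaneously absorb the binomial noise of the shifted coordinates, the prefix bias introduced by the previous filter, and the union-bound slack across $\poly(n)$ vertices and $\log\log n$ levels; the generous constant in $\tau_d = 80\sqrt{k\tau_{d-1}\log(nmk)}$ is precisely what lets all three bounds close under a single inductive invariant.
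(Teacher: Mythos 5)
Your proposal is correct and matches the paper's argument: the paper likewise proves consistency by induction in tandem with the Length Bound and Length Filter lemmas, bounding $|z_v^{(d)} - (p_{y_v}-p_{u_{i,1}^{(d-1)}})/2|$ by the Hoeffding deviation over a range of at most $2k\tau_{d-1}$ plus the $8\sqrt{k\tau_{d-1}\log(nmk)}$ prefix error from the filter, giving $\tau_d/8$ and hence edges between all vertices with the same label. The only cosmetic difference is that you phrase the conclusion via the triangle inequality on pairs $v,w$ rather than on the common centering point, which is the same calculation.
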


\begin{lemma*}[Length Bound, restatement of~\pref{lem:length_bd}]
At level $d$, the following holds with probability at least $1-1/n^2$:
For every component $C_i^{(d)}$ at level $d$, we have $L^{(d,i)} \leq
2k\tau_d$. Moreover if $U$ is a contiguous subsequence of
$\{1,\ldots,k\}$ with $\bigcup_{u \in U} V_{d,u} \subset C_i^{(d)}$,
then $| \min_{u \in U} p_u - \max_{u \in U} p_u | \leq 4k\tau_d$ {\color{black}with high probability}.
\end{lemma*}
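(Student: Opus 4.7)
My plan is to prove the Length Bound lemma by induction on the recursion level $d$, using the Length Filter guarantee (\pref{lem:filter}) at the previous level as the key input for the inductive step. I will treat both claims ($L^{(d,i)} \leq 2k\tau_d$ and span $\leq 4k\tau_d$) together, since the first follows from the second via binomial concentration within a single trace.

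For the base case $d=1$, there is no prior shift: $z_v^{(1)} = z_v \sim \textrm{Bin}(p_{y_v}, 1/2)$. Hoeffding together with a union bound over all $mk$ received \1s places every $z_v$ within $\tau_1/8$ of $p_{y_v}/2$ with probability at least $1 - n^{-3}$. For any level-1 component with contiguous label set $U$, any two consecutive labels $u, u+1 \in U$ must be joined by an edge of $G_1$ whose endpoints carry labels $u$ and $u+1$: by contiguity any connecting path can only pass through labels in $\{u, u+1\}$, so at least one edge makes the transition. Combining the edge condition ($\tau_1/4$) with two $\tau_1/8$ Hoeffding slacks forces $p_{u+1} - p_u \leq \tau_1$. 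Summing over at most $k-1$ such pairs yields the span bound, and reading off max/min $z_v$ within one trace gives the subtrace length.

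For the inductive step ($d \geq 2$), the subtlety is that $z_v^{(d)}$ is measured relative to a shifted origin — the start of the surviving level-$(d-1)$ subtrace for trace $t_v$. I will combine \pref{lem:filter} at level $d-1$ (which places this origin within $8\sqrt{k\tau_{d-1}\log(nmk)} \leq \tau_d/10$ of the original position $p_{u_L^{(d-1)}}$) with the inductive hypothesis $L^{(d-1,j)} \leq 2k\tau_{d-1}$ to apply Hoeffding on a substring of length at most $2k\tau_{d-1}$. This shows $z_v^{(d)}$ concentrates around $(p_{y_v} - p_{u_L^{(d-1)}})/2$ within error $\sqrt{k\tau_{d-1}\log(mnk)} \leq \tau_d/80$. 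Two vertices joined by an edge of $E_d$ lie in the same level-$(d-1)$ component, so they share the same calibration origin $p_{u_L^{(d-1)}}$, and the edge condition $|z_v^{(d)} - z_w^{(d)}| \leq \tau_d/4$ translates to $|p_{y_v} - p_{y_w}| \leq \tau_d/2 + O(\tau_d/10)$. Repeating the contiguous-labels argument then gives $p_{u+1} - p_u \leq \tau_d$, summing yields the span bound, and the subtrace-length bound follows. A union bound over the $D \leq \log \log n$ levels and the at most $k$ components per level yields the overall $1 - 1/n^2$ failure probability.

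The main obstacle is keeping the additive error budget under control across the induction. Three sources combine at each level: the length-filter bias at level $d-1$, the binomial deviation within a subtrace of length $\leq 2k\tau_{d-1}$, and the edge threshold $\tau_d/4$. The recursion $\tau_d = 80\sqrt{k\tau_{d-1}\log(mnk)}$ is designed precisely so that each error is a small constant fraction of $\tau_d$, but writing down explicit constants and verifying they combine to give the claimed $\tau_d$-type bounds, while leaving enough slack for \pref{lem:filter} at the next level, requires care. A secondary subtlety is confirming contiguity of $U$: \pref{lem:consistency} only guarantees that each label's vertices stay together, so I need to argue separately, via the order-preserving structure of \1s within a single trace and how connectivity propagates across traces, that the label set of any component forms a contiguous block of $[k]$.
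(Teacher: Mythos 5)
Your proposal is correct and follows essentially the same route as the paper: a base-case Hoeffding bound at scale $\tau_1/8$, an inductive step that combines the level-$(d-1)$ length-filter bias from \pref{lem:filter} with binomial deviation over a substring of length $\le 2k\tau_{d-1}$ (using the same calibration origin for both endpoints of any $E_d$ edge) to get $\tau_d/8$ concentration of the shifted coordinates, and then a counting argument over at most $k$ labels to reach $2k\tau_d$ and $4k\tau_d$. The only nit is your justification that a connecting path ``can only pass through labels in $\{u,u+1\}$'' (false in general); what you actually need, and what does hold, is that some edge on the path crosses from a label $\le u$ to a label $\ge u+1$, which gives the same per-pair bound $p_{u+1}-p_u\le\tau_d$, and note that contiguity of the component's label set need not be established since the lemma assumes $U$ contiguous and the $L^{(d,i)}$ bound goes through for an arbitrary label set.
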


\begin{lemma*}[Length Filter, restatement of~\pref{lem:filter}]
Assume $m \geq n$. At level $d$, the following holds with probability
at least $1-1/n^2$: 
For a component $C_i^{(d)}$ at level $d$, let $U$ be the maximal
contiguous subsequence of $\{1,\ldots,k\}$ such that $\bigcup_{u \in
  U} V_{d,u} \subset C_i^{(d)}$. Define $u_L = \argmin_{u \in U} p_u$
and $u_R = \argmax_{u \in U} p_u$. Then for any $v \in C_i^{(d)}$, if
$u_L$ and $u_R$ are present in $t_v$, then $v$ survives to round
$d+1$, that is $v \in V_{d+1}$. Moreover, for any $v \in V_{d+1}$, let
$p_{\min}(v,U)$ denote the original position of the first \1 from $U$
that is also in the trace $t_v$. Then we have $p_{\min}(v,U) - p_{u_L}
\leq 8\sqrt{k\tau_d\log(nmk)}$ {\color{black}with high probability}.
\end{lemma*}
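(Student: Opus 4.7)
}

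The plan is to reduce everything to binomial concentration of subtrace lengths and then compare the filter threshold against the gap created by missing a few \1s from either end of $U$. Throughout, I would condition on the ``good'' event $\mathcal{E}$ that: (i) $L^\star := p_{u_R}-p_{u_L} \le 2k\tau_d$ (from~\pref{lem:length_bd}), and (ii) every relevant binomial quantity concentrates. More precisely, for each trace $j$ and for each pair $(a,b)$ of original positions, the number of bits of $x[a+1,\ldots,b-1]$ that are retained in trace $j$ lies within $\sqrt{(b-a)\log(mnk)/2}$ of $(b-a)/2$. By Hoeffding plus a union bound over the at most $m \cdot n^2$ such events, $\mathcal{E}$ has probability at least $1-1/n^2$.

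First I would pin down $L^{(d,i)}$. Because $m\geq n$ and $q=1/2$, with probability at least $1-(3/4)^n\geq 1-1/n^2$ some trace $j^\star$ retains both $u_L$ and $u_R$. The corresponding subtrace $\tilde x_{j^\star}^{(d,i)}$ starts at the image of $u_L$ and ends at the image of $u_R$, so its length equals $2+\mathrm{Bin}(L^\star-1,1/2)$, which on $\mathcal{E}$ lies in $L^\star/2 \pm \sqrt{L^\star\log(mnk)/2}$. The same concentration applied to every trace gives
\begin{align*}
L^{(d,i)} \in \bigl[L^\star/2 - \sqrt{L^\star\log(mnk)/2},\; L^\star/2 + \sqrt{L^\star\log(mnk)/2}\bigr],
\end{align*}
and in particular $L^{(d,i)} \le 2k\tau_d$ and $2\sqrt{2L^{(d,i)}\log(kmn)} \le 4\sqrt{k\tau_d\log(mnk)}$.

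Second, for the survival statement, suppose $u_L$ and $u_R$ are both in $t_v=:j$. Then $\mathrm{len}(\tilde x_j^{(d,i)})\ge L^\star/2-\sqrt{L^\star\log(mnk)/2}$ under $\mathcal{E}$. Combined with the upper bound on $L^{(d,i)}$, this gives $\mathrm{len}(\tilde x_j^{(d,i)}) \ge L^{(d,i)} - \sqrt{2L^\star\log(mnk)} \ge L^{(d,i)} - 2\sqrt{2L^{(d,i)}\log(kmn)}$ (using $L^\star \le 2k\tau_d$ and the lower bound on $L^{(d,i)}$ together with the threshold schedule). Hence $v$ survives.

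Third, for the bias bound, let $v\in V_{d+1}$ with $t_v=j$, and write $\Delta_L = p_{\min}(v,U)-p_{u_L} \ge 0$ and $\Delta_R = p_{u_R}-p_{\max}(v,U)\ge 0$. Conditional on which positions of $U$ are retained in $j$, the subtrace $\tilde x_j^{(d,i)}$ spans a fixed interval of length $L^\star-\Delta_L-\Delta_R$ in $x$, so its length is $2+\mathrm{Bin}(L^\star-\Delta_L-\Delta_R-1,1/2)$. On $\mathcal{E}$ this is at most $(L^\star-\Delta_L-\Delta_R)/2 + \sqrt{L^\star\log(mnk)/2}+2$. For $v$ to survive, this must exceed $L^{(d,i)} - 2\sqrt{2L^{(d,i)}\log(kmn)} \ge L^\star/2 - C\sqrt{k\tau_d\log(mnk)}$ for a modest absolute constant $C$. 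Rearranging and using $L^\star\le 2k\tau_d$ yields $\Delta_L+\Delta_R \le 8\sqrt{k\tau_d\log(nmk)}$ after choosing constants, which gives the claim since $\Delta_L\ge 0$.

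The main obstacle I anticipate is purely constant-tracking: one must align the constants in the filter threshold ($2\sqrt{2L^{(d,i)}\log(kmn)}$), the Hoeffding deviation, and the threshold schedule $\tau_d=80\sqrt{k\tau_{d-1}\log(mnk)}$ so that the final bound reads exactly $8\sqrt{k\tau_d\log(nmk)}$ rather than a larger constant. The choice of $80$ in $\tau_d$ is clearly calibrated precisely for this purpose, so the inequalities should close, but this bookkeeping is the delicate part that drives the rest of the recursion (and in particular is what makes~\pref{lem:consistency} go through at level $d+1$ via the threshold $\tau_{d+1}$).
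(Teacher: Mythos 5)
Your proposal is correct and follows essentially the same route as the paper: Hoeffding concentration of subtrace lengths (union-bounded over traces and endpoint pairs), upper and lower bounds on $L^{(d,i)}$ around $(p_{u_R}-p_{u_L})/2$, and a rearrangement of the length-filter inequality, with $L^{(d,i)} \le 2k\tau_d$ from~\pref{lem:length_bd} converting the deviation into the stated $O(\sqrt{k\tau_d\log(nmk)})$ bias bound. The only notable difference is that the paper lower-bounds $L^{(d,i)} \ge (p_{u_R}-p_{u_L})/2$ exactly via a binomial-symmetry argument over the $m\ge n$ traces, rather than your concentration-based $L^\star/2 - O\bigl(\sqrt{L^\star\log(mnk)}\bigr)$, which is what lets its constants land on $8\sqrt{k\tau_d\log(nmk)}$; your version needs only the slightly tighter constant bookkeeping you already flagged (and, if needed, a marginally larger constant in the $\tau_d$ schedule).
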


The proofs of the lemmas are all-intertwined. In the induction step we
will assume that all lemmas hold at the previous level of the
recursion. Throughout we repeatedly take union bound over all $m$
traces and all up-to-$k$ components, and set the failure probability
for each event to be $1/n^2$. In applications of Hoeffding's
inequality, this produces a $2\log(nmk)$ term inside the square root.
\begin{proof}[Proof of~\pref{lem:length_bd}]
We proceed by induction. For the base case, by Hoeffding's inequality,
we know that for all $v \in V_1$ we have
\begin{align*}
|z_v - p_{y_v}/2| \leq \sqrt{n\log(mkn)} = \tau_1/8
\end{align*}
except with probability {\color{black} at most} $n^{-2}$. This means
that the position corresponding to a single index $u \in [k]$ can span
at most $\tau_1/4$ positions. {\color{black} Formally, if two vertices
  $v\ne v'$ have $y_v=y_{v'}$ then, by the triangle inequality, $|z_v
  - z_{v'}| \leq \tau_1/4$. Additionally, if two vertices $v\ne v'$
  have $y_v \ne y_{v'}$ and $|z_v - z_{v'}| \leq \tau_1/4$ (so that
  $(v,v')\in E_1$), then $|p_{y_v}/2 - p_{y_{v'}}/2| \leq
  \tau_1/2$. Use these two facts, along with the fact that there are
  at most $k$ distinct values for $y_v$, the total length of any
  connected component is at most $(k-1)\tau_1 + k\tau_1/4 \leq
  2k\tau_1$.}
The second claim follows from the concentration statement.

For the induction step, assume that the connected components at level
$d-1$ have length at most $2k\tau_{d-1}$. Fix a connected component
$C_i^{(d-1)}$ and let $u_{i,1}^{(d-1)}$ denote the left-most
original \1 present in $C_i^{(d-1)}$ ($u_{i,1}^{(d-1)} = \min\{y_v :
v \in C_{i}^{(d-1)}\}$).  By another application of Hoeffding's
inequality and using the error guarantee in~\pref{lem:filter}, we
have that
\begin{align*}
|z_v^{(d-1)} - (p_{y_v} - p_{u_{i,1}^{(d-1)}})/2| 
&\leq |z_v^{(d-1)} - (p_{y_v}-p_{\min}(v,U_i^{(d-1)}))/2| + |p_{\min}(v,U_i^{(d-1)}) - p_{u_{i,1}^{(d-1)}}|/2 \\
&\leq \sqrt{2k\tau_{d-1}\log(mkn)} + 8\sqrt{k\tau_{d-1}\log(mkn)} \leq \tau_d/8
\end{align*}
except with probability {\color{black} at most} $n^{-2}$. From here, the same argument as in
the base case yields the claim.
\end{proof}

\begin{proof}[Proof of~\pref{lem:filter}]
We have two conditions to verify. Fix a component $C_i^{(d)}$ at level
$d$ with maximal contiguous subsequence $U \subset [k]$ and recall the
definitions $u_L = \argmin_{u \in U}p_u$ and $u_R = \argmax_{u \in U}
p_u$. By another concentration bound, we know that
\begin{align*}
\forall j: \textrm{len}(\tilde{x}_j^{(d,i)}) &\leq (p_{u_R} - p_{u_L})/2  + \sqrt{(p_{u_R} - p_{u_L})\log(mnk)}
\end{align*}
with probability {\color{black} at least} $1-n^{-2}$. This reveals that:
\begin{align*}
L^{(d,i)} \leq (p_{u_R} - p_{u_L})/2 + \sqrt{(p_{u_R} - p_{u_L})\log(mnk)}
\end{align*}
Moreover, for any trace $j$ that contains $u_R, u_L$ the tail bound is
two-sided:
\begin{align*}
\forall j\ \textrm{s.t.}\ u_L,u_R \in \tilde{x}^{(d,i)}_j &: \ \Big|\textrm{len}(\tilde{x}_j^{(d,i)}) - (p_{u_R} - p_{u_L})/2\Big|  \leq \sqrt{(p_{u_R} - p_{u_L})\log(mnk)}.
\end{align*}
Note that we also have $L^{(d,i)} \ge (p_{u_R} - p_{u_L})/2$ with
overwhelming probability as:
\begin{align*}
\Pr[\forall j: \textrm{len}(\tilde{x}_j^{(d,i)}) \leq (p_{u_R}-p_{u_L})/2] \leq \prod_{j=1}^m \Pr[\len(\tilde{x}_j^{(d,i)}) 
&\leq (p_{u_R} - p_{u_L})/2 \mid u_R,u_L] \cdot \Pr[u_R,u_L]\\
& \leq \left(\frac{1}{2}\cdot\frac{1}{4}\right)^{m} = 2^{-3m}
\end{align*}
Here we are using the symmetry of the binomial distribution. Thus,
with $m \geq n$, the failure probability here is $\exp(-\Omega(n)))$,
which is negligible.

Using the upper bound on $L^{(d,i)}$ reveals that
$\tilde{x}_j^{(d,i)}$ survives, since
\begin{align*}
\textrm{len}(\tilde{x}_j^{(d,i)}) 
&\geq (p_{u_R} - p_{u_L})/2 - \sqrt{(p_{u_R} - p_{u_L})\log(mnk)}\\
& \geq L^{(d,i)} - 2\sqrt{(p_{u_R} - p_{u_L})\log(mnk)} \\
&\geq L^{(d,i)} - 2\sqrt{2L^{(d,i)}\log(mnk)}.
\end{align*}

For the second condition, assume that some trace $j$ survives but does
not contain $u_L$. Let $u_{\min} = \argmin\{y_v: v \in C_i^{(d)}, t_v
=j\}$ denote the first original \1 in this trace that belongs to
$C_i^{(d)}$s block (By definition $p_{u_{\min}}=p_{\min}(v,U)$ for each
$v: t_v = j$). Then we know that
\begin{align*}
\len(\tilde{x}_j^{(d,i)}) &\leq (p_{u_R}- p_{u_{\min}})/2 +\sqrt{ (p_{u_R} - p_{u_{\min}})\log(nmk)} \\
&\leq (p_{u_R}- p_{u_{\min}})/2 + \sqrt{2L^{(d,i)}\log(nmk)}
\end{align*}
but since $\tilde{x}_j^{(d,i)}$ passed through the length filter, we
also have a lower bound on its length, and so we get that
\begin{align*}
&p_{u_{\min}} - p_{u_L} \leq 4\sqrt{2L^{(d,i)}\log(nmk)} \leq 8\sqrt{k\tau_d\log(nmk)}
\end{align*}
where the last inequality follows from~\pref{lem:length_bd}.
\end{proof}

\begin{proof}[Proof of~\pref{lem:consistency}]
{\color{black} The proof here is similar to that of~\pref{lem:length_bd}.}
Fix
a component $C_i^{(d-1)}$ with corresponding block $U_i^{(d-1)}\subset
[k]$ at level $d-1$ and assume that all three lemmas apply for all
previous levels. For a subtrace $x_j^{(d-1,i)}$ in this component
observe and recall the definition $u_{i,1}^{(d-1)} = \min\{y_v: v \in
C_i^{(d-1)}\}$ and $p_{\min}(v,U_i^{(d-1)})$, which is the position of
the first \1 in $U_i^{(d-1)}$ that appears in trace $t_v=j$. Since the length of the subtrace is at most $2k\tau_{d-1}$ by~\pref{lem:length_bd} we get that
\begin{align}
|z_v^{(d-1)} - (p_{y_v} - p_{u_{i,1}^{(d-1)}})/2|  \notag 
&\leq |z_v^{(d-1)} - (p_{y_v} - p_{\min}(v,U_i^{(d-1)}))/2| \notag \\
&+ |p_{\min}(v,U_i^{(d-1)}) - p_{u_{i,1}^{(d-1)}}|/2 \notag \\
&\leq \sqrt{2k\tau_{d-1}\log(mnk)} + 8\sqrt{k\tau_{d-1}\log(mkn)} = \tau_d/8.\label{eq:deep_hoeffding}
\end{align}
Here the last inequality uses Hoeffding's bound along
with~\pref{lem:filter} at level $d-1$. This implies that the
clustering at level $d$ is consistent.
\end{proof}

\begin{proof}[Proof of~\pref{thm:hierarchy}]
First take a union bound over $D \leq \log \log n$ applications of the
three lemmas, so that the total failure probability is $cD/n^2 \leq
1/n$. From now, assume that the events in the three lemmas all hold
for all levels. In particular, this implies that the components
$C_i^{(D)}$ are consistent. We must verify that the clusters are pure
and then track how many vertices remain.

For the first claim, let us revisit the proof
of~\pref{lem:consistency}. If two vertices, say $v,v'$, in a
component at level $D-1$ corresponded to different \1s, say $u,u'$
then by the gap condition, we know that $|p_u - p_{u'}|\ge g$. On the
other hand, we know that~\pref{eq:deep_hoeffding} holds, and we will
use this to prove that no edge appears between these vertices. We have that
\begin{align*}
|z_v-z_{v'}| &\geq |p_{y_v} - p_{y_{v'}}|/2 - \tau_{D}/8 - \tau_{D}/8 \\
&\ge g/2 - \tau_D/4,
\end{align*}
and so, if $g/2 \geq \tau_D$, then the two vertices will not share an
edge. The argument applies for all pairs and hence the clusters at
level $D$ are pure, which establishes the first claim in
the~\pref{thm:hierarchy}.

For the second claim, note that by~\pref{lem:filter}, for every
component at every level, if a trace contains the two endpoints of
that component, then it will survive the filter. Hence, in every
filtering step we expect to retain $1/4$ of the subtraces passing
through, and, by a Chernoff bound, we will retain $1/5$ of the
subtraces except with $\exp(-\Omega(n))$, provided $m \geq n$. Since
we perform $D=\log\log n$ levels, we retain $m/5^{\log \log n} =
m/\log^5(n)$ traces in each cluster with high probability.
\end{proof}

\paragraph*{Removing Bias: The reverse recursion}

Now that we have isolated the vertices into pure clusters, we need to
work our way up through the recursion to remove biases introduced by
the hierarchical clustering. For any component $C_i^{(D-1)}$
corresponding to block $U_i^{(D-1)} \subset [k]$ at level $D-1$, since
the components at level $D$ are pure, we can identify exactly the
subtraces that contain the first and last \1 in the block. We throw
away all other traces, which de-biases the length filter at level
$D-1$.

Unfortunately for a component $C_i^{(d-1)}$ corresponding to a block
$U_i^{(d-1)}$ at level $d-1$, we cannot identify exactly the subtraces
that contain the exactly the first and last \1 in the block. However,
we know that $C_i^{(d-1)}$ is further refined into sub-components
$\{C_{i'}^{(d)}\}$ at level $d$, and by induction we can identify all
the traces that contain the left-most and right-most \1 in the
left-most and right-most sub-components. We identify all such traces
and eliminate the rest to debias the length filter at level
$d-1$. See~\pref{fig:debias} for an illustration.

To debias this length filter, we filter based on the presence of
two \1s at level $d-1$ (just the end points), and two futher \1s at
level $d$ (the inner endpoints of the first and last sub-components),
four further \1s at $d+1$, and so on. So, just to debias the length
filter at level $d-1$ we require $2^{D-(d-1)}$ \1s to be
present. Since we must debias all length filters above a particular
component, we require the presence of $\sum_{d=1}^{D-1} 2^{D-d} \leq
2^{D} \leq \log_2(n)$ \1s. The probability of all $\log_2(n)$ of
these \1s appearing is $1/n$ and by Chernoff bound, with high
probability at least $m/2n$ of our traces will contain all of
these \1s.

For any \1, $u$, in the original string, let $S$ denote the subset of
$\log_2(n)$ \1s, whose presence we require to debias the length
filters above the pure component containing $u$. After the debiasing
step, the remaining vertices in the component containing $u$ have
$z_v$ values distributed as
\begin{align*}
z_v \sim \textrm{Bin}(p_u - 1 - |S_L|, 1/2) + (|S_L|+1)
\end{align*}
where $|S_L|$ is the number of \1s in $|S|$ that appear before $u$ in
the sequence, and the final 1 is due to the presence of $u$. Using the
binomial mean estimator, we can therefore estimate $p_u$ with
probability {\color{black} at least} $1-O(1/n)$, provided $m \ge n^2\log(n)$. Thus, $\poly(n)$
traces suffice to recover all $p_u$ values, provided that $g > \tau_D$
and $D = \log_2 \log_2 n$. This proves~\pref{thm:gaps_intro}.

\section{Missing Proofs from Section \ref{sec:tensors}}
\label{app:misstensors}

\begin{proof}[Proof of Lemma \ref{lem:multivariate_littlewood}]
Fix $L > 0$ and define the polynomial
\begin{align*}
&F(z_1,z_2,\dots,z_k) =\prod_{1\leq a_1,a_2,\dots,a_k \leq L} f( z_1 e^{\frac{\pi i a_1 }{L}}, z_2 e^{\frac{\pi i a_2 }{L}},\dots,z_k e^{\frac{\pi i a_k }{L}} ).
\end{align*}
We first show that there exists $z^\star_1,z^\star_2,\dots,z^\star_k$ on the unit disk ($|z^{\star}_1|=|z^{\star}_2|=\dots=|z^{\star}_k|=1$) such that
$F(z^\star_1,z^\star_2,\dots,z^\star_k) \geq 1$. This follows from an iterated application of the
maximum modulus principle. First factorize $F(z_1,z_2,\dots,z_k) = z_k^{s_k}
F^1(z_1,z_2,\dots,z_k)$ where $s_k$ is chosen such that $F^1(z_1,z_2,\dots,z_k)$ has no common
factors of $z_k$. Since $F$ has non-zero coefficients, this implies
that $F^1(z_1,z_2,\dots,0)$ is a non-zero polynomial and therefore using the maximum modulus principle, for any fixed $z_1,z_2,\dots,z_{k-1}$, there exists a value of $z_k=\bar{z}_k$ such that $|\bar{z}_k|=1$ and 
\begin{align*}
|F(z_1,z_2,\dots,\bar{z}_k)| \ge |F^1(z_1,z_2,\dots,0)|.
\end{align*}
Subsequently we can further factorize $F^1(z_1,z_2,\dots,0)=z_{k-1}^{s_{k-1}}F^2(z_1,z_2,\dots,z_{k-1})$ so that $F_2(z_1,z_2,\dots,z_{k-1})$ has no common factors in $z_{k-1}$. Repeating this procedure $k$ times, we can show the following chain of inequalities 
\begin{align*}
|F(z^\star_1,z^\star_2,\dots,z^{\star}_k)| &= |F^1(z^\star_1,z^\star_2,\dots,z^{\star}_k)| \geq |F^1(z^\star_1,\dots,0)| \\
 &= |F^2(z^\star_1,z^\star_2,\dots,z^{\star}_{k-1})| \\
 &\geq |F^2(z^\star_1,\dots,0)| \geq \dots |F^k(z^{\star}_1)|\ge |F^k(0)| \ge 1 
\end{align*}

Now, for any $a_1,a_2,\dots,a_k \in \{1,\ldots,L\}$ we have
\begin{align*}
1 \leq |F(z_1^\star,z_2^\star,\dots,z_k^{\star})| \leq |f(z_1^\star e^{\pi ia_1/L}, z_2^\star e^{\pi i a_2/L},\dots, z_k^\star e^{\pi i a_k/L})| \cdot n^{(L^k-1)},
\end{align*}
where we are using the fact that $|f(z_1,z_2,\dots,z_k)| \leq n$. This proves the lemma, since we may choose $a_1,a_2,\dots,a_k$ such that $z_j^\star e^{\pi ia_j /L} = \exp(i\theta_j)$ for $|\theta_j| \leq \pi/L$ for all $j=1,2,\dots,k$. 
\end{proof}

\begin{proof}[Proof of Lemma \ref{lem:distincttensors}]
For $k$ complex numbers $w_1,w_2,\dots,w_k \in \mathbb{C}$, observe that
\begin{align*}
\mathbb{E}\left[\sum_{i_1,i_2,\dots,i_k=0}^{n^{1/k}-1}\tilde{T}_{i_1,i_2,\dots,i_k}w_1^{i_1}w_2^{i_2}\dots w_k^{i_k}\right]  
&= q^k\sum_{i_1,i_2,\dots,i_k}w_1^{i_1}w_2^{i_2}\dots w_k^{i_k}  \sum_{j_1 \ge i_1,j_2 \ge i_2,\dots,j_k \ge i_k} X_{j_1,j_2,\dots,j_k} \prod_{t=1}^{k} {j_t \choose i_t}p^{j_t-i_t}q^{i_t} \\
& = q^k\sum_{j_1,j_2,\dots,j_k}^{n^{1/k}-1} X_{j_1,j_2,\dots,j_k} \prod_{t=1}^{k}(qw_t + p)^{j_t}
\end{align*}
Thus, for two tensors $T_1,T_2$, we have
\begin{align*}
\frac{1}{q^k} \mathbb{E}\Big[\sum_{i_1,i_2,\dots,i_k=0}^{n^{\frac{1}{k}}-1}(\tilde{T_1}_{i_1,i_2,\dots,i_k} -\tilde{T_2}_{i_1,i_2,\dots,i_k})w_1^{i_1}w_2^{i_2}\dots w_k^{i_k}\Big] 
&  = \sum_{j_1,j_2,\dots,j_k=0}^{n^{1/k}-1} \Big({T_1}_{j_1,j_2,\dots,j_k}-{T_1}_{j_1,j_2,\dots,j_k}\Big)  \prod_{t=1}^{k}(qw_t + p)^{j_t}\\
& \triangleq A(z_1,z_2,\dots,z_k)
\end{align*}
where we are rebinding $z_t = qw_t+p$ for all $t=1,2,\dots,k$. Observe that
$A(z_1,z_2,\dots,z_k)$ is a \emph{multivariate Littlewood polynomial}; all
coefficients are in $\{-1,0,1\}$, and the degree is $n^{1/k}$ {\color{black}in each variable}.

Again, for $z_1,z_2,\dots,z_k \in \gamma_L \equiv \{e^{i\theta}: |\theta| \leq \pi/L\}$ we can use Lemma \ref{lem:multivariate_littlewood} and the fact that
\begin{align*}
\forall z \in \gamma_L: |(z-p)/q| \leq \exp(C_2p/(Lq)^2).
\end{align*}
 to sandwich $|A(z_1,z_2,z_3,\dots,z_k)|$ by
\begin{align*}
\exp(-C_1L^k\log n) &\leq \max_{z_1,z_2,\dots,z_k \in \gamma_L} |A(z_1,z_2,\dots,z_k)| \\
& \leq \frac{\exp(Ckpn^{1/k}/(Lq)^2)}{q^k} \sum_{ij}\left|\EE[\tilde{T_1}_{i_1,i_2,i_k} - \tilde{T_2}_{i_1,i_2,\dots,i_k}]\right|  \ . 
\end{align*}
This implies that there exists $i_1,i_2,\dots,i_k$ such that
\begin{align*}
\left|\EE[\tilde{T_1}_{i_1,i_2,\dots,i_k} - \tilde{T_2}_{i_1,i_2,\dots,i_k}]\right| 
&\geq \frac{q^k}{n} \exp\left(-C_1L^k\log n - \frac{Ckpn^{1/k}}{L^2q^2}\right) \\
& \geq \frac{q^k}{n} \exp\left(-C n^{1/(k+2)} \Big( \frac{kp \log^{2/k} n}{q^2} \Big)^{k/(k+2)}\right),
\end{align*}
where the second inequality follows by optimizing for $L$. 
\end{proof}

\begin{proof}[Proof of Theorem \ref{thm:randomtensors}]
We will use the oracle described in Section \ref{sec:rand} again. Recall that the oracle was able to distinguish between the following two cases 
\begin{description}
\item[Case 1:]~ $t$ and $t'$ are traces generated by the deletion channel with preservation probability $q=1-p$ from the same random string $x\in_R \{0,1\}^{n^{1/k}}$
\item[Case 2:]~ $t$ and $t'$ are traces generated by the deletion channel with preservation probability $q=1-p$ from independent random strings $x,y\in_R \{0,1\}^{n^{1/k}}$
\end{description}
with failure probability at most $1/n^{20/k}$. 

Notice that the probability of a particular bit in $T$ getting deleted is $1-q^k$. In that case, with $m=2\log n/q^k$ traces we can ensure that every bit of $X$ appears in at least one of the tensor traces with probability at least $1-\frac{1}{n}$. Suppose we fix $k-1$ dimensions and without loss of generality suppose we fix the value of the $r^{th}$ dimension of $T$ to be $i_r$ for all $r\neq 1$. In that case the elements $\{T_{j,i_2,i_3,\dots,i_k}\}_{j=1}^{n^{1/k}}$ form a binary vector of length $\{0,1\}^{n^{1/k}}$. There are $n^{(k-1)/k}$ such binary vectors corresponding to the $n^{(k-1)/k}$ different values of $i_2,i_3,\dots,i_k$ and we will denote the set of traces from the $l^{th}$ such binary vector by $G_l$. Notice that there exists a natural ordering among these groups $\{G_l\}_{l=1}^{n^{(k-1)/k}}$. For two distinct groups $G_l,G_{l'}$, where $l,l'$ is defined by $(i_2,i_3,\dots,i_k)$ and $(j_2,j_3,\dots,j_k)$ respectively,  we will have $l<l'$ if and only if there exists a value $r\le k$ such that 
\begin{align*}
i_r<j_r \quad \text{ and } \quad i_t \le j_t \quad \forall t\in \{r+1,r+2,\dots,k\}.
\end{align*} 
Moreover, when we observe a tensor trace after fixing all the dimensions, except the first one, we actually observe the vector traces of one of those $n^{(k-1)/k}$ binary vectors. Suppose for every tensor trace, we do this process and collect all the vector traces by fixing every dimension except the first one. We can now use our oracle to group all these vector traces according to the original binary vector they emanated from i.e two vector traces belong to the same group if both of them belong to $G_l$ for some value of $l \in [n^{(k-1)/k}]$. This requires at most $\binom{m n^{1/k}}{2}\leq (mn^{1/k})^2$ applications of the oracle and so, by the union bound, this can performed with failure probability  at most $$\frac{(mn^{1/k})^2}{n^{20/k}} \leq \frac{2^{2k+2}\log^2 n}{n^{18/k}}$$ where the inequality applies for sufficiently large $n$. We next infer the ordering among the $n^{(k-1)/k}$ groups  $\{G_l\}_{l=1}^{n^{(k-1)/k}}$. For two distinct $l,l' \in  [n^{(k-1)/k}]$, where $l,l'$ is defined by $(i_2,i_3,\dots,i_k)$ and $(j_1,j_2,\dots,j_k)$ respectively, suppose there exists a tensor trace having at least one vector trace from both $G_l$ and $G_{l'}$. Moreover suppose the position of the vector trace from $G_l$ is given by $(\tilde{i}_2,\tilde{i}_3,\dots,\tilde{i}_k)$ and the position of the vector trace from $G_{l'}$ is given by $(\tilde{j}_2,\tilde{j}_3,\dots,\tilde{j}_k)$. In that case, we will infer that $l<l'$ if there exists an $r\le k$ such that 
\begin{align*}
\tilde{i}_r< \tilde{j}_r \quad \text{ and } \quad \tilde{i}_t \le \tilde{j}_t \quad \forall t\in \{r+1,r+2,\dots,k\}.
\end{align*} 
and infer $l>l'$ otherwise. 
The probability there exists such a trace is $1-(1-q^2)^m\geq 1-1/\poly(n)$. 
We also perform an analogous process with every such dimension. After all dimensions have been processed, we know exactly the elements along each dimension that has been deleted to form each tensor trace, which subsequently reveals the original position of each received bit in each tensor trace. Given that every bit of $X$ appeared in at least some trace, this suffices to reconstruct $X$, proving the main theorem.
\end{proof}

\end{document}